\newtheorem{theorem}{Theorem}[section]
\newtheorem{definition}{Definition}[section]
\newtheorem{remark}{Remark}[section]
\newtheorem{example}{Example}[section]
\newtheorem{problem}{Problem}[section]
\newcommand{\veps}{\varepsilon}
\newcommand{\sr}{\stackrel}
\newcommand{\rar}{\rightarrow}
\newcommand{\tri}{\sr{\triangle}{=}}
\newcommand{\be}{\begin{equation}}
\newcommand{\ee}{\end{equation}}
\newcommand{\bea}{\begin{eqnarray}}
\newcommand{\eea}{\end{eqnarray}}
\newcommand{\bes}{\begin{eqnarray*}}
\newcommand{\ees}{\end{eqnarray*}}
\newcommand{\bi}{\begin{itemize}}
\newcommand{\ei}{\end{itemize}}
\newcommand{\ben}{\begin{enumerate}}
\newcommand{\een}{\end{enumerate}}
\newcommand{\bp}{\begin{problem}}
\newcommand{\ep}{\end{problem}}
\newcommand{\hso}{\hspace{.1in}}
\newcommand{\hst}{\hspace{.2in}}
\newcommand{\noi}{\noindent}
\newcommand{\bc}{\begin{center}}
\newcommand{\ec}{\end{center}}
\newcommand{\quotes}[1]{``#1''}
\title{\LARGE \bf
General  Decentralized Stochastic  Optimal Control via Change  of Measure:   Applications to the Witsenhausen Counterexample}
\author{Bhagyashri Telsang$^1$, Seddik Djouadi$^1$, Charalambos D. Charalambous$^2$  
\thanks{$^{1}$Bhagyashri Telsang and Seddik Djouadi are  with the Faculty of Electrical Engineering and Computer Science, University of Tennessee, Knoxville, TN, 37996, USA 
{\tt\small \{btelsang, mdjouadi\}@utk.edu}}%
\thanks{$^{2}$Charalambos D. Charalambous is  with the Faculty of Electrical and Computer Engineering, University of Cyprus, Nicosia 1678, Cyprus
{\tt\small chadcha@ucy.ac.cy}}%
}
\begin{document}

\maketitle
\thispagestyle{empty}
\pagestyle{empty}

\begin{abstract}
In this paper we present  global and person-by-person (PbP) optimality  conditions  for general decentralized stochastic dynamic optimal control problems, using a  discrete-time version of Girsanov's change of measure. The PbP   optimality conditions are applied to the  Witsenhausen  counterexample to show that the two strategies satisfy two coupled nonlinear integral equations. Further, we prove  a fixed point theorem  in a function space, establishing existence and uniqueness of solutions to the integral equations.  We also provide  numerical solutions of the two  integral equations  using the Gauss Hermite Quadrature scheme, and  include a detail comparison to other numerical methods of the literature.  The numerical solutions confirm Witsehausen's observation  that, for certain choices of parameters, linear or affine strategies are optimal, while for other choices of parameters nonlinear strategies outperformed affine strategies.


\end{abstract}

\section{INTRODUCTION}
Witsenhausen in the 1971 paper \cite{witsenhausen1971}   introduced  a general  mathematical model for  decentralized stochastic dynamic optimal control problems operating over a finite discrete-time horizon $T_+^n\tri \{1,2,\ldots, n\}$, which  is used  to this date to model many features of   communication and  queuing networks,   networked control  systems applications etc.  The model consists of   multiple  observation posts collecting information at each time step, specified by $M$ measurements,   $\big\{y_t^m\big|t \in T_+^n\big\}, m=1, \ldots, M$,  multiple controls applied by $K$  stations, $\big\{u_t^k\big|t \in T_+^n\big\}, k=1, \ldots, K$, a Markov controlled state process, $\big\{x_t\big|t \in T_+^n\big\}$, and a payoff to be optimized by the strategies of the  controls.  At each time $t \in T_+^n$  control  actions are generated by strategies $\gamma^k(\cdot)$, i.e.,  $u_t^k=\gamma_t^k(I_t^k)$, where the information pattern $I_t^k$ is a  causal subset of all observations and all control actions,  for $k=1, \ldots, K$. 
 The derivation of optimality conditions for  Witsenhausen's  \cite{witsenhausen1971} general discrete-time model remains to this date open and challenging. 

The hardness of these optimization problems 
 is attributed to the {\it information pattern} of the  controls. Unlike 
classical  stochastic optimal control problems,  i.e.,   \cite{elliott-aggoun-moore1995,bensoussan1981,bensoussan1992a,elliott-yang1991,charalambous-elliott1997}, for decentralized stochastic dynamic optimal control problems,
\\
{\it i) the strategies do not have access to  the same causal  information pattern at each time instant,    and}\\
{\it ii) the  strategies may not have {\it perfect recall of the information pattern or structure}, i.e.,  any  information pattern  which is  accessible by any of the strategies  at any time $t$ may not be accessible  at all future  times
$\tau \geq t$. }


A simple and revealing example is 
Witsenhausen's  1968  \cite{WitsenhausenOriginal}, two-stage ``counterexample of stochastic optimal control",  shown in Fig.~\ref{fig:witsenhausenblkdia}. 
At the first stage the strategy of the control generates actions $u_1=\gamma_1(y_0)$ by observing the initial state   $y_0=x_0$,  while at the second stage the strategy  of the control generates actions  $u_2=\gamma_2(y_1)$ by observing $y_1$  but not $y_0$. Since strategy $\gamma_2$ does not observe both $(y_0, y_1)$, 
classical  stochastic optimal control methods, such as, dynamic programming,  do not  apply to the counterexample. Furthermore, since $y_1$ depends on strategy $\gamma_1$ via $x_1$, standard calculus of variations methods are not easily  applicable. 

Because of its significance, since it was introduced many   researchers have given a great deal of attention to the counterexample and variations of it
\cite{Papadimitriou-Tsitsiklis:1985,hierarchialLee,neuralnetworksolution,iterativecodingWit,BasarVariations,McEany1BadPaper,WuVerduTransport,McEany2,Submarmanian_newresults}. 
\noindent Past  literature  is mostly   focussed on   numerical searches of  the optimal payoff \cite{hierarchialLee,
neuralnetworksolution,iterativecodingWit,BasarVariations,McEany1BadPaper,WuVerduTransport,McEany2,
Submarmanian_newresults},  often  making  use of  properties derived by Witsenhausen  \cite[Theorem~1, Theorem~2,  Lemma~9, Lemma~11,  etc.]{WitsenhausenOriginal},  to reduce the computation burden. Another  extensively used property  is  
 \cite[Lemma~3.3.(c)]{WitsenhausenOriginal}, which states:  for fixed ${\gamma}_1(\cdot)$ the optimal strategy  $\gamma_2^o(\cdot)$  is the conditional mean (see Fig.~\ref{fig:witsenhausenblkdia}),  $u_2^o=\gamma_2^o(y_1)= {\bf  E}^{\gamma_1, \gamma_2^o} \big\{x_0+ {\gamma}_1(x_0)  \big| y_1 \big\}$. 

\begin{figure}
    \centering
    \includegraphics[width=\columnwidth]{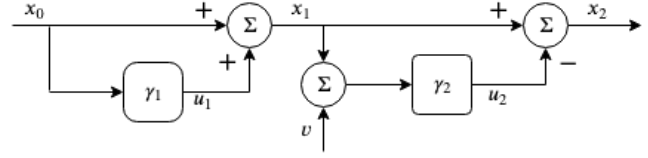}
    \caption{Witsenhausen's counterexample: $x_0: \Omega \rar {\mathbb R}$ is the initial state random variable (RV), $v: \Omega \rar {\mathbb R}$ is a  noise RV, $x_0$ and $v$ are independent, and   $x_1=x_0+\gamma_1(y_0)$, $x_2=x_1-\gamma_2(y_1)$, $y_0=x_0, y_1=x_1+v$. The objective is to minimize over $(\gamma_1, \gamma_2)$ the payoff,
$J(\gamma_1,\gamma_2) \tri   {\bf  E}^{\gamma_1, \gamma_2} \Big\{ k^2 (\gamma_1(y_0))^2 + (x_2)^2 \Big\}, k^2>0$. The counterexample is called linear-quadratic-Gaussian (LQG) if $x_0\in G(0, \sigma_x^2)$ and $v\in G(0, \sigma^2)$, where $G(\alpha, \beta^2)$ denotes a  Gaussian distribution with mean $\alpha$ and variance $\beta^2>0$.    }
    \label{fig:witsenhausenblkdia}
\end{figure}

\indent In  the  early 1970's,  it is recognized   that concepts from {\it static team theory},    developed by  Marschak and Radner  \cite{marschak1955,radner1962,marschak-radner1972} (see also \cite{krainak-speyer-marcus1982a,krainak-speyer-marcus1982b}), called {\it person-by-person (PbP) optimality}, {\it global or team optimality,  and their  relation},   should play a fundamental role in developing analogous  optimality conditions  for  decentralized stochastic dynamic optimal control problems. Although, 
static team theory is successfully applied to  decentralized stochastic dynamic optimal control problems with   one-step delayed sharing information patterns in \cite{kurtaran-sivan1973,yoshikawa1975,sandell-athans1974,varaiya-walrand1978,aicardi-davoli-minciardi1987,bansal-basar1987}, its  generalization to arbitrary$-$step delayed sharing patterns remains a challenge, especially due to the counterexample of Varaiya and Walrand \cite{varaiya-walrand1978}. Often, alternative methods are considered  under  simplified assumptions 
\cite{waal-vanschuppen2000,bamieh-voulgaris2005,raoufat-djouadi:2018,lessard-lall2011,vanschuppen2011,vanschuppen2012,nayyar-mahajan-teneketzis2013,nedic:2017,ho-chu1972,ho-chu1973,tobias2024}. 
The main limitations of   static team theory are attributed to the assumptions that,  \\
{\it iii) the information pattern  available to  the strategies of the decision makers    are not affected by any of the strategies, and\\
 iv)  there are no state dynamics or the dynamics are not affected by the strategies of the decision makers.  }\\
 To overcome  limitation iii),   Witsenhusen in the  1988 paper   \cite{witsenhausen1988}, considered   a class of problems without  state  dynamics,  with   countable  control and observation spaces. \cite{witsenhausen1988}  
 proved using the so-called  {\it common denominator condition}, that there  exist   a probability measure,  such that under this measure  the observations become an independent process \cite[Section~4]{witsenhausen1988},  and then applied a {\it change of variables} \cite[Section~5]{witsenhausen1988},  to obtain   an equivalent  problem, called {\it the static reduction problem} \cite[Section~6]{witsenhausen1988}. 
Although, very powerful  the static reduction  approach remained unexplored for several years, because it requires  the construction of the common denominator condition, which was not provided   in \cite{witsenhausen1988}.
 
Recently,   \cite{Bambos_equivalence} considered  decentralized stochastic dynamical optimal control problems in state space, and  proved that  the common denominator condition is precisely the   discrete-time version of Girsanov's change of probability measure. \cite{Bambos_equivalence}  constructed using a Radon-Nikodym derivative \cite{liptser-shiryayev1977},  an   equivalent  references measure, such that under this measure,   the  state process  and observation processes are mutually  independent. 
  The 
{\it  change of variables} is equivalent to Bayes' theorem.
Girsanov's  change of measure is applied in  \cite{charalambous-ahmed:IEEEAC2017a,charalambous-ahmed:IEEEAC2018,charalambous:MCSS2016}  to  developed  global and PbP optimality conditions, which  generalize  Radner's theorem of  stationary conditions \cite{radner1962},   to 
 controlled stochastic differential equations (SDEs) with multiple controls having different information patterns. 
   The optimality conditions are  expressed in terms  of  a ``Hamiltonian Systems" consisting of  backward and forward  SDEs. The   optimal  strategies are  determined by  a conditional variational Hamiltonian, conditioned on its  information structure.
\\ Girsanov's change of measure  is also  constructed  in   \cite{Bambos_equivalence}, for   discrete-time decentralized stochastic dynamic optimal control  problems described by  nonlinear state space models (generalizing and completing the construction of  \cite{witsenhausen1988}). 

\subsection{Main Contributions of the Paper}
The  main contribution  is twofold.    
\begin{enumerate}
\item  Derivation of  PbP and global optimality conditions using  Girsanov's change of measure,  for general  discrete-time decentralized stochastic optimal control problems described by arbitrary conditional distributions. 

\item Application of the PbP  optimality conditions to  the counterexample to determine   the  optimal strategies $\gamma^o(\cdot) \tri (\overline{\gamma}_1^o(\cdot), \gamma_2^o(\cdot))$, $\overline{\gamma}_1(x_0)=x_0+\gamma_1(x_0)$,  when $v \in G(0, \sigma^2)$ and $x_0$ has arbitrary   distribution  ${\bf P}_{x_0}$. These are  given by  the conditional expectations, as follows.
\begin{align}
&\overline{\gamma}_1^o(x_0) =x_0 - \frac{1}{k^2}{\bf E}^{\gamma^o}\Big\{ \overline{\gamma}_1^o(x_0) - \gamma_2^o(y_1) \Big|x_0 \Big\} \label{nl_1} \\
& -\frac{1}{2k^2\sigma^2} {\bf E}^{\gamma^o}\Big\{ \big(y_1-\overline{\gamma}_1^o(x_0)\big) \big(\overline{\gamma}_1^o(x_0)  - \gamma_2^o(y_1)\big)^2 \Big|x_0 \Big\}, \nonumber    \\
&\gamma_2^o(y_1)=   {\bf  E}^{\gamma^o} \Big\{ \overline{\gamma}_1^o(x_0)    \Big| y_1 \Big\}, \label{nl_2} \\\
&y_1=\overline{\gamma}_1^o(x_0)+v, \hso  \overline{\gamma}_1(x_0)=x_0+\gamma_1(x_0). 
\end{align}
Equivalently,  $(\overline{\gamma}_1^o(\cdot), \gamma_2^o(\cdot))$ satisfy the  two nonlinear integral equations, 
\begin{align}
  &  \overline{\gamma}_1^o(x_0) = x_0 - \frac{1}{k^2} \int_{-\infty}^\infty \Big\{\frac{ \big(\zeta - \overline{\gamma}_1^o(x_0)\big) \big(\overline{\gamma}_1^o(x_0) - \gamma_2^o(\zeta)\big)^2}{2\sigma^2} \nonumber \\ 
 &     + \big(\overline{\gamma}_1^o(x_0) - \gamma_2^o(\zeta)\big)\Big\} \frac{ \exp({-\frac{(\zeta-\overline{\gamma}_1^o(x_0))^2}{2\sigma^2}})}{ \sqrt{2\pi \sigma^2}} d\zeta, 
    \label{eq:gamma1bar_in}\\
 &   \gamma_2^o(y_1) = \frac{\int_{-\infty}^{\infty} \overline{\gamma}_1^o(\xi) \exp{(-\frac{(y_1-\overline{\gamma}_1^o(\xi))^2}{2\sigma^2})}  {\bf P}_{x_0}(d\xi)}{\int_{-\infty}^{\infty} \exp{(-\frac{(y_1-\overline{\gamma}_1^o(\xi))^2}{2\sigma^2})} {\bf P}_{x_0}(d\xi)}.
   \label{eq:gamma2_in}
\end{align}
\end{enumerate}

\noindent In addition,  we   provide the  following. 

2.1) A fixed point theorem  of the  two nonlinear integral equations    in a function space, establishing existence and uniqueness of  solutions  $(\overline{\gamma}_1^o(\cdot), \gamma_2^o(\cdot))$.

2.2) Evaluation of the optimal strategies $(\overline{\gamma}_1^o(\cdot), \gamma_2^o(\cdot))$ by solving numerically the  two integral equations for different parameters and comparison   to  numerical  evaluations of the  payoff  found in the literature.

\noi The numerical evaluation of the two optimal strategies verifies    the properties of optimal strategies derived in \cite{WitsenhausenOriginal}. It is observed that for some choices of the problem parameters,  linear or affine strategies are indeed optimal\footnote{This observation is consistent with  \cite{WitsenhausenOriginal}, because Theorem~2 in \cite{WitsenhausenOriginal} states that nonlinear strategies outperform affine strategies for certain choices of the problem parameters, and not for all possible choices of parameters.}.

\subsection{Organization}
\label{subsec:organization}
In Section~\ref{discrete} we treat  general  discrete-time decentralized stochastic optimal control problems,  using  Girsanov's change of measure, and we derive global and PbP optimality conditions. 
 In Section \ref{sec::optimalstrategy}, we present  the derivation of the optimal strategies of the counterexample  \eqref{eq:gamma1bar_in},\eqref{eq:gamma2_in}, and the fixed point theorem.  In Section \ref{sect:num-eva} we present the  numerical integration of  \eqref{eq:gamma1bar_in}, \eqref{eq:gamma2_in} for  different sets of parameter values and we compare our optimal payoff to other studies of the literature. 

\subsection{Notation}
\label{sec:nota}
$\mathbb{R} \tri (-\infty,\infty)$, $\mathbb{Z}_+ \tri \{1,2, \ldots\}$, {$\mathbb{Z}_+^n \tri \{1,2, \ldots,n\}$,}  $n \in {\mathbb Z}_+$. 
  Given a set of elements $s^{(K)} \tri \{s^1, s^2, \ldots, s^K\}$, we  define   $s^{-k} \tri s^{(K)} \setminus\{s^k\}$,  the set $s^{(K)}$ minus element $\{s^k\}$.  
 
 $\{({\mathbb X}_t,{\cal B}({\mathbb X }_t))\big| t\in\mathbb{Z}_+^n\}$ denotes  measurable spaces, where ${\mathbb X}_t$ is  confined to complete separable metric space or Polish space, and ${\cal B}({\mathbb X}_t)$ is the Borel $\sigma-$algebras of subsets of ${\mathbb X}_t,  \forall t\in\mathbb{Z}_+^n$. Points in the product space ${\mathbb X}_{1,n}\triangleq{{\prod}_{t\in\mathbb{Z}_+^n}}{\mathbb X}_t$ are denoted by $x_{1,n}\triangleq(x_1,\ldots, x_n)\in{\mathbb X}_{1,n}$, and their restrictions for any $(m,n)\in\mathbb{Z}_+ \times \mathbb{Z}_+$ by $x_{m,n}\triangleq(x_{m},\ldots, x_n)\in{\mathbb X}_{m,n}, n\geq{m}$. Hence,  ${\cal B}({\mathbb X}_{1,n})\triangleq\otimes_{t\in\mathbb{Z}_+^n}{\cal B}({\mathbb X}_t)$ denotes  the $\sigma-$algebra on ${\mathbb X}_{1,n}$ generated by cylinder sets $\{(x_1,\ldots, x_n)\in{\mathbb X}_{1,n}\big|x_j\in{A}_j, A_j\in{\cal B}({\mathbb X}_j),~j\in\mathbb{Z}_+^n\}$. 
 
Given a measurable  space $\big(\Omega, {\cal F}\big)$, we denote the set of probability measures (PMs) ${\mathbb P}$ on $\Omega$  by ${\cal M}(\Omega)$.  Given a sequence of  RVs indexed by subscript $t$,   $x_t:  (\Omega, {\cal F}) \rar  ({\mathbb X}_t, {\cal B}({\mathbb X}_t)), \forall t \in {\mathbb Z}_+^n$,  we denote  by
${\mathbb P}\big\{x_1 \in d\eta_1, \ldots, x_n \in d\eta_n\big\}={\bf P}_{x_{1,n}}(d\eta_{1,n})\equiv {\bf P}(d\eta_{1,n})$ the PM induced by $x_{1,n}$ on $({\mathbb X}_{1,n}, {\cal B}({\mathbb X}_{1,n}))$ (i.e., probability distribution (PD) if ${\mathbb X}_t= {\mathbb R}^k$). Given another  RV, $y: (\Omega, {\cal F}) \rar  ({\mathbb Y}, {\cal B}({\mathbb Y}))$ we define    the conditional PM  of the  RV $y$ conditioned on   $x$ by $
{\bf P}_{y|x}(d\xi| x)\tri {\mathbb P}\big\{y \in d\xi\big|x\big\}
  \equiv {\bf P}(d\xi|x)$, 
where $x$ is replaced by $\eta$ if    the RV $x$ is  fixed, i.e.,   $x=\eta$. The joint PM  of $(x,y)$ is
${\bf P}_{x,y}(d\eta,d\xi)={\bf P}_{y|x}(d\xi|\eta) {\bf P}_x(d\eta)$.


%
%
%
%

\section{Optimality Conditions for Decentralized Stochastic Optimal Control  Problems}
\label{discrete}
In this section we invoke  a version of Girsanov's change of measure to derive optimality conditions for  general   discrete-time decentralized stochastic optimal control problems (that include  \cite{witsenhausen1971}). To ensure our method applies to processes with values in Euclidean spaces, finite state spaces, etc., the model is described  by arbitrary   conditional PMs.

As in   \cite{witsenhausen1971},  
we consider  ${\mathbb Z}_+^M\tri \{1,2, \ldots, M\}$ observation posts collecting information at each  $t \in T_+^n$ and  ${\mathbb Z}_+^K\tri \{1,2, \ldots, K\}$ control stations applying  control actions at each   $t \in T_+^n$.
The decentralized stochastic control problem  is described 
by  the following elements. 

 1) The unobservable state process,   $x_{1,n} \tri (x_1, x_2,  \ldots, x_n)$, $x_t \in {\mathbb X}_t , \forall t \in T_+^{n}\tri \{1, \ldots, n\}$.

2) The   observation processes  at the observation posts,  $y_{1,n}^{m}\tri (y_{1}^{m}, \ldots, y_{n}^{m})$,    $y_t^m\in {\mathbb Y}_t^m$, $\forall t \in T_+^{n}$,   $\forall m \in {\mathbb Z}_+^M$. For the $M-$tuples we use  the superscript notation   $y_t^{(M)} \tri \big(y_t^1, \ldots, y_t^M\big),  \forall t \in T_+^{n}$ and $y_{1,n}^{(M)}\tri (y_{1,n}^{1}, \ldots, y_{1,n}^{M})$.

3)  The control actions applied at the control stations, $u_{1,n}^k \tri (
u_1^k, u_2^k,  \ldots, u_{n}^k)$, $u_t^k\in  {\mathbb A}_t^k$, $\forall t \in T_+^{n}$, $\forall k \in {\mathbb Z}_+^K$.  $u_t^{(K)}\tri \big(u_t^1, \ldots, u_t^K\big)$, $\forall t \in T_+^{n}$ and $u_{1,n}^{(K)}\tri (u_{1,n}^{1}, \ldots, u_{1,n}^{K})$.   

4)   The conditional  probability measure  (PM) of $x_{t+1}$ conditioned on $(x_{1,t}, y_{1,t}^{(M)},  u_{1,t}^{(K)})$, satisfies 
\begin{align}
& {\mathbb P}\Big\{x_{t+1} \in A_{t+1} \big| x_{1,t},y_{1,t}^{(M)},  u_{1,t}^{(K)}\Big\}=
 {\bf P}_{x_{t+1}|x_{t}, u_{t}^{(K)}}(A_{t+1})\nonumber \\
 &=S_{t+1}(A_{t+1}|x_t, u_{t}^{(K)}), \;       A_{t+1} \in {\cal B}({\mathbb X}_{t+1}), \;   \forall t . \label{i-d1}
 \end{align}

 5) The conditional PM of $y_{t}^m$ conditioned on $(x_{1,t},y_{1,t-1}^{(M)}, y_t^{-m}, u_{1,t}^{(K)})$,   satisfies 
\begin{align}
 &{\mathbb P} \Big\{y_{t}^m \in B_t^m \big| x_{1,t},y_{1,t-1}^{(M)},y_t^{-m},  u_{1,t}^{(K)}\Big\}=
 {\bf P}_{y_{t}^m|x_{t}, u_{t}^{(K)}}(B_t^m)\nonumber \\
 &=Q_{t}^m( B_t^m|x_t, u_{t}^{(K)}), \hso     B_t^m \in {\cal B}({\mathbb Y}_{t}^m), \;   \forall  t,   \; \forall m. \label{i-d2}
 \end{align}
From  (\ref{i-d2})   we also have, 
\begin{align}
 &{\mathbb P} \Big\{y_{t}^{(M)} \in B_t^{(M)} \big| x_{1,t},y_{1,t-1}^{(M)},  u_{1,t}^{(K)}\Big\}, \;    B_t^{(M)} \in {\cal B}({\mathbb Y}_{t}^{(M)})\nonumber \\
 &= Q_{t}^{(M)}(B_t^{(M)} |x_t, u_{t}^{(K)}),  \;   \forall  t \label{i-d2-j-1}\\
 &= \prod_{m=1}^M Q_{t}^m(B_t^m |x_t, u_{t}^{(K)}), \;    B_t^m \in {\cal B}({\mathbb Y}_{t}^m). \label{i-d2-j-2}
 \end{align}

6) The  information  patterns and strategies,  which are used to generate the  control actions $u_t^k, \forall t \in T_+^n, \forall k \in {\mathbb Z}_+^K$,  are defined 
as follows. \\
6.1) {\it The Information Patterns} of control  are  specified by two projection operators. \\
i) For each  $(k, t)$, the  projection  of all observations $y_{1,t-1}^{(M)}$ to any of its subset defined by 
\begin{align}
&{\bf \Pi}_t^k (y_{1,t-1}^{(M)})\tri 
\Big\{y_\tau^\mu \big| \tau \subseteq  \{1, \ldots, t-1\},  \\
& \mu \in \kappa^k(t)  \subseteq \{1, \ldots, M\} \Big\}, \forall k \in \{1, \ldots, K\}, \:  \forall t \in \{2, \ldots, n\}.\nonumber 
\end{align}
ii) For each  $(k, t)$, the  projection  of all controls $u_{1,t-1}^{(K)}$ to any of its subset defined by 
\begin{align}
&{\bf \Pi}_t^k (u_{1,t-1}^{(K)})\tri 
\Big\{u_\tau^\kappa \big| \tau \subseteq  \{1, \ldots, t-1\},  \\
& \kappa \in \iota^k(t)  \subseteq \{1, \ldots, K\} \Big\}, \forall k \in \{1, \ldots, K\}, \:  \forall t \in \{2, \ldots, n\}.\nonumber 
\end{align}
The information pattern of  each  control  station $k \in \{1, \ldots, K\}$  at each time $t \in \{1, \ldots, n\}$ is 
\begin{align}
I_t^k \tri & {\bf \Pi}_t^k (y_{1,t-1}^{(M)})\bigcup {\bf \Pi}_t^k (u_{1,t-1}^{(K)}), \hst  I_1^k \tri\emptyset, \\
& \forall k \in \{1, \ldots, K\}, \:  \forall t \in \{2, \ldots, n\}.  \nonumber 
\end{align}
{\it  6.2) Control Strategies} used by  the controls to generate actions    are  Borel measurable  maps $\gamma_t^k(\cdot)$, 
\begin{align}
u_t^k =
 \gamma_t^k(I_t^k), \hso \;   \forall k\in \{1, \ldots, K\}, \;     \forall t\in \{1, \ldots, n\}.   \label{i21}
\end{align}
For  each $k$, such  strategies  are denoted by ${\cal  U}_{1,n}^k$  with  notation,
\begin{align} 
 \gamma_{1,n}^k (\cdot)\tri & (\gamma_1^k(\cdot), \ldots, \gamma_{n}^k(\cdot)) \in {\cal  U}_{1,n}^k \tri \times_{t=1}^{n} {\cal  U}_t^k, \nonumber \\
 \gamma_{1,n}^{(K)}(\cdot) \tri& (\gamma_{1,n}^1(\cdot), \ldots, \gamma_{1,n}^{K}(\cdot)) \in {\cal  U}_{1,n}^{(K)} \tri \times_{k=1}^K {\cal  U}_{1,n}^k. \nonumber
\end{align} 


{\it  The Joint Probability Measure (PM).} For each $n$, we introduce the space ${\mathbb G}^{n}$ of admissible histories,   $
 {\mathbb   G}^n\tri  (\mathbb{A}_{1,n}^{(K)} \times {\mathbb X}_{1,n} \times \mathbb{Y}_{1,n}^{(M)}), \forall n \in {\mathbb Z}_+^{n}
$.
 We equip the space 
  ${\mathbb  G}^n$  with the natural $\sigma$-algebra 
  ${\cal B}({\mathbb  G}^n), \forall n \in {\mathbb Z}_+^{n}$. 
Then we define the joint PM.  ${\bf P}_{y_{1,n}^{(M)}, u_{1,n}^{(K)}, x_{1,n}}$ of $(y_{1,n}^{(M)}, u_{1,n}^{(K)}, x_{1,n})$  on  the  canonical space $\big({\mathbb   G}^n, {\cal  B}({\mathbb  G}^n)\big)$, and  we construct a probability space $\big(\Omega, {\cal F}, {\mathbb P}^{u}\big)$ carrying the RVs $(y_{1,n}^{(M)}, u_{1,n}^{(K)}, x_{1,n})$, as follows.  For Borel sets,  $A_{1,n}\tri \times_{k=1}^n A_k,  A_k \in{\cal B}({\mathbb X}_k) $,  and similarly for Borel sets  $(B_{1,n}^{(M)},  C_{1,n}^{(K)})$, then 
\begin{align}
&{\mathbb P}^{u}\Big\{x_{1,n} \in A_{1,n} , y_{1,n}^{(M)}\in B_{1,n}^{(M)}, u_{1,n}^{(K)}\in C_{1,n}^{(K)} \Big\}\\
&=Q_n^{(M)}(B_{n}^{(M)} \big|x_n, u_n^{(K)})  {I}_{\{u_n^{(K)} \in C_{n}^{(K)}\}}S_n(A_n   \big| x_{n-1}, u_{n-1}^{(K)})     \nonumber \\
&  \ldots   Q_1^{(M)}(B_{1}^{(M)} \big|x_1, u_1^{(K)})     I_{\{u_1^{(K)}\in C_1^{(K)}\}}   S_1(A_1) \nonumber \\
& \mbox{such that (\ref{i-d1})-(\ref{i-d2-j-2}) hold  } \label{jd-o}
\end{align}
where $C_t^{(K)} \in {\cal B}({\mathbb A}_t^{(K)})$ and  $I_{\{u_t^{(K)}\in C_t^{(K)}\}}=1$ if $u_1^{(K)}\in C_t^{(K)}$ and zero  otherwise.

{\it The Average  Payoff or Cost Function.} The average  payoff is given by the expression,  
\begin{align}
J^{{\mathbb P}^u} (u^{(K)}) \tri {\mathbb E}^{{\mathbb P}^{u}} \Big\{ \sum_{t=1}^{n-1} \ell(t, x_t, u_t^{(K)})+ \kappa(n, x_n) \Big\}   \label{i8-cost_a}
\end{align}
where $\ell(\cdot),  \kappa(\cdot)$ are measurable functions. 

We wish to characterize   decentralized team/global and person-by-person  (PbP)  optimality, based on Definition~\ref{def-team}.

 \begin{definition}(Decentralized Global  and PbP Optimality) \\
 \label{def-team}
(1)  Decentralized Global Optimality. The  $K-$tuple of strategies   $\gamma_{1,n}^{(K),o}\tri   (\gamma_{1,n}^{1,o}, \gamma_{1,n}^{2,o}, \ldots, \gamma_{1,n}^{K,o}) \in {\cal  U}_{1,n}^{(K)}$
  is called  {\it decentralized global  optimal}, if it satisfies 
\begin{align}
J^{{\mathbb P}^u}(\gamma_{1,n}^{(K),o} ) \leq  & J^{{\mathbb P}^u}(\gamma_{1,n}^{(K)}), \hso  \forall  \gamma_{1,n}^{(K)} \in {\cal U}_{1,n}^{(K)}. \label{i10}
\end{align}
(2)  Decentralized PbP Optimality. The  $K-$tuple of strategies   $\gamma_{1,n}^{(K),o}\tri   (\gamma_{1,n}^{1,o}, \gamma_{1,n}^{2,o}, \ldots, \gamma_{1,n}^{K,o}) \in {\cal  U}_{1,n}^{(K)}$
  is called  {\it decentralized PbP optimal}, if it satisfies,   $\forall k \in {\mathbb Z}^K$, 
\begin{align}
 {J}^{{\mathbb P}^u}(\gamma_{1,n}^{k,o}, \gamma_{1,n}^{-k,o}) \leq  {J}^{{\mathbb P}^u}(\gamma_{1,n}^{k}, \gamma^{-k,o}), \;
  \forall \gamma_{1,n}^k \in {\cal  U}_{1,n}^{k}. \label{i11}
\end{align}
  \end{definition}

\begin{example} Our formulation includes  recursive models, 
\label{ex:rec}
\begin{align}
&x_{t+1}=f(t,x_t, u_t^{(M)}, w_t), \; \forall t \in { T}_+^{n-1}, \label{NDM-3}\\
& y_{t}^m   = h^m(t,x_t, u_t^k, v_t^m),\hso \forall t \in {T}_+^{n}, \; m=1, \ldots, M \label{NDM-4} 
\end{align}
where  $\{(x_1, w_1, v_1^m, w_2, v_2^m, \ldots, w_{n}, v_n^m)\big| m =1, \ldots, M\}$ are mutually independent RVs with known  PMs\footnote{${\bf P}_{w_{t}}(dw_{t})$ is  the probability of the event $\{w_{t}\in dw_{t}\}$, i.e.,    the RV $w_{t}$  is found in the set $dw_{t}\subset {\mathbb W}_{t}$.}  ${\bf P}_{x_{1}}(dx_{1})$ and 
\begin{align*}
 {\bf P}_{w_{t}}(dw_{t})=\Psi_t(dw_t),   \: {\bf P}_{v_{t}^k}(dv_{t}^k)=\Phi_t^m(dv_t^m),  \: \forall (t, m).
\end{align*} 
The conditional  PMs 
$S_{t+1}(dx_{t+1}|x_t, u_{t}^{(K)}), Q_{t}^m(dy_t^m |x_t, u_{t}^{(K)})$ are determined from the model. 
\end{example}

\begin{remark} Since on probability measure ${\mathbb P}^u$,   $(x_{1,n}, y_{1,n}^{(M)})$ are affected by $u_{1,n}^{(K)}$,   it is almost impossible to apply calculus of variations to $J^{{\mathbb P}^u} (u^{(K)})$. To circumvent this technicality  we invoke a change of measure to  transform $J^{{\mathbb P}^u} (u^{(K)})$ to an equivalent payoff $J^{\sr{\circ}{\mathbb P}} (u^{(K)})$ on a reference  measure $\sr{\circ}{\mathbb  P}$ such that $(x_{1,n}, y_{1,n}^{(M)})$ are not  affected by  $u_{1,n}^{(K)}$. 
We derive optimality conditions on  $\sr{\circ}{\mathbb  P}$ and 
 translate them  on  ${\mathbb  P}^u$. 
  \end{remark}

\subsection{Change of Probability Measures}
\label{sect:ref-original}
The mathematical concept we use to change  the probability measure is known as Radon-Nikodym derivative theorem (see brief summary of the basic theorems   in Section~\ref{app-1}). 

{\it  1)  Change from  Reference Measure $\sr{\circ}{\mathbb  P}$ to the Original  Measure ${\mathbb  P}^u$.}
We consider  a reference   probability space $(\Omega, {\cal  F}, \sr{\circ}{\mathbb  P})$ such that the following hold. 

\indent 1.1)  The processes 
 $x_{1,n}$ are  mutually independent with PM
\begin{align} 
{\bf P}_{x_{1,n}}(dx_{1,n})=\prod_{t=1}^n
  {\bf P}_{x_t}(dx_t) \equiv \prod_{t=1}^n\Psi_t(dx_t).  \label{ref_1}
\end{align}
 \indent 1.2)  The processes  $y_{1,n}^m$  are mutually  independent for each  $ m \in {\mathbb Z}_+^M$,  and $y_{1,n}^k$ is independent of  $y_{1,n}^m$, $\forall k\neq m$,  with PMs, 
 \begin{align} 
 &{\bf P}_{y_{1,n}^m}(dy_{1,n}^m)=\prod_{t=1}^n {\bf P}_{y_t^m}(dy_t^m)\equiv \prod_{t=1}^n\Phi_t^m(dy_t^m), \;  \;  \forall m,   \label{ref_3}   \\
 & {\bf P}_{y_t^{(M)}}(dy_t^{(M)})=\Phi_{t}^{(M)}(dy_t^{(M)})=\prod_{m=1}^M{\bf P}_{y_t^m}(dy_t^m), \; \forall t.  \label{ref_4}
\end{align}
1.3)  The processes $x_{1,n}$ and  $y_{1,n}^{(M)}$ are independent, i.e., 
\begin{align}
{\bf P}_{x_{1,n}, y_{1,n}^{(M)}}(dx_{1,n}, dy_{1,n}^{(M)} )= {\bf P}_{x_{1,n}}(dx_{1,n}){\bf P}_{y_{1,n}^{(M)}}( dy_{1,n}^{(M)}).  \label{ref_4_new}
\end{align}
We introduce the $\sigma-$algebars generated  by the indicated RVs. 
 \begin{align*}
&{\cal F}_{t}^{0, {x}}\tri \sigma\big\{(x_1, \ldots, x_{t}) \big\}, \; \forall t \in T_+^n, \\
&{\cal F}_{t}^{0, {y^m}}\tri \sigma\big\{(y_1^m, \ldots, y_{t}^m) \big\}, \; \forall m, \\
&{\cal F}_{t}^{0, y^{(M)}, u^{(K)}}\tri \sigma\big\{(y_1^{(M)}, \ldots, y_{t}^{(M)},  u_{1}^{(K)}, \ldots, u_{t}^{(K)}) \big\}, \\
& {\cal F}_{t}^0 \tri \sigma\big\{(x_{1, t}, y_{1,t}^{(M)},  u_{1,t}^{(K)}) \big\}, \hso 
{\cal F}^{0,I_t^k} \tri \sigma\big\{I_t^k \big\}, \hso \forall k.
\end{align*}
Let $\{{\cal F}_{t}^x\big|  t \in T_+^n  \}$, $\{{\cal F}_{t}^{{y^k}}\big|  t \in T_+^n  \}$,  
$\{{\cal F}_{t}^{y^{(M)}, u^{(K)}} \big|  t \in T_+^n  \}$, $\{{\cal F}_{t}\big|  t \in T_+^n  \}$  denote the    complete filtrations  generated  by the $\sigma-$algebras   ${\cal F}_{t}^{0, {x}}, {\cal F}_{t}^{0, {y}^k}$, 
 ${\cal F}_{t}^{0, y^{(M)}, u^{(K)}}, {\cal F}_{t}^0$,  respectively, and let ${\cal F}^{I_t^k}$ denote the $\sigma-$algebra generated by the information pattern $I_t^k$, all  augmented by the $\sr{\circ}{\mathbb P}-$null sets of ${\cal F}$. 
 

Starting with  the reference   probability space $(\Omega, {\cal  F}, \sr{\circ}{\mathbb  P})$ such that 1.1)-1.3) hold,   we will  construct   a   probability space $(\Omega, {\cal  F}, {\mathbb  P}^{u})$ using the Radon-Nykodym derivative Theorem~\ref{theorem 1.7.4},   $\frac{d{\mathbb  P}^u}{d\sr{\circ}{\mathbb  P}}\Big|_{{\cal F}_n}\tri \Lambda_n^u M_n^u$, where   $\Lambda_n^u M_n^u$ is appropriately chosen so that  
 the following hold.
%
%

\indent 1.4) On the probability space $(\Omega, {\cal  F}, {\mathbb  P}^u)$ the conditional PMs of $\big\{x_t\big| t \in T_+^n\big\}$ and $\big\{y_t^k\big| t \in T_+^n\big\}, \forall k $ are given by (\ref{i-d1})-(\ref{i-d2-j-2}) 
and the joint probability distribution is (\ref{jd-o}).

\begin{theorem} Change from a Reference  Measure $\sr{\circ}{\mathbb  P}$ to the Original Measure ${\mathbb  P}^u$.\\
\label{thm:RND_1}
 Consider the reference probability space  $(\Omega, {\cal  F}_{n}, \sr{\circ}{\mathbb  P})$ on which  1.1)-1.3) hold, i.e., $x_{1,n}\tri \big\{x_t\big| t \in T_+^n\big\}$ and $y_{1,n}^m \tri \big\{y_t^m\big| t \in T_+^n\big\}$,    $\forall  m \in {\mathbb Z}_+^M$ are independent with 
   PMs (\ref{ref_1})-(\ref{ref_4_new}). \\
\noi Define the  processes,  
\begin{align}
&\lambda_s^u  \tri   \frac{   Q_{s}^{(M)}(dy_{s}^{(M)} |x_s, u_{s}^{(K)})}{\Phi_s^{(M)}(dy_s^{(M)})}  = \prod_{m=1}^M \frac{   Q_{s}^{m}(dy_{s}^{m} |x_s, u_{s}^{(K)})}{\Phi_s^{m}(dy_s^{m})},\nonumber 
\\
&m_s^u \tri   \frac{S_s(dx_s|x_{s-1}, u_{s-1}^{(K)})}{\Psi_s(dx_s)},\; m_1^u=1,   \hso s=1, \ldots, n,  \label{RND_1.1-l}\\
&\Lambda_t^u \tri  \prod_{s=1}^t  \lambda_s^u, \;  M_t^u\tri  \prod_{s=1}^t  m_s^u,  \; M_1^u =1,  \;  \forall t \in {T}_+^n. \label{RND_1.1} 
\end{align}
Assume $Q_{s}^{(M)}(\cdot |x_s, u_{s}^{(K)})$ is absolutely continuous w.r.t $\Phi_s^{(M)}(\cdot)$, denoted by  $Q_{s}^{(M)}(\cdot |x_s, u_{s}^{(K)}) \ll \Phi_s^{(M)}(\cdot)$,   for almost all $(x_s, u_{s}^{(K)})$, and $S_{s}(\cdot |x_{s-1}, u_{s-1}^{(K)}) \ll\Psi_s(\cdot)$ for almost all $(x_{s-1}, u_{s-1}^{(K)})$, $\forall s$,  and $\{\Lambda_t^u M^u_t\big| t \in T_+^n\}$ is $\sr{\circ}{\mathbb P}-$integrable. \\
The following hold. \\
(1) The process  $\{\Lambda_t^u M_t^u\big| t \in T_+^n\}$ is an   $\big(\{{\cal F}_{t}\big| t\in T_+^n\}, \sr{\circ}{\mathbb P}\big)$ martingale\footnote{The martingale is defined by,    $\Lambda_t^u M_t^u $ is ${\cal F}_{t}-$measurable,   $\Lambda_t^u M_t^u $  is $\sr{\circ}{\mathbb P}-$integrable, and  (\ref{RND_O_1_d_l_1}) holds.}, i.e., 
\begin{align}
&{\bf  E}^{ \sr{\circ}{\mathbb P }}\Big\{\big(\Lambda_t^u M_t^u \big|{\cal F}_{ t-1}\Big\} =\Lambda_{t-1}^uM_{t-1}^u,  \;
 \forall t \in  T_+^n , \label{RND_O_1_d_l_1}\\
&{\bf  E}^{\sr{\circ}{\mathbb P}}\Big\{\Lambda_t^u M_t^u\Big\}=1, \hso \forall t \in T_+^n. \label{rnd_1_1}
\end{align}
(2)  Define, 
\begin{align}
&\frac{d{\mathbb  P}^u}{d\sr{\circ}{\mathbb P}}\Big|_{{\cal F}_t}\tri \Lambda_t^u M_t^u, \hso \forall t \in T_+^n.  \label{RND_1_thm_a_1_1}  
\end{align}
Then,  ${\mathbb P}^u \ll \sr{\circ}{\mathbb  P}$ and 
\begin{align}
& {\mathbb  P}^u(B)=\int_B   \Lambda_t^u(\omega) M_t^u(\omega) d\sr{\circ}{\mathbb P}(\omega), \; \forall 
B\in {\cal  F}_t . \label{RND_1_thm_b_1}
\end{align}
is a probability measure.\\
(3) On the probability space $(\Omega, \{{\cal  F}_t  | t \in T_+^n\}, {\mathbb  P}^u)$, with ${\mathbb  P}^u(B)=\int_B   \Lambda_t^u(\omega) M_t^u(\omega) d\sr{\circ}{\mathbb P}(\omega),  \forall 
B\in {\cal  F}_t$,   the conditional PMs of $ \big\{X_t\big| t \in T_+^n\big\}$ and $ \big\{Y_t^{(M)}\big| t \in T_+^n\big\}$ are given by 
(\ref{i-d1})-(\ref{i-d2-j-2}) and the joint PM is (\ref{jd-o}).
\end{theorem}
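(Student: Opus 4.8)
The plan is to establish the three claims in sequence, with the martingale property (1) serving as the engine that drives (2) and (3). Throughout I exploit the two structural facts built into the reference measure $\oc$: under $\oc$ the variables $(x_{1,n},y_{1,n}^{(M)})$ are mutually independent with marginals $\Psi_t$ and $\Phi_t^{(M)}$ by \eqref{ref_1}--\eqref{ref_4_new}, and each control $u_t^{(K)}=\gamma_t^{(K)}(I_t^{(K)})$ is a fixed Borel function of strictly past observations and controls, hence ${\cal F}_{t-1}$-measurable; the change of measure never touches the control-generation mechanism, so $u_{1,n}^{(K)}$ is the same deterministic functional of the data under both $\oc$ and ${\mathbb P}^u$. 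For part (1) I would write $\Lambda_t^u M_t^u=(\Lambda_{t-1}^u M_{t-1}^u)\,\lambda_t^u m_t^u$, pull the ${\cal F}_{t-1}$-measurable factor out of the conditional expectation, and reduce the claim to ${\bf E}^{\oc}\{\lambda_t^u m_t^u\mid{\cal F}_{t-1}\}=1$. Conditioning on ${\cal F}_{t-1}$ fixes $x_{t-1},u_{t-1}^{(K)}$ and $u_t^{(K)}$ while leaving $(x_t,y_t^{(M)})$ distributed as the independent product $\Psi_t\otimes\Phi_t^{(M)}$; integrating $\lambda_t^u$ against $\Phi_t^{(M)}(dy_t^{(M)})$ yields $Q_t^{(M)}({\mathbb Y}_t^{(M)}\mid x_t,u_t^{(K)})=1$ for every $x_t$, and integrating the remaining factor $m_t^u$ against $\Psi_t(dx_t)$ yields $S_t({\mathbb X}_t\mid x_{t-1},u_{t-1}^{(K)})=1$, giving \eqref{RND_O_1_d_l_1}. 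Iterating the tower property back to $t=1$, where $m_1^u=1$ and $u_1^{(K)}$ is deterministic, produces \eqref{rnd_1_1}.

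Part (2) is then immediate: each $\lambda_s^u,m_s^u$ is a Radon--Nikodym density and hence nonnegative, so $\Lambda_t^u M_t^u\geq 0$, and together with the normalization \eqref{rnd_1_1} this makes $B\mapsto\int_B \Lambda_t^u M_t^u\,d\oc$ a probability measure on ${\cal F}_t$ absolutely continuous with respect to $\oc$. The martingale identity \eqref{RND_O_1_d_l_1} guarantees consistency across $t$ --- for $B\in{\cal F}_s$ with $s\leq t$ one has $\int_B \Lambda_t^u M_t^u\,d\oc=\int_B \Lambda_s^u M_s^u\,d\oc$ --- so the family $\{{\mathbb P}^u|_{{\cal F}_t}\}$ is well-defined and extends to a single ${\mathbb P}^u$ on ${\cal F}_n$, establishing \eqref{RND_1_thm_b_1}.

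For part (3) I would verify the conditional laws through the abstract Bayes rule for conditional expectation under an absolutely continuous change of measure. For $A_{t+1}\in{\cal B}({\mathbb X}_{t+1})$,
\[
{\mathbb P}^u\{x_{t+1}\in A_{t+1}\mid{\cal F}_t\}=\frac{{\bf E}^{\oc}\{I_{\{x_{t+1}\in A_{t+1}\}}\Lambda_{t+1}^u M_{t+1}^u\mid{\cal F}_t\}}{{\bf E}^{\oc}\{\Lambda_{t+1}^u M_{t+1}^u\mid{\cal F}_t\}}={\bf E}^{\oc}\{I_{\{x_{t+1}\in A_{t+1}\}}\lambda_{t+1}^u m_{t+1}^u\mid{\cal F}_t\},
\]
where the denominator equals $\Lambda_t^u M_t^u$ by part (1) and cancels the matching factor in the numerator. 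The same two-stage integration --- integrating out $y_{t+1}^{(M)}$ to eliminate $\lambda_{t+1}^u$, then integrating $I_{\{x_{t+1}\in A_{t+1}\}}m_{t+1}^u$ against $\Psi_{t+1}$ --- returns $S_{t+1}(A_{t+1}\mid x_t,u_t^{(K)})$, which is \eqref{i-d1}. The identical computation with $I_{\{y_t^m\in B_t^m\}}$, conditioning on $(x_{1,t},y_{1,t-1}^{(M)},y_t^{-m},u_{1,t}^{(K)})$, recovers $Q_t^m(\cdot\mid x_t,u_t^{(K)})$ and hence \eqref{i-d2}--\eqref{i-d2-j-2}; chaining these conditionals, or equivalently expanding ${\bf E}^{\oc}\{I_{\mathrm{cyl}}\,\Lambda_n^u M_n^u\}$ as an iterated integral in which the telescoped densities convert $\prod_s \Psi_s\otimes\prod_s\Phi_s^{(M)}$ into the kernels $S_s,Q_s^{(M)}$ while the strategies insert the indicators $I_{\{u_s^{(K)}\in C_s^{(K)}\}}$, reproduces the joint law \eqref{jd-o}.

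The main obstacle I anticipate is bookkeeping rather than conceptual: one must track that $u_t^{(K)}$ is ${\cal F}_{t-1}$-measurable so that it passes as a constant through the conditional integrals, and one must order the iterated integration so that at each stage the innermost density integrates to one against its own reference marginal before the next is touched. Making the interchange of conditional expectation and integration rigorous rests on the $\oc$-integrability of $\{\Lambda_t^u M_t^u\}$ assumed in the hypotheses together with Fubini--Tonelli on the product reference measure.
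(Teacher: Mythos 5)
Your proposal is correct and follows essentially the same route as the paper's own proof: the same factorization $\Lambda_t^u M_t^u=(\Lambda_{t-1}^u M_{t-1}^u)\lambda_t^u m_t^u$ with the two-stage integration of $\lambda_t^u$ against $\Phi_t^{(M)}$ and $m_t^u$ against $\Psi_t$ for part (1), the Radon--Nikodym normalization for part (2), and the conditional Bayes rule with cancellation of $\Lambda_t^u M_t^u$ for part (3) (the paper tests against bounded continuous functions rather than indicators, an immaterial difference). Your explicit remark on the consistency of the family $\{{\mathbb P}^u|_{{\cal F}_t}\}$ across $t$ is a small useful addition the paper leaves implicit.
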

\begin{proof} (1) First, we show $\{\Lambda_t^u M_t^u \big| t \in T_+^n\}$ is an   $\big(\{{\cal F}_t\big| t\in T_+^n\}, \sr{\circ}{\mathbb P}\big)$ martingale, by considering $\forall t \in \{1,\ldots, n\}$, 
\begin{align}
& {\bf  E}^{\sr{\circ}{\mathbb P }}\Big\{\Lambda_t^u M_t^u \big|{\cal F}_{t-1}\Big\}
=\Lambda_{t-1}^u M_{t-1}^u   {\bf E}^{\sr{\circ}{\mathbb P}}
\Big\{ \lambda_t^u m_t^u\big|{\cal F}_{t-1}\Big\} 
\label{RND_O_1_a_1}\\
&= \Lambda_{t-1}^u M_{t-1}^u {\bf E}^{\sr{\circ}{\mathbb P}}
\Big\{  {\bf E}^{\sr{\circ}{\mathbb P}}
\Big\{ \lambda_t^u m_t^u \big|{\cal F}_{t-1}, x_t, u_t^{(K)}\Big\} \Big\} \nonumber \\
&=\Lambda_{t-1}^u M_{t-1}^u  {\bf E}^{\sr{\circ}{\mathbb P}}
\Big\{ m_t^u  \; {\bf E}^{\sr{\circ}{\mathbb P}}
\Big\{ \lambda_t^u  \big|{\cal F}_{t-1}, x_t, u_t^{(K)}\Big\} \big|  {\cal F}_{t-1}  \Big\}    \label{RND_O_1_b_1}
\end{align}
where (\ref{RND_O_1_a_1}) is due to $\Lambda_{t-1}^u M_{t-1}^u$ is ${\cal F}_{t-1}-$measurable, and 
 (\ref{RND_O_1_b_1})    is due to $m_t$ is $({\cal F}_{t-1}, x_t, u_t^{(K)})-$measurable. 
We compute the  inner conditional expectation  in  (\ref{RND_O_1_b_1}),  using the fact that    under measure ${\sr{\circ}{\mathbb P}}$,     $y_t^{(M)}$  is independent of $(x_t, u_t^{(K)})$ and  ${\cal F}_{t-1}$:
 \begin{align}
 &{\bf E}^{\sr{\circ}{\mathbb P}}
\Big\{\lambda_t^u \big|{\cal F}_{t-1}, x_t, u_t^{(K)}\Big\}\label{der-1}\\
&=     \int  \frac{Q_{t}^{(M)}(dy_t^{(M)}|x_t, u_t^{(K)}))}{\Phi_{t}^{(M)}(dy_t^{(M)})}\Phi_{t}^{(M)}(dy_t^{(M)})  \nonumber   \\
&{=} \int   Q_{t}^{(M)}(dy_t^{(M)}|x_t, u_t^{(K)}))  =1-a.s.,  \hso \forall t.  \label{RND_O_1a1}\\
& {\bf E}^{\sr{\circ}{\mathbb P}}
\Big\{ m_t^u   \Big|  {\cal F}_{t-1}  \Big\}={\bf E}^{\sr{\circ}{\mathbb P}}
\Big\{\frac{S_t(dx_{t}|x_{t-1}, u_{t-1}^{(K)})}{\Psi_t(dx_t)} \big|  {\cal F}_{t-1}  \Big\}  \\
&=\int  \frac{S_t(dx_{t}|x_{t-1}, u_{t-1}^{(K)})}{\Psi_t(dx_t)}\Psi_t(dx_t) =1-a.s.   \; \forall t.   \label{RND_O_1a1-new-0}
\end{align}
Substituting   (\ref{RND_O_1a1}) ,   (\ref{RND_O_1a1-new-0}) 
into (\ref{RND_O_1_b_1}) 
we
 the martingale property, 
\begin{align}
{\bf  E}^{\sr{\circ}{\mathbb P }}\Big\{\Lambda_t^u M_{t}^u \big|{\cal F}_{t-1}\Big\} =\Lambda_{t-1}M_{t-1}-a.s.,  \;
 \forall t .\label{RND_O_1_d_1}
\end{align}
Taking expectation of both sides of  (\ref{RND_O_1_d_1}) we obtain ${\bf  E}^{\sr{\circ}{\mathbb P}}\big\{\Lambda_{t}^u M_{t}^u\big\}={\bf  E}^{\sr{\circ}{\mathbb P}}\big\{\Lambda_{t-1}^uM_{t-1}^u\big\},  \forall t$, 
which  also implies 
$
{\bf  E}^{\sr{\circ}{\mathbb P}}\big\{\Lambda_{t}^uM_{t}^u\big\}=1,  \forall t \in \{1, \ldots, n\}$,  
thus establishing (\ref{rnd_1_1}). This completes the proof of  the statements under (1). (2) The statements under  (2) follow  by defining  the Radon-Nikodym derivative,   $\frac{d{\mathbb  P}^u}{d{\sr{\circ}{\mathbb  P}}}\Big|_{{\cal F}_t}\tri \Lambda_t M_t,  \forall t \in T_+^n$,  and  using   (\ref{rnd_1_1})  (see    Theorem~\ref{theorem 1.7.4}). (3) 
Consider the bounded continuous functions with compact support,  $\psi:{\mathbb X}_t \rightarrow {\mathbb R}, 
\phi^{(M)}: {\mathbb Y}^{(M)}\rightarrow {\mathbb R}$, and $\phi^{(M)}(y^{(M)})\tri \prod_{m=1}^M \phi^m(y^m)$. 
 It suffices to  show, 
 \begin{align}
&{\mathbb E}^{{\mathbb P}^{u}}\Big\{ \psi(x_t)\phi^{(M)}(y_t^{(M)})\Big|{\cal F}_{t-1}\Big\}=\int \psi(x_t)S_t(dx_t|x_{t-1}, u_{t-1}^{(K)}) \nonumber\\
&\hspace{1.cm} \times~ \prod_{m=1}^M   \phi^{m}(y_t^m)Q_t^{m}(dy_t^m|x_t, u_t^{(K)}), \hso \forall t. \label{der-iid}
\end{align}
To show  (\ref{der-iid}) we use  with   Bayes' rule (see Theorem~\ref{thm:RND-c}.(2)), 
\begin{align}
&{\mathbb E}^{{\mathbb P}^{u}}\Big\{\psi(x_t)\phi^{(M)}(y_t^{(M)})\Big|{\cal F}_{t-1}\Big\}, \hst \forall t \nonumber \\
&=\frac{{\mathbb E}^{\sr{\circ}{\mathbb  P}}\Big\{  \psi(x_t)\phi^{(M)}(y_t^{(M)})   \Lambda_t^u M_t^u \Big|{\cal F}_{t-1}\Big\}}{{\mathbb E}^{\sr{\circ}{\mathbb  P}}\Big\{\Lambda_t^u M_t^u \Big|{\cal F}_{t-1}\Big\}}=\mbox{ (\ref{der-iid})}
\end{align}
where the last equality is shown  similar  to (1). 
\end{proof}

\begin{remark}
\label{rem:rem_2}
We emphasize that on the reference measure $\sr{\circ}{\mathbb P}$,

i) the PMs  induced by $(x_{1,n}, y_{1,n}^{(M)})$ do not depend on $u_{1,n}^{(K)}$, 

ii) the filtrations $\{{\cal F}_{t}^{{x}}\big|  t \in T_+^n  \}$ and       $\{{\cal F}_{t}^{{y^m}}\big|  t \in T_+^n  \}, \forall m \in {\mathbb Z}_+^M$ do note depend on $u_{1,n}^{(K)}$, and 
 
 iii)  for each $t$,  control  $u_t^k=\gamma_t(I_t^k)$ is ${\cal F}^{I_t^k}-$measurable, and ${\cal F}^{I_t^k} \subseteq {\cal F}_{t-1}^{y^{(M)}, u^{(K)}}$, $\forall k$. 
 
  If  only a change of measure on   $y_{1,n}^{(M)}$  is considered, then  $\frac{d{\mathbb  P}^u}{d \sr{\circ}{\mathbb  P}}\Big|_{{\cal F}_t}\tri \Lambda_t^u, \forall t \in T_+^n$,  the conditional  PM of $x_{1,n} $ is the same under both ${\mathbb  P}^u$ and $\sr{\circ}{\mathbb  P}$, i.e., (\ref{i-d1}) holds, and  on  $\sr{\circ}{\mathbb  P}$,  $x_{1,n}$ and $y_{1,n}^{(M)} $,     are independent, and  (\ref{ref_4}) holds. 
 
 
\end{remark}

{\it  2) Reverse Change  from the Original  Measure ${\mathbb  P}^u$ to the Reference  Measure $\sr{\circ}{\mathbb  P}$.}\\
We can also start with the original probability measure  ${\mathbb P}^{u}$ such that  
(\ref{i-d1})-(\ref{i-d2-j-2}) and   (\ref{jd-o}) hold
and  define the reference probability measure  $\sr{\circ}{\mathbb P}$ such that (\ref{ref_1})-(\ref{ref_4_new}) hold, as follows.

Consider 
 $(\Omega, \{{\cal  F}_t|t \in T_+^n\}, {\mathbb  P}^u)$ such that the following hold. 

\indent 2.1)  The conditional PMs of  
 $x_{1,n}\tri \big\{x_t\big| t \in T_+^n\big\}$ and $y_{1,n}^m \tri \big\{y_t^m\big| t \in T_+^n\big\}$,  $\forall  m \in {\mathbb Z}_+^M$ are given by (\ref{i-d1})-(\ref{i-d2-j-2}).
 
%
\noi  Then we can  construct  the reference   probability space $(\Omega, \{ {\cal  F}_t|t \in T_+^n \}, \sr{\circ}{\mathbb  P})$,   such that  $\sr{\circ}{\mathbb  P} \ll {\mathbb  P}^{u}$, by setting 
\begin{align} 
 \frac{d\sr{\circ}{\mathbb  P}}{d{\mathbb  P}^{u}}\Big|_{{\cal F}_t}\tri \big(\Lambda_t^u\big)^{-1} \big(M_t^u\big)^{-1},  \hso \forall t \in T_+^n  \label{rev-rnd}
\end{align} 
 where ${\Lambda}_t, {M}_t^u$ are  define before. We can show (similar to the proof of Theorem~\ref{thm:RND_1}) that  the following hold.

2.2) Under the reference   probability space $(\Omega, \{ {\cal  F}_t|t \in T_+^n \}, \sr{\circ}{\mathbb  P})$ the statements  1.1)-1.3) hold, i.e., (\ref{ref_1})-(\ref{ref_4_new}).

\ \

{\it 3) Equivalent Payoffs.}
Now, we use Theorem~\ref{thm:RND_1}, i.e.,  the Radon-Nikodym derivative  $\frac{d{\mathbb  P}^u}{d\sr{\circ}{\mathbb  P}}\Big|_{{\cal F}_t}\tri \Lambda_t^u M_t^u,  \forall t \in T_+^n$,      to equivalently express  the  payoff  $J^{{\mathbb P}^u}(\gamma_{1,n}^{1} \ldots, \gamma_{1,n}^{K})$ of Definition~\ref{def-team},  under the reference probability measure $\sr{\circ}{\mathbb  P}$.  


\begin{theorem} Equivalent Payoffs \\
\label{thm:p-off}
Define the payoff  on   probability space $(\Omega, \{{\cal  F}_t|t\in T_+^n\}, {\mathbb  P}^u)$,
\begin{align}
&{\mathbb  P}^{u}: \: J^{{\mathbb P}^u}(u^{(K)}) \tri   {\bf E}^{{\mathbb P}^u} \Big\{ \sum_{t=1}^{n-1} \ell(t, x_t, u_t^{(K)})\nonumber \\
&+ \kappa(n, x_n) \Big\}, \; \mbox{s.t.  $(x_{1,n}, y_{1,n}^{(M)})$,    satisfy  (\ref{i-d1})-(\ref{i-d2-j-2})}.  \label{i8-a}
\end{align}
Define the payoff  on reference  probability space $(\Omega, {\cal  F}, \sr{\circ}{\mathbb  P})$,
\begin{align}
&\sr{\circ}{\mathbb P}: \hso J^{\sr{\circ}{\mathbb P}}(u^{(K)})\tri {\bf E}^{\sr{\circ}{\mathbb P}} \Big\{ \sum_{t=1}^{n-1} \ell(t, x_t, u_t^{(K)})\Lambda_t^u M_t^u \nonumber \\
&+ \kappa(n, X_n)\; \Lambda_n^u M_n^u \Big\}, \; \mbox{ $\frac{d{\mathbb  P}^u}{d
\sr{\circ}{\mathbb  P}}\Big|_{{\cal F}_n}\tri \Lambda_n^u M_n^u$ of Thm~\ref{thm:RND_1}},
 \label{i8-or_1}\\
&\mbox{s.t.   $(x_{1,n}, y_{1,n}^{(M)})$
   satisfy  (\ref{ref_1})-(\ref{ref_4_new})}. \nonumber 
\end{align}
Then  the two payoffs are equal, i.e.,  $J^{{\mathbb P}^u}(u^{(K)}) =J^{\sr{\circ}{\mathbb P}}(u^{(K)})$.
\end{theorem}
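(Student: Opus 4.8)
The plan is to obtain the equality directly from the defining property of the Radon-Nikodym derivative in Theorem~\ref{thm:RND_1}, applied separately to each summand of the payoff. By \eqref{RND_1_thm_b_1}, $\frac{d{\mathbb P}^u}{d\sr{\circ}{\mathbb P}}\big|_{{\cal F}_t} = \Lambda_t^u M_t^u$, so for every $\sr{\circ}{\mathbb P}$-integrable, ${\cal F}_t$-measurable random variable $Z_t$ one has ${\bf E}^{{\mathbb P}^u}\{Z_t\} = {\bf E}^{\sr{\circ}{\mathbb P}}\{Z_t \Lambda_t^u M_t^u\}$; this extends the set identity \eqref{RND_1_thm_b_1} from indicators to general integrands by the usual approximation through simple functions and monotone convergence.

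First I would invoke linearity of ${\bf E}^{{\mathbb P}^u}$ to write $J^{{\mathbb P}^u}(u^{(K)})$ as the sum of the running terms ${\bf E}^{{\mathbb P}^u}\{\ell(t, x_t, u_t^{(K)})\}$, $t = 1, \ldots, n-1$, plus the terminal term ${\bf E}^{{\mathbb P}^u}\{\kappa(n, x_n)\}$. Next I would confirm the measurability required to invoke the identity at the correct time index: $x_t$ is ${\cal F}_t$-measurable, and by Remark~\ref{rem:rem_2}(iii) each $u_t^k = \gamma_t^k(I_t^k)$ is ${\cal F}^{I_t^k}$-measurable with ${\cal F}^{I_t^k} \subseteq {\cal F}_{t-1}^{y^{(M)}, u^{(K)}} \subseteq {\cal F}_t$, so $\ell(t, x_t, u_t^{(K)})$ is ${\cal F}_t$-measurable and $\kappa(n, x_n)$ is ${\cal F}_n$-measurable. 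Applying the displayed identity with time index $t$ to each running term and with index $n$ to the terminal term turns every ${\bf E}^{{\mathbb P}^u}\{\cdot\}$ into ${\bf E}^{\sr{\circ}{\mathbb P}}\{\cdot\, \Lambda_t^u M_t^u\}$ (respectively $\Lambda_n^u M_n^u$); recombining by linearity reproduces exactly the expression defining $J^{\sr{\circ}{\mathbb P}}(u^{(K)})$.

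The only subtle point — and the reason the weights are time-indexed rather than uniformly equal to the terminal density — is the consistency of weighting each $\ell(t, \cdot)$ by $\Lambda_t^u M_t^u$ instead of by $\Lambda_n^u M_n^u$. This is guaranteed by the martingale property \eqref{RND_O_1_d_l_1}: for $t < n$ and ${\cal F}_t$-measurable $Z_t$, the tower property gives ${\bf E}^{\sr{\circ}{\mathbb P}}\{Z_t \Lambda_n^u M_n^u\} = {\bf E}^{\sr{\circ}{\mathbb P}}\{Z_t\, {\bf E}^{\sr{\circ}{\mathbb P}}\{\Lambda_n^u M_n^u \big| {\cal F}_t\}\} = {\bf E}^{\sr{\circ}{\mathbb P}}\{Z_t \Lambda_t^u M_t^u\}$, so a single terminal density $\Lambda_n^u M_n^u$ applied to the whole payoff yields the same value as the term-by-term time-$t$ weights. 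I expect no genuine obstacle beyond this bookkeeping, as long as the integrability hypothesis of Theorem~\ref{thm:RND_1} (that $\{\Lambda_t^u M_t^u\}$ is $\sr{\circ}{\mathbb P}$-integrable) together with integrability of $\ell$ and $\kappa$ under the relevant measures holds, which is what justifies the termwise splitting and the exchange of the finite sum with the expectation.
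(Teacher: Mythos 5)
Your proposal is correct and follows essentially the same route as the paper's proof: both rest on the Radon--Nikodym identity $\frac{d{\mathbb  P}^u}{d\sr{\circ}{\mathbb  P}}\big|_{{\cal F}_t}=\Lambda_t^u M_t^u$ together with the tower property and the martingale property \eqref{RND_O_1_d_l_1} to pass between the terminal weight $\Lambda_n^u M_n^u$ and the time-$t$ weights $\Lambda_t^u M_t^u$ on each running term. The only cosmetic difference is direction of traversal --- you change measure term-by-term at time $t$ and then check consistency with the terminal density, whereas the paper changes measure once at time $n$ and conditions down --- but the computation ${\bf E}^{\sr{\circ}{\mathbb P}}\{Z_t\Lambda_n^u M_n^u\}={\bf E}^{\sr{\circ}{\mathbb P}}\{Z_t\Lambda_t^u M_t^u\}$ you exhibit is exactly the step \eqref{i8-or-3}--\eqref{i8-or-4} in the paper.
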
 
\begin{proof} 
Suppose we start  with  $(\Omega, \{{\cal  F}_t| t \in T_+^n\}, {\mathbb  P}^{u})$ on which the  payoff  is $J^{{\mathbb P}^u}(u^{(K)}) =\mbox{(\ref{i8-a})}$. By  Theorem~\ref{thm:RND_1}, and using     $\frac{d{\mathbb  P}^u}{d\sr{\circ}{\mathbb  P}}\Big|_{{\cal F}_n}\tri \Lambda_n^u M_n^u,  \forall n$,  we have the following. 
\begin{align}
&J^{{\mathbb P}^{u}}(u^{(K)}) 
={\bf E}^{\sr{\circ}{\mathbb P}} \Big\{\Lambda_n^u M_n^u \Big( \sum_{t=1}^{n-1} \ell(t, x_t, u_t^{(K)}) 
+ \kappa(n, x_n)\Big) \Big\}.
  \label{i8-or_6}\\
&{\bf E}^{\sr{\circ}{\mathbb P}} \Big\{\Lambda_n^u M_n^u  \sum_{t=1}^{n-1} \ell(t, x_t, u_t^{(K)}) \Big\}={\bf E}^{\sr{\circ}{\mathbb P}} \Big\{ \sum_{t=1}^{n-1} \ell(t, x_t, u_t^{(K)}) \Lambda_n^u M_n^u\Big\}\nonumber \\
&={\bf E}^{\sr{\circ}{\mathbb P}} \Big\{ \sum_{t=1}^{n-1} {\bf E}^{\sr{\circ}{\mathbb P}} \Big\{\ell(t, x_t, u_t^{(K)}) \Lambda_n^u M_n^u\Big|{\cal F}_t \Big\} \Big\} \label{i8-or-3}\\
&\sr{(a)}{=}{\bf E}^{\sr{\circ}{\mathbb P}} \Big\{ \sum_{t=1}^{n-1}\ell(t, x_t, u_t^{(K)}) {\bf E}^{\sr{\circ}{\mathbb P}} \Big\{ \Lambda_n^u M_n^u\Big|{\cal F}_t \Big\} \Big\} \label{i8-or-3}\\
&\sr{(b)}{=}{\bf E}^{\sr{\circ}{\mathbb P}} \Big\{ \sum_{t=1}^{n-1}\ell(t, x_t, u_t^{(K)}) \Lambda_t^u M_t^u\Big\} \label{i8-or-4}
\end{align}
where $(a)$ is due to $\ell(t, x_t, u_t^{(K)})$ is ${\cal F}_t-$measurable, and $(b)$ is  due to Theorem~\ref{thm:RND_1}.(1), $\{\Lambda_t^u M_t^u \big|t \in T_+^n\}$ is an $\big(\{ {\cal F}_t\big|t \in T_+^n\}, \sr{\circ}{\mathbb P}\big)-$martingale. Substituting (\ref{i8-or-4}) into   (\ref{i8-or_6}) we obtain (\ref{i8-a}). Similarly, we can start  with  $(\Omega, \{{\cal  F}_t|t \in T_+^n\}, \sr{\circ}{\mathbb  P})$ on which the payoff  is  $J^{\sr{\circ}{\mathbb P}}(u^{(K)})=\mbox{(\ref{i8-or_1})}$,  and   
show $J^{\sr{\circ}{\mathbb P}}(u^{(K)})=J^{{\mathbb P}^{u}}(u^{(K)})$. 
\end{proof}

\subsection{Conditions for  Global and PbP Optimality  }
By
  Theorem~\ref{thm:p-off}, on  the reference  probability space $(\Omega, {\cal  F}, \sr{\circ}{\mathbb  P})$,  the state and observations $(x_{1,n}, y_{1,n}^{(M)})$  are not affected by the controls $u_{1,n}^{(K)}$. Over the time horizon $\{1, \ldots, n\}$ there are $n \times K$ controls,  $\big\{u_t^k| (t,k)\in \{1, \ldots, n\}\times \{1, \ldots, K\}\big\}$. 
  
  In Theorem~\ref{st-ksm},  we derive stationary conditions for PbP and global optimality  on the reference measure  $ \sr{\circ}{\mathbb  P}$, using concepts from static team theory, and then transform these on the original measure $ {\mathbb  P}^u$.



\begin{theorem} (Stationary Conditions of Decentralized Stochastic Dynamic Optimal Control Problems)\\
\label{st-ksm}
Consider Definition~\ref{def-team} of  decentralized team or global and PbP  optimality. 
Define the sample payoff on the reference  probability space $(\Omega, {\cal  F}, \sr{\circ}{\mathbb  P})$ by 
\begin{align}
L&(x_{1,n}, y_{1,n}^{(M)}, u_{1,n}^{(K)})\tri \sum_{t=1}^{n-1} \ell(t, x_t, u_t^{(K)})   \Theta_t^u \nonumber \\
& + \kappa(n, X_n)\;\Theta_n^u,  \hso  \Theta_t^u \tri \Lambda_t^u M_t^u .
\end{align}
Introduce, $(\nabla_{z_1},\ldots,  \nabla_{z_k})\tri (\frac{\partial }{\partial z_1},\ldots,  \frac{\partial }{\partial z_k})$.
 Assume the following conditions hold.
 
 { (A1)} $L: {\mathbb X}_{1,n} \times  {\mathbb A}_{1,n}^{(K)} \times {\mathbb Y}_{1,n}^{(M)} \rar {\mathbb R}$ is Borel measurable. 
 
 (A2) $Q_{s}^{(M)}(\cdot |x_s, u_{s}^{(K)}) \ll \Phi_s^{(K)}(\cdot)$,   for almost all $(x_s, u_{s}^{(K)})$, and $S_{s}(\cdot |x_{s-1}, u_{s-1}^{(K)}) \ll\Psi_s(\cdot)$ for almost all $(x_{s-1}, u_{s-1}^{(K)})$, $\forall s$,  and $\{\Lambda_t^u M^u_t\big| t \in T_+^n\}$ is $\sr{\circ}{\mathbb P}-$integrable.
 
 { (A3)} There exists  a PbP optimal strategy  $\gamma_{1,n}^{o, (K)}  \in {\cal U}_{1,n}^{(K)}$ with  $J(\gamma_{1,n}^{(K),o})\tri \inf \big\{  J(\gamma_{1,n}^{(K)})\big|\gamma_{1,n}  \in {\cal  U}_{1,n}^{(K)}\big\}\in (-\infty, \infty)$.
 
 (A4)   $\forall k\in  {\mathbb Z}_+^K$, the Gateaux derivative of $L(x_{1,n}, y_{1,n}^{(M)}, \gamma_{1,n}^{-k,o}, \gamma_{1,n}^{k})$ at $\gamma_{1,n}^{k,o}  \in {\cal  U}_{1,n}^{k}$ in the direction      $\gamma_{1,n}^k -\gamma_{1,n}^{k,o}   \in {\cal  U}_{1,n}^{k}$ exists, and  $\gamma_{1,n}^{k,o} + \veps (\gamma_{1,n}^{k} -\gamma_{1,n}^{k,o} )  \in {\cal  U}_{1,n}^{k}, \forall \veps\in  [0,1]$.\\
The following hold.\\
 (1) If 
 $\gamma_{1,n}^{o, (K)}  \in {\cal U}_{1,n}^{(K)}$ is  PbP optimal then necessarily the following stationary conditions hold. \\
 (1.1)  Under the reference measure $ \sr{\circ}{\mathbb P}$, 
\begin{align}
&  {\bf  E}^{\sr{\circ}{\mathbb P}}\Big\{   \nabla_{u_{1,n}^k}L(x_{1,n}, y_{1,n}^{(M)},\gamma_{1,n}^{-k,o},u_{1,n}^{k})\big|_{u_{1,n}^{k}=\gamma_{1,n}^{k,o}}\nonumber \\
&. \Big(\gamma_{1,n}^{k} -\gamma_{1,n}^{k,o}\Big) \Big\}\geq 0,   \hso  \forall \gamma_{1,n}^{k}\in {\cal U}_{1,n}^{k}, \hso \forall k \in {\mathbb Z}_+^K,  \label{vaeqst_b}\\
& {\bf  E}^{\sr{\circ}{\mathbb P}} \Big\{\sum_{t=1}^n  \nabla_{u_t^k} L(x_{1,n}, y_{1,n}^{(M)},\gamma_{1,n}^{-k,o}, \gamma_{1,t-1}^{o}, u_{t}^{k},\gamma_{t+1,n}^{o})\big|_{u_t^k=\gamma_t^{k,o}}\nonumber \\
&. \big( \gamma_t^k-\gamma_t^{k,o}\big) \Big\} \geq 0, \;  \forall  \gamma_{t}^k  \in {\cal  U}_{t}^{k}, \hso \forall (k,t) \in {\mathbb Z}_+^K \times T_+^n.  \label{vaeqst_c}
 \end{align}
 Moreover, the conditional stationary condition holds, 
  \begin{align}
&{\bf  E}^{\sr{\circ}{\mathbb P}} \Big\{  \nabla_{u_t^k} L(x_{1,n}, y_{1,n}^{(M)},\gamma_{1,n}^{-k,o}, \gamma_{1,t-1}^{o},u_{t}^{k},\gamma_{t+1,n}^{o})\big|_{u_t^k=\gamma_t^{k,o}}\nonumber \\
&. \big( \gamma_t^k-\gamma_t^{k,o}\big)\Big| {\cal F}^{I_t^k} \Big\} \geq 0, \;  \forall  \gamma_{t}^k  \in {\cal  U}_{t}^{k}, \;  \sr{\circ}{\mathbb P}\big|_{{\cal F}^{I_t^k}}, \;  \forall (k,t).  \label{vaeqst_con}
 \end{align}
 (1.2)  Under the original  measure ${\mathbb P}^u={\mathbb P}^{\gamma^{(K),o}}$, 
\begin{align}
&  {\bf  E}^{{\mathbb P}^{\gamma^{(K),o}}}\Big\{ \big(\Theta_n^{\gamma^{(K),o}} \big)^{-1} \nabla_{u_{1,n}^k}L(x_{1,n}, y_{1,n}^{(M)},\gamma_{1,n}^{-k,o},u_{1,n}^{k})\big|_{u_{1,n}^{k}=\gamma_{1,n}^{k,o}}\nonumber \\
&.  \Big(\gamma_{1,n}^{k} -\gamma_{1,n}^{k,o}\Big) \Big\}\geq 0,   \;  \forall \gamma_{1,n}^{k}\in {\cal U}_{1,n}^{k}, \; \forall k \in {\mathbb Z}_+^K,  \label{vaeqst_b_orig}\\
& {\bf  E}^{{\mathbb P}^{\gamma^{(K),o}}} \Big\{\sum_{t=1}^n  \nabla_{u_t^k} L(x_{1,n}, y_{1,n}^{(M)},\gamma_{1,n}^{-k,o}, \gamma_{1,t-1}^{o}, u_{t}^{k},\gamma_{t+1,n}^{o})\big|_{u_t^k=\gamma_t^{k,o}}\nonumber \\
&.\big(\Theta_t^{\gamma^{(K),o}}\big)^{-1} \big( \gamma_t^k-\gamma_t^{k,o}\big) \Big\} \geq 0, \;  \forall  \gamma_{t}^k  \in {\cal  U}_{t}^{k}, \; \forall (k,t).  \label{vaeqst_c_orig}
 \end{align}
 Moreover, the conditional stationary condition holds, 
  \begin{align}
&{\bf E}^{{\mathbb P}^{\gamma^{(K),o}}} \Big\{  \nabla_{u_t^k} L(x_{1,n}, y_{1,n}^{(M)},\gamma_{1,n}^{-k,o}, \gamma_{1,t-1}^{o},u_{t}^{k},\gamma_{t+1,n}^{o})\big|_{u_t^k=\gamma_t^{k,o}}\nonumber \\
&.\big( \Theta_t^{\gamma^{(K),o}}\big)^{-1}\big( \gamma_t^k-\gamma_t^{k,o}\big)\Big| {\cal F}^{I_t^k} \Big\} \geq 0, \hso  \forall  \gamma_{t}^k  \in {\cal  U}_{t}^{k},  {\mathbb P}^{\gamma^{(K),o}}\big|_{{\cal F}^{I_t^k}},  \nonumber \\
& \forall (k,t).  \label{vaeqst_con_orig}
 \end{align}
(2) Suppose the following  additional condition holds. \\ 
 {(A5)} $L(x_{1,n}, y_{1,n}^{(M)}, \cdot)$ is convex in $u_{1,n}^{(K)} \in {\mathbb A}_{1,n}^{(K)}$.  \\
 Then any  $\gamma_{1,n}^{o, (K)}  \in {\cal U}_{1,n}^{(K)}$ that satisfies the PbP stationary conditions   (\ref{vaeqst_con}) is also team or globally optimal. 
\end{theorem}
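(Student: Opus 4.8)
The plan is to exploit Theorem~\ref{thm:p-off} to move the entire problem onto the reference measure $\sr{\circ}{\mathbb P}$, where by Remark~\ref{rem:rem_2} the pair $(x_{1,n},y_{1,n}^{(M)})$ carries a fixed, control-independent law. Consequently $J^{\sr{\circ}{\mathbb P}}(\gamma_{1,n}^{(K)})={\bf E}^{\sr{\circ}{\mathbb P}}\{L(x_{1,n},y_{1,n}^{(M)},u_{1,n}^{(K)})\}$ is an expectation of the sample payoff $L$ against a measure that does not vary with the controls, so differentiation in the control argument can be passed inside the expectation. This is the structural advantage that makes the static-team variational calculus applicable, and every inequality will first be proved on $\sr{\circ}{\mathbb P}$ and then transported.

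For necessity (part (1)), fix $k$ and a feasible direction $\gamma_{1,n}^k-\gamma_{1,n}^{k,o}$, and consider $\varphi(\veps)\tri J^{\sr{\circ}{\mathbb P}}(\gamma_{1,n}^{k,o}+\veps(\gamma_{1,n}^k-\gamma_{1,n}^{k,o}),\gamma_{1,n}^{-k,o})$ for $\veps\in[0,1]$, which is admissible by (A4). PbP optimality (A3) forces $\varphi$ to attain its minimum at $\veps=0$, so its right derivative there is nonnegative. Using (A4) together with an integrability/dominated-convergence argument built on (A2) to justify interchanging $\frac{d}{d\veps}$ with ${\bf E}^{\sr{\circ}{\mathbb P}}$, this right derivative equals the left-hand side of \eqref{vaeqst_b}, yielding that inequality; expanding $\nabla_{u_{1,n}^k}L\cdot(\gamma_{1,n}^k-\gamma_{1,n}^{k,o})=\sum_{t=1}^n \nabla_{u_t^k}L\cdot(\gamma_t^k-\gamma_t^{k,o})$ gives \eqref{vaeqst_c}. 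The conditional refinement \eqref{vaeqst_con} is the delicate step: because $u_t^k=\gamma_t^k(I_t^k)$, every admissible perturbation $\gamma_t^k-\gamma_t^{k,o}$ is ${\cal F}^{I_t^k}$-measurable, so I would apply the tower property to replace $\nabla_{u_t^k}L$ by ${\bf E}^{\sr{\circ}{\mathbb P}}\{\nabla_{u_t^k}L\,|\,{\cal F}^{I_t^k}\}$ and then run a localization argument: if the conditional quantity paired with some feasible direction were strictly negative on a set $A\in{\cal F}^{I_t^k}$ of positive measure, a perturbation supported on $A$ would contradict \eqref{vaeqst_c}.

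To obtain the statements under the original measure (part (1.2)), I would transport each of \eqref{vaeqst_b}--\eqref{vaeqst_con} through the Radon-Nikodym derivative of Theorem~\ref{thm:RND_1}. Since $\frac{d{\mathbb P}^u}{d\sr{\circ}{\mathbb P}}\big|_{{\cal F}_t}=\Theta_t^u=\Lambda_t^u M_t^u$ and the time-$t$ integrands are ${\cal F}_t$-measurable, the identity ${\bf E}^{\sr{\circ}{\mathbb P}}\{Z\}={\bf E}^{{\mathbb P}^{\gamma^{(K),o}}}\{(\Theta_t^{\gamma^{(K),o}})^{-1}Z\}$---valid because the two measures are equivalent, the reverse inclusion coming from the construction \eqref{rev-rnd}---inserts exactly the weights $(\Theta_n^{\gamma^{(K),o}})^{-1}$ and $(\Theta_t^{\gamma^{(K),o}})^{-1}$ appearing in \eqref{vaeqst_b_orig}--\eqref{vaeqst_con_orig}, while conditioning on ${\cal F}^{I_t^k}$ preserves the form for \eqref{vaeqst_con_orig}.

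For sufficiency (part (2)), I would argue on $\sr{\circ}{\mathbb P}$ using convexity (A5): for any $\gamma_{1,n}^{(K)}\in{\cal U}_{1,n}^{(K)}$ the pointwise first-order inequality $L(\cdot,u_{1,n}^{(K)})\geq L(\cdot,\gamma_{1,n}^{(K),o})+\nabla_{u^{(K)}}L(\cdot,\gamma_{1,n}^{(K),o})\cdot(\gamma_{1,n}^{(K)}-\gamma_{1,n}^{(K),o})$ holds. Taking ${\bf E}^{\sr{\circ}{\mathbb P}}$ and decomposing the gradient term as $\sum_{k=1}^K\sum_{t=1}^n{\bf E}^{\sr{\circ}{\mathbb P}}\{\nabla_{u_t^k}L\cdot(\gamma_t^k-\gamma_t^{k,o})\}$ (the gradient being evaluated at the full optimizer, which is exactly the evaluation point in \eqref{vaeqst_con}), it suffices that every summand is nonnegative; conditioning on ${\cal F}^{I_t^k}$ and invoking \eqref{vaeqst_con}, with the outer expectation preserving the sign, gives this. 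Hence $J^{\sr{\circ}{\mathbb P}}(\gamma_{1,n}^{(K)})\geq J^{\sr{\circ}{\mathbb P}}(\gamma_{1,n}^{(K),o})$, which by Theorem~\ref{thm:p-off} is global optimality on ${\mathbb P}^u$. I expect the main obstacle to be the rigorous justification of differentiation under the expectation and the measure-theoretic localization yielding \eqref{vaeqst_con}; both hinge on the integrability furnished by (A2) and on the ${\cal F}^{I_t^k}$-measurability of the feasible directions, which is precisely the feature of the decentralized information pattern that the change of measure renders tractable.
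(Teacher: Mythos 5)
Your proposal is correct and follows essentially the same route as the paper: change of measure to $\sr{\circ}{\mathbb P}$ via Theorem~\ref{thm:p-off}, the one-parameter perturbation $\gamma^{k,\veps}$ and the G\^{a}teaux derivative at $\veps=0$ for \eqref{vaeqst_b}, componentwise expansion for \eqref{vaeqst_c}, reconditioning on ${\cal F}^{I_t^k}$ with the ${\cal F}^{I_t^k}$-measurability of the directions for \eqref{vaeqst_con}, the inverse Radon--Nikodym transport for (1.2), and the first-order convexity inequality for (2). Your localization argument (perturbations supported on a set where the conditional quantity is negative) simply makes explicit the step the paper leaves implicit when passing from the unconditional inequality \eqref{vaeqst_c} to the almost-sure conditional inequality \eqref{vaeqst_con}.
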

\begin{proof} (1) 
First we show (1.1). 
 Suppose   
$\gamma_{1,n}^{o, (K)}  \in {\cal U}_{1,n}^{(K)}$ is  PbP optimal. For any $\varepsilon \in [0,1]$, define $\gamma_{1,n}^{k,\veps} \tri  \gamma_{1,n}^{k,o} + \veps \big(\gamma_{1,n}^{k} -\gamma_{1,n}^{k,o} \big)  \in {\cal  U}_{1,n}^{k}, k=1, \ldots, K$. Then
we  have
 \begin{eqnarray}
J^{\sr{\circ}{\mathbb P}}(\gamma_{1,n}^{-k,o}, \gamma_{1,n}^{k,\veps})-J^{\sr{\circ}{\mathbb P}}(\gamma_{1,n}^{-k,o}, \gamma_{1,n}^{k,o}) \geq 0, \hso    \forall \varepsilon \in [0,1]. 
 \end{eqnarray}
The  G\^{a}teaux differential  of $J^{\sr{\circ}{\mathbb P}}(\gamma_{1,n}^{-k,o},\cdot)$ at $\gamma_{1,n}^{k,o}\in {\cal U}_{1,n}^k$   in the direction $\gamma_{1,n}^{k} -\gamma_{1,n}^{k,o}\in {\cal U}_{1,n}^k$  is computed from
\begin{equation*}
     \lim_{ \varepsilon \downarrow 0}   \frac{J^{\sr{\circ}{\mathbb P}}(\gamma_{1,n}^{-k,o}, \gamma_{1,n}^{k,\veps})-J^{\sr{\circ}{\mathbb P}}(\gamma_{1,n}^{-k,o}, \gamma_{1,n}^{k,o})}{\varepsilon} \equiv \frac{d}{d \varepsilon} J^{\sr{\circ}{\mathbb P}}(\gamma_{1,n}^{-k,o}, \gamma_{1,n}^{k,\veps})\Big\vert_{\varepsilon =0}
\end{equation*}
On measure $\sr{\circ}{\mathbb P}$,  $(x_{1,n}, y_{1,n}^{(M)})$ do not depend on $u^{(K)}$, hence  $\frac{d}{d \varepsilon}J^{\sr{\circ}{\mathbb P}}(\gamma_{1,n}^{-k,o}, \gamma_{1,n}^{k,\veps})\Big\vert_{\varepsilon =0}=\mbox{the right side of (\ref{vaeqst_b})}$. 
Writing (\ref{vaeqst_b}) component-wise we obtain (\ref{vaeqst_c}).
From (\ref{vaeqst_c}), for each $k$,  letting $\gamma_s^k=\gamma_s^{k,o}, \forall s \neq t$, and   reconditioning on ${\cal F}^{I_t^k}$  we have, 
\begin{align}
&{\bf  E}^{\sr{\circ}{\mathbb P}} \Big\{{\bf  E}^{\sr{\circ}{\mathbb P}} \Big\{  \nabla_{u_t^k} L(x_{1,n}, y_{1,n}^{(M)},\gamma_{1,n}^{-k,o}, \gamma_{1,t-1}^{o},u_{t}^{k},\gamma_{t+1,n}^{o})\big|_{u_t^k=\gamma_t^{k,o}}\nonumber \\
&. \big( \gamma_t^k-\gamma_t^{k,o}\big)\Big| {\cal F}^{I_t^k} \Big\}  \Big\}\geq 0, \;  \forall  \gamma_{t}^k  \in {\cal  U}_{t}^{k}, \;  \;  \forall (k,t).  \label{vaeqst_con_a}
 \end{align}
Since $\gamma_t^k-\gamma_t^{k,o}\in {\cal  U}_{t}^{k}$ is  ${\cal F}^{I_t^k}-$measurable we obtain (\ref{vaeqst_con}). The statements under  (1.2) follow from (1.1) using the inverse change of probability measure   $ d\sr{\circ}{\mathbb  P}=\big(\Theta_n^u\big)^{-1} \Big|_{{\cal F}_n} d{\mathbb  P}^{u}, \forall n$. In particular,   (\ref{vaeqst_b_orig}) follows from  (\ref{vaeqst_b})  by the inverse change of measure and (\ref{vaeqst_c_orig}) follows from  (\ref{vaeqst_c})  by using the martingale property of $\big(\Theta_n^u\big)^{-1}$ similar to the derivation leading to (\ref{i8-or-4}).  \\
(2) To show the stationary conditions of PbP optimality  (\ref{vaeqst_con}) imply global optimality, we make use of  convexity (A5), i.e., 
 we have for vectors $u_{1,n}^{k}\in {\mathbb A}_{1,n}^k$, 
\begin{align}
&L(x_{1,n}, y_{1,n}^{(M)},u_{1,n}^{(K)}) -  L(x_{1,n}, y_{1,n}^{(M)}, u_{1,n}^{(K),o})\nonumber \\
\geq  &\sum_{k=1}^K \nabla_{u_{1,n}^k}L(x_{1,n}, y_{1,n}^{(M)},u_{1,n}^{-k,o},u_{1,n}^{k})\big|_{u_{1,n}^{k}=u_{1,n}^{k,o}}\nonumber \\
&. \Big(u_{1,n}^{k} -u_{1,n}^{k,o}\Big), \; \forall u_{1,n}^{(K),o}\in {\mathbb A}_{1,n}^{(K)}, \;  \forall u_{1,n}^{(K)}\in {\mathbb A}_{1,n}^{(K)}. \label{suff-p2}
\end{align}
Then 
\begin{align}
&  J^{\sr{\circ}{\mathbb P}}(\gamma_{1,n}^{1, o}, \ldots, \gamma_{1,n}^{K,o})-  J^{\sr{\circ}{\mathbb P}}(\gamma_{1,n}^{1},\ldots,  \gamma_{1,n}^{K}) \\
  &\leq  -{\bf  E}^{\sr{\circ}{\mathbb P}}\Big\{  \sum_{k=1}^K \nabla_{u_{1,n}^k}L(x_{1,n}, y_{1,n}^{(M)},\gamma_{1,n}^{-k,o},u_{1,n}^{k})\big|_{u_{1,n}^{k}=\gamma_{1,n}^{k,o}}\nonumber \\
&. \Big(\gamma_{1,n}^{k} -\gamma_{1,n}^{k,o}\Big) \Big\},   \hso  \forall (\gamma_{1,n}^{1}, \ldots, \gamma_{1,n}^{K})\in {\cal U}_{1,n}^{(K)} \label{suff-p3}\\
 &=-\sum_{k=1}^K \sum_{t=1}^n{\bf  E}^{\sr{\circ}{\mathbb P}} \Big\{\nonumber \\
 &  \nabla_{u_t^k} L(x_{1,n}, y_{1,n}^{(M)},\gamma_{1,n}^{-k,o}, \gamma_{1,t-1}^{o}, u_{t}^{k},\gamma_{t+1,n}^{o})\big|_{u_t^k=\gamma_t^{k,o}}\nonumber \\
&. \big( \gamma_t^k-\gamma_t^{k,o}\big) \Big\}
\end{align}
\begin{align}
&=-  \sum_{k=1}^K \sum_{t=1}^n  {\bf  E}^{\sr{\circ}{\mathbb P}}\Big\{  \nonumber \\
& {\bf  E}^{\sr{\circ}{\mathbb P}}\Big\{
\nabla_{u_t^k} L(x_{1,n}, y_{1,n}^{(M)},\gamma_{1,n}^{-k,o}, \gamma_{1,t-1}^{o}, u_{t}^{k},\gamma_{t+1,n}^{o})\big|_{u_t^k=\gamma_t^{k,o}}\nonumber \\
&. \big( \gamma_t^k-\gamma_t^{k,o}\big) 
\Big|  {\cal F}^{I_t^k} \Big\} \Big\},   \;  \forall (\gamma_{1,n}^{1}, \ldots, \gamma_{1,n}^{K})\in {\cal U}_{1,n}^{(K)} \label{suff-p4}\\
&\leq 0,\hso  \forall (\gamma_{1,n}^{1}, \ldots, \gamma_{1,n}^{K})\in {\cal U}_{1,n}^{(K)} \hso \mbox{if   (\ref{vaeqst_con}) holds}. \label{suff-p5}
\end{align}
By (\ref{suff-p5}) then PbP optimality (\ref{vaeqst_con}) imply  global optimality. 
\end{proof}
 
 \begin{remark} (Some Generalizations) \\
 Theorem~\ref{st-ksm}, although, general,  it can  be  modified to cover alternative   conditional  PMs, such as, 
 \begin{align*}
S_{t+1}(dx_{t+1}|x_{1,t}, y_{1,t}^{(M)},  u_{1,t}^{(K)}),\;  Q_{t+1}^{(M)}(dy_{t+1}^{(M)} |x_{1,t}, y_{1,t}^{(M)}, u_{1,t}^{(K)})
\end{align*}
and alternative   recursive models to Example~\ref{ex:rec},  such as, 
\begin{align}
&x_{t+1}=f(t,x_{1,t}, y_{1,t}^{(M)},  u_{1,t}^{(M)}, w_{t+1}), \;  t=0, \ldots,  n-1, \label{NDM-3-g}\\
& y_{t+1}^{(M)}   = h^{(M)}(t,x_{1,t},y_{1,t}^{(M)},  u_{1,t}^{(K)}, v_{t+1}^{(M)}), \label{NDM-4-g} 
\end{align}
where the RV $(x_0,y_0^{(M)})$ is independent of RVs $(w_{1,n}, v_{1,n}^{(M)})$, and their PMs are fixed,  ${\bf P}_{x_{0}, y_0^{(M)}}(dx_{0}, dy_0^{(M)})$ and 
\begin{align*}
 {\bf P}_{w_{1,n}, v_{1,n}^{(M)}}(dw_{1,n},dv_{1,n}^{(M)} )= \prod_{t=1}^{n}  \Psi_t(dw_t)\Phi_t^{(M)}(dv_t^{(M)}).
\end{align*} 
\end{remark}

\section{PbP  Strategies of the Counterexample}
\label{sec::optimalstrategy}
In this section, we consider the counterexample \cite{WitsenhausenOriginal}, and  we invoke the change of measure of Section~\ref{discrete}, to  derive the two  optimal strategies (\ref{nl_1})-(\ref{eq:gamma2_in}), and to  prove a fixed point theorem.
 for existence and uniqueness of solutions to the integral equations.

\subsection{The Witsenhausen Counterexample and Related Literature}
\label{sect:count}

{\it Statement of \cite{WitsenhausenOriginal}.} Consider a probability space $\Big(\Omega, {\cal F}, {\mathbb P}^u\Big), u \tri (u_1, u_2)$ and  two independent random variables (RVs), defined on it with finite second moments\footnote{Witsenhausen \cite{WitsenhausenOriginal} considered the value $\sigma^2=1$.}
\begin{align}
{\mathbb P}^u: \left\{ \begin{array}{l} (x_0,v) : \Omega \rar {\mathbb R}^2,  \; {\bf  E}^{{\mathbb P}^u} (x_0)^2= \sigma_{x}^2 <\infty, \\
{{\bf  E}^{{\mathbb P}^u} (v)^2}<\infty,\;
\mbox{arbitrary  ${\bf P}_{x_0}$,  ${\bf P}_{v}$ with}, \\
{\bf  E}^{{\mathbb P}^u}\{x_0\} = {\bf  E}^{{\mathbb P}^u}\{v\}=0, \;  {\bf E}^{{\mathbb P}^u}(v)^2=\sigma^2.
\end{array} \right. 
 \label{dist}
\end{align}
The  stochastic optimal control problem is described below. 
\begin{align}
&\mbox{\it  State Equations.} \hso   x_1 = x_0 + u_1,\hst  x_2 = x_1 - u_2. \label{eq:stateeq}\\
&\mbox{\it Output Equations.}\hso 
y_0 = x_0,  \hst 
y_1 = x_1 +v.  \label{eq:opeq}\\
&\mbox{\it Aver. Payoff.} \hso 
J^{{\mathbb P}^u}(\gamma_1,\gamma_2) =   {\bf  E}^{{\mathbb P}^u}\Big\{ k^2 (u_1)^2 + (x_2)^2 \Big\}.    \label{eq:costeq}\\
&\mbox{{\it Admissible Strategies.} A tuple of  Borel measurable fuct.}\nonumber \\
&  \gamma\tri (\gamma_1, \gamma_2)\in {\cal A}_{ad}, \hso   
u_1 = \gamma_1(y_0),  \hst 
u_2 = \gamma_2(y_1) . \label{eq:admissiblestrategies}
\end{align}
%
%
Here, $ k^2 >0$ and  we take ${\cal F}\tri \sigma\{x_0,x_1,x_2,y_1, u_1, u_2\}$. 

\noi{\it Objective.} Given the distributions ${\bf P}_{x_0}(dx_0), {\bf P}_{v}(dv)$, i.e., ${\bf P}_{x_0,v}(dx_0,dv)={\bf P}_{x_0}(dx_0) {\bf P}_{v}(dv)$ such that (\ref{dist})-(\ref{eq:admissiblestrategies}) hold,  minimize over $ (\gamma_1, \gamma_2) \in {\cal A}_{ad}$ the average payoff,
\begin{align}
{\mathbb P}^{\gamma}: \hso J^{{\mathbb P}^{\gamma^o}}(\gamma^o) \tri  \inf_{\gamma\in {\cal A}_{ad}}  J^{{\mathbb P}^{\gamma}}(\gamma), \hso  \gamma \tri (\gamma_1, \gamma_2) .
 \label{eq:costeq_new}
\end{align}

\noi{\it Restatement of the Counterexample \cite{WitsenhausenOriginal}.} Witsenhausen considered  the equivalent re-formulation of problem (\ref{eq:costeq_new}) given by
\begin{align}
{\mathbb P}^{\gamma}:  \left\{ \begin{array}{l}
\overline{\gamma}_1(x_0)\tri x_0 + \gamma_1(x_0),  \hso y_1=\overline{\gamma}_1(x_0)+v, \\
 x_1= \overline{\gamma}_1(x_0), \; x_2=\overline{\gamma}_1(x_0)-\gamma_2(\overline{\gamma}_1(x_0)+ v), \\
J^{{\mathbb P}^{\gamma}}(\overline{\gamma}_1^o,\gamma_2^o) \tri \inf_{(\overline{\gamma}_1, \gamma_2)\in {\cal A}_{ad}} J^{{\mathbb P}^{\gamma}}(\overline{\gamma}_1,\gamma_2)  , \\
J^{{\mathbb P}^{\gamma}}(\overline{\gamma}_1,\gamma_2) \tri    {\bf E}^{{\mathbb P}^\gamma}\Big\{ k^2 \Big(x_0-\overline{\gamma}_1(x_0)\Big)^2 \\
+ \Big(\overline{\gamma}_1(x_0)-\gamma_2(\overline{\gamma}_1(x_0)+ v)\Big)^2 \Big\} \equiv  J^{{\mathbb P}^{\gamma}}(\gamma_1,\gamma_2).
\end{array} \right. 
 \label{res_1}
\end{align}
A more  general problem (not addressed in \cite{WitsenhausenOriginal}) is the following. 

\ \
\noi {\it Problem G,   $\pi^{{\mathbb P}^{\gamma}}(k^2,{\bf P}_{x_0},{\bf P}_{v})$}. The general problem,  $\pi^{{\mathbb P}}(k^2,{\bf P}_{x_0}, {\bf P}_{v})$, is to minimize $J^{{\mathbb P}^{\gamma}}({\gamma}_1,\gamma_2)$ over ${\cal A}_{ad}$  or equivalently $J^{{\mathbb P}^{\gamma}}(\overline{\gamma}_1,\gamma_2)$, subject to (\ref{dist})  with $(v, x_0)$ having arbitrary distributions, not necessarily Gaussian. \\

\noi The  problems, which are  investigated in \cite{WitsenhausenOriginal} are,   Problem \#1 with $v$ a Gaussian RV,   and Problem \#2 with $v$ and $x_0$ both  Gaussian RVs, as defined below.  

 {\it Problem \#1,   $\pi^{{\mathbb P}^{\gamma}}(k^2,{\bf P}_{x_0}, G(0,\sigma^2))$ of  \cite{WitsenhausenOriginal}}. The  problem,  $\pi^{{\mathbb P}^{\gamma}}(k^2,{\bf P}_{x_0}, G(0,\sigma^2))$, is to minimize $J^{{\mathbb P}^{\gamma}}({\gamma}_1,\gamma_2)$ over ${\cal A}_{ad}$  or  $J^{{\mathbb P}}(\overline{\gamma}_1,\gamma_2)$,  subject to (\ref{dist}) with $v$ a Gaussian RV with mean zero and variance $\sigma^2>0$, i.e., $v \in G(0,\sigma^2)$. In  \cite{WitsenhausenOriginal}, $\sigma^2=1$.  

{\it Problem  \#2, $\pi^{{\mathbb P}^{\gamma}}(k^2,G(0,\sigma_x^2), G(0,\sigma^2))$ of  \cite{WitsenhausenOriginal}}. The Gaussian problem, $\pi^{{\mathbb P}^{\gamma}}(k^2,G(0,\sigma_x^2),G(0,\sigma^2) )$  is to minimize $J^{{\mathbb P}^\gamma}({\gamma}_1,\gamma_2)$ over ${\cal A}_{ad}$  or $J^{{\mathbb P}^\gamma}(\overline{\gamma}_1,\gamma_2)$,  subject to (\ref{dist}) with $x_0\in G(0,\sigma_x^2), \sigma_x^2>0 $ and $v \in G(0,\sigma^2), \sigma^2>0$.  


\ \

For  Problem \#1,  $\pi^{{\mathbb P}^{\gamma}}(k^2,{\bf P}_{x_0}, G(0,1))$, and Problem \#2, $\pi^{{\mathbb P}^{\gamma}}(k^2,G(0,\sigma_x^2), G(0,1))$,  Witsenhausen \cite{WitsenhausenOriginal} derived the following  properties (re-confirmed some from our results). 

1)  Problem \#1,   $\pi^{{\mathbb P}^{\gamma}}(k^2,{\bf P}_{x_0}, G(0,1))$.  \\
1.1) An  optimal strategy $(\overline{\gamma}_1^o, \gamma_2^o)\in {\cal A}_{ad}$ exists  and satisfies $0 \leq J^{{\mathbb P}^{\gamma}}(\overline{\gamma}_1^o,\gamma_2^o) \leq \min\{1, k^2 \sigma_x^2\}$   (Theorem~1 in  \cite{WitsenhausenOriginal}).  \\
1.2)  If  ${\bf  E}^{{\mathbb P}^{\gamma}}\big(\overline{\gamma}_1(x_0)\big)^2<\infty$ and $\overline{\gamma}_1$ is fixed,  then $\gamma_2^o(y_1)={\bf E}^{{\mathbb P}^{\gamma}}\big\{\overline{\gamma}_1(x_0)\big|y_1\big\}$ (Lemma~3.(c) in  \cite{WitsenhausenOriginal}). \\
1.3) If    ${\bf P}_{x_0}$ is  restricted to a two-point symmetric  distribution with mass of $\frac{1}{2}$ at $x_0=\sigma_x>0$ and  $\frac{1}{2}$ at $x_0=-\sigma_x$, then the optimal strategies are $\overline{\gamma}_1(x_0)=\frac{a}{\sigma_x}x_0, \gamma_2(y_1)=a \tanh (a y_1)$ for some $a$ that satisfies a certain equation (Lemma~15 in  \cite{WitsenhausenOriginal}).

2)  Problem  \#2, $\pi^{{\mathbb P}^{\gamma}}(k^2,G(0,\sigma_x^2), G(0,1))$. \\ 
2.1) If $(\overline{\gamma}_1(\cdot),\gamma_2(\cdot))$ are restricted   to affine (linear) strategies with corresponding optimal payoff defined by 
\begin{align*}
J^{{\mathbb P}^{\gamma},wa} \tri  \inf \Big\{ J^{{\mathbb P}^{\gamma}}(\overline{\gamma}_1^{wa},\gamma_2^{wa}) \Big| (\overline{\gamma}_1^{wa},\gamma_2^{wa})=(\lambda\; x_0, \mu \; y_1)\Big\}
\end{align*}
for $(\mu, \lambda)\in {\mathbb R}^2$, then the optimal  strategies are  (Sect. 4 in \cite{WitsenhausenOriginal})
\begin{align}
 &   \overline{\gamma}_1^{wa}(x_0) = \lambda \;  x_0, , \hso 
    \gamma_2^{wa}(y_1) = \mu \; y_1, \;     \mu = \frac{\sigma_x^2 \lambda^2}{1+\sigma_x^2 \lambda^2}\\
& t=\sigma_x \lambda \; \mbox{a real root of} \; 
    (t-\sigma_x)(1+t^2)^2 + \frac{1}{k^2}t = 0.
\end{align}
2.2)  There exist parameter values $(k^2, \sigma_x^2)$ such that the optimal payoff $J^{{\mathbb P}^{\gamma^o}}(\overline{\gamma}_1^o, \gamma_2^o)$ is less than the optimal payoff when $(\overline{\gamma}_1(\cdot),\gamma_2(\cdot))$ are restricted to  affine strategies (Theorem~2 in  \cite{WitsenhausenOriginal}). In particular,  there exist parameters $(k,\sigma_x^2)\in (0,\infty)\times (0,\infty)$ such that the tuple of nonlinear sub-optimal strategies
\begin{align}
  &  \overline{\gamma}_1^{wn}(x_0)= \sigma_x \; \text{sgn}(x_0), \hso  
    \gamma_2^{wn}(y_1)= \sigma_x \tanh{(\sigma \;y_1)}, 
    \label{eq:WitsenhausenNonlinearLaws}
\end{align}
incur a payoff $J^{{\mathbb P}^{\gamma}, wn}\tri J^{{\mathbb P}^{\gamma}}(\overline{\gamma}_1^{wn},\gamma_2^{wn})$ which is smaller than the optimal payoff $J^{{\mathbb P}^{\gamma},wa}$ incur by all affine strategies (Theorem~2 1 in \cite{WitsenhausenOriginal}). 
That is, there exist parameters $(k,\sigma_x^2)$ such that $J^{{\mathbb P}^{\gamma}, wn} < J^{{\mathbb P}^{\gamma}, wa}$. It is also shown that  $J^{{\mathbb P}^{\gamma}, wn} < J^{{\mathbb P}^{\gamma}, wa}$ as $k \to 0$. However, this does not mean nonlinear strategies outperform affine strategies for all  $(k,\sigma_x^2)$. 

\ \

3)  Problem \# 1,  $\pi^{{\mathbb P}^{\gamma}}(k^2,F_{x_0}, G(0,1))$ and Problem \# 2, $\pi^{{\mathbb P}^{\gamma}}(k^2,G(0,\sigma_x^2), G(0,1))$ with emphasis on the latter received  immense attention
in the literature i.e., 
 \cite{BansalWhenisAffineOptimal,neuralnetworksolution,iterativecodingWit,hierarchialLee,McEany2,Grover2013,Submarmanian_newresults}. 
 Prior studies provide numerical techniques to solve Problems $\pi^{{\mathbb P}^{\gamma}}(k^2,F_{x_0}, G(0,1))$ and  $\pi^{{\mathbb P}}(k^2,G(0,\sigma_x^2), G(0,1))$, by using  properties of optimal strategies derived by Witsenhausen such as, $\gamma_2^o(y_1)={\mathbb E}^{{\mathbb P}^{\gamma_1, \gamma_2^o}}\big\{\overline{\gamma}_1(x_0)\big|y_1\big\}$ (Lemma~3.(c) in  \cite{WitsenhausenOriginal}) and  (\ref{eq:WitsenhausenNonlinearLaws}). 

\subsection{ Equivalent Optimization Problem on  the Reference Probability Space $\big(\Omega, {\cal F}, \sr{\circ}{\mathbb P}\big)$}
Theorem~\ref{thm:eq-c} follows from  Theorem~\ref{thm:RND_1} and Theorem~\ref{thm:p-off}.

%
%

\begin{theorem} (The Equivalent counterexample problems)\\
\label{thm:eq-c}
The original counterexample Problem G,   $\pi^{{\mathbb P}^{\gamma}}(k^2,{\bf P}_{x_0}, {\bf P}_{v_0})$, defined on  probability space $\big(\Omega, {\cal F}, {\mathbb P}^{\gamma}\big)$ (i.e.,  $J^{{\mathbb P}^{\gamma}}({\gamma}_1^o, \gamma_2^o)=J^{{\mathbb P}^{\gamma}}(\overline{\gamma}_1^o, \gamma_2^o)=\mbox{(\ref{eq:costeq_new}) $=$(\ref{res_1}) }$ subject to  (\ref{dist})-(\ref{eq:admissiblestrategies})) is equivalent to Problem G-Eqv,    $\pi^{\sr{\circ}{\mathbb P}}(k^2,{\bf P}_{x_0}, {\bf P}_{v_0})$ defined under the reference probability space $\big(\Omega, {\cal F}, \sr{\circ}{\mathbb P}\big)$, as stated below. 
\begin{align}
&\mbox{\it  Problem G-Eqv,    $\pi^{\sr{\circ}{\mathbb P}}(k^2,{\bf P}_{x_0}, {\bf P} _{v_0})$.} \label{prob_3}  \\
&\Big({\cal F}, \sr{\circ}{\mathbb P}\Big):  \left\{ \begin{array}{l}  J^{\sr{\circ}{\mathbb P}}(\overline{\gamma}_1^o,\gamma_2^o) =\inf_{(\overline{\gamma}_1, \gamma_2)\in {\cal A}_{ad}} 
{\bf E}^{\sr{\circ}{\mathbb P}}\Big\{  \Lambda^{\overline{\gamma}_1, \gamma_2}(x_0,y_1) \\
 .\Big[k^2 \Big(x_0-\overline{\gamma}_1(x_0)\Big)^2 + \Big(\overline{\gamma}_1(x_0)-\gamma_2(y_1)\Big)^2\Big] \Big\}, \\
  x_1=\overline{\gamma}_1(x_0), \: x_2= \overline{\gamma}_1(x_0)-\gamma_2(y_1), \; y_1=v,\\ 
\mbox{such that  the folowing hold: }
\end{array} \right. \nonumber \\
&(i)\; \mbox{$\Lambda^{\overline{\gamma}_1, \gamma_2}(x_0,y_1) $ is the Radon-Nikodym Derivative of the} \nonumber \\
&\mbox{original measure ${\mathbb P}^{\gamma}$ w.r.t.  reference measure $\sr{\circ}{\mathbb P}$ defined by  }    \nonumber \\
& {\mathbb P}^{\gamma}(dx_0, dx_1, dy_1)=\Lambda^{\overline{\gamma}_1, \gamma_2}(x_0,y_1)  \sr{\circ}{\mathbb P}(dx_0, dx_1, dy_1),  \label{RND_1w}  \\
&\Lambda^{\overline{\gamma}_1, \gamma_2}(x_0,y_1)\tri  \frac{Q^{\overline{\gamma}_1, \gamma_2}(dy_1|x_1, x_0)}{{\bf P}_{v}(dy_1)}, \hso  x_1=\overline{\gamma}_1(x_0),  \label{RND_2w} \\
&Q^{\overline{\gamma}_1, \gamma_2}(dy_1|x_1, x_0)
= {\bf P}_v\big(v: \overline{\gamma}_1(x_0)+v \in dy_1\big).\label{RND_3w} \\
&(ii)\;  \mbox{Under the reference measure $\sr{\circ}{\mathbb P}$, the RVs    $(x_0,y_1)$},\nonumber  \\
&\mbox{ are independent   with  distributions   $({\bf P}_{x_0}, {\bf P}_{y_1}={\bf P}_{v})$}.\nonumber  \\
&(iii) \; \mbox{The expectation  ${\bf E}^{ \sr{\circ}{\mathbb P}}\{\cdot\}$ is w.r.t.  the measure }\nonumber\\
&\mbox{ $\sr{\circ}{\mathbb P}(dx_0, dy_1)={\bf P}_{x_0, v}(dx_0, dy_1)= {\bf P}_{x_0}(dx_0) {\bf P}_{v}(dy_1)$}. 
\end{align}
\begin{align}
&\mbox{(iv) If the probability density functions exist then}\nonumber  \\
&Q^{\overline{\gamma}_1, \gamma_2}(dy_1|x_1, x_0)= f_{v}(y_1-\overline{\gamma}_1(x_0))dy_1, \; \\
&{\bf P}_{v}(dy_1)= f_{v}(y_1)dy_1, \\
&\Lambda^{\overline{\gamma}_1, \gamma_2}(x_0,y_1)=\frac{f_{v}(y_1-\overline{\gamma}_1(x_0))}{f_{v}(y_1)}   .   \label{RND_4w}
\end{align} 
\end{theorem}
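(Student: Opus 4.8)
The plan is to obtain Theorem~\ref{thm:eq-c} as a direct specialization of the general change-of-measure machinery, namely Theorem~\ref{thm:RND_1} and Theorem~\ref{thm:p-off}, to the two-stage counterexample. First I would fix the correspondence between the abstract model and the counterexample: the primitive state random variable is $x_0$, and since $x_1 = \overline{\gamma}_1(x_0)$ is a \emph{deterministic} function of $x_0$ once $\overline{\gamma}_1$ is fixed, there is no stochastic state-transition kernel to reweight, so the state factor $M_t^u$ of Theorem~\ref{thm:RND_1} is trivial, $M \equiv 1$, and the law of $x_0$ is common to both measures. The only randomness coupled to the strategy is the observation $y_1 = \overline{\gamma}_1(x_0) + v$, whose conditional law given $x_0$ is $Q^{\overline{\gamma}_1, \gamma_2}(dy_1 \mid x_1, x_0) = {\bf P}_v\big(v : \overline{\gamma}_1(x_0) + v \in dy_1\big)$. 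Hence this is precisely the observation-only change of measure described in Remark~\ref{rem:rem_2}.

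Second, I would construct the reference measure $\sr{\circ}{\mathbb P}$ under which $x_0$ retains its original law ${\bf P}_{x_0}$ while $y_1$ becomes independent of $x_0$ with law ${\bf P}_{y_1} = {\bf P}_v$; this gives items $(ii)$ and $(iii)$, with $\sr{\circ}{\mathbb P}(dx_0, dy_1) = {\bf P}_{x_0}(dx_0){\bf P}_v(dy_1)$. The Radon-Nikodym derivative produced by Theorem~\ref{thm:RND_1} then collapses to the single factor $\Lambda^{\overline{\gamma}_1, \gamma_2}(x_0, y_1) = dQ^{\overline{\gamma}_1, \gamma_2}(\cdot \mid x_1, x_0)/d{\bf P}_v$, which is item $(i)$; in the density case, substituting the translated density yields $\Lambda^{\overline{\gamma}_1, \gamma_2}(x_0,y_1) = f_v(y_1 - \overline{\gamma}_1(x_0))/f_v(y_1)$, establishing item $(iv)$ as in \eqref{RND_4w}.

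Third, I would invoke Theorem~\ref{thm:p-off} with $M_t^u \equiv 1$ to conclude, for every fixed admissible pair, the payoff identity $J^{{\mathbb P}^{\gamma}}(\overline{\gamma}_1, \gamma_2) = {\bf E}^{\sr{\circ}{\mathbb P}}\big\{\Lambda^{\overline{\gamma}_1, \gamma_2}(x_0, y_1)\,[k^2(x_0 - \overline{\gamma}_1(x_0))^2 + (\overline{\gamma}_1(x_0) - \gamma_2(y_1))^2]\big\}$. As a sanity check one verifies this directly in the density case by the change of variable $v = y_1 - \overline{\gamma}_1(x_0)$, which cancels the denominator $f_v(y_1)$ against the reference density $f_v(y_1)\,dy_1$ and recovers the original expectation ${\bf E}^{{\mathbb P}^{\gamma}}\{\cdot\}$ of \eqref{res_1}. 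Since the admissible class ${\cal A}_{ad}$ is identical on both sides and the objectives agree pointwise in $(\overline{\gamma}_1, \gamma_2)$, taking infima yields equality of the optimal values $J^{{\mathbb P}^{\gamma}}(\overline{\gamma}_1^o,\gamma_2^o) = J^{\sr{\circ}{\mathbb P}}(\overline{\gamma}_1^o,\gamma_2^o)$ and the coincidence of their minimizers.

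The only genuine technical point, and the place I expect the main care to be needed, is the absolute-continuity hypothesis $Q^{\overline{\gamma}_1, \gamma_2}(\cdot \mid x_0) \ll {\bf P}_v$ that Theorem~\ref{thm:RND_1} requires to hold for \emph{every} admissible $\overline{\gamma}_1$, so that $\Lambda^{\overline{\gamma}_1, \gamma_2}$ is well defined and $\sr{\circ}{\mathbb P}$-integrable with ${\bf E}^{\sr{\circ}{\mathbb P}}\{\Lambda^{\overline{\gamma}_1, \gamma_2}\} = 1$. Because $Q$ is merely ${\bf P}_v$ translated by $\overline{\gamma}_1(x_0)$, this reduces to requiring that all translates of ${\bf P}_v$ be absolutely continuous with respect to ${\bf P}_v$ itself; for the Gaussian noise of Problems~\#1 and \#2 this holds automatically since $f_v > 0$ everywhere, and integrability of the payoff follows from the finite-second-moment assumption \eqref{dist}. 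I would explicitly flag that for a general noise law ${\bf P}_v$ lacking a strictly positive density this step needs an additional standing assumption.
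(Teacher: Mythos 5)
Your proposal is correct and follows essentially the same route as the paper: the paper's own proof is a one-line invocation of Theorem~\ref{thm:RND_1} and Theorem~\ref{thm:p-off} with the observation that $M_t^u$ is not used (only the observation kernel is reweighted), which is exactly your specialization with $M\equiv 1$. Your additional details --- the explicit change-of-variable sanity check and the flag on the absolute-continuity hypothesis $Q^{\overline{\gamma}_1,\gamma_2}(\cdot\mid x_0)\ll {\bf P}_v$ for every admissible $\overline{\gamma}_1$ --- are sound elaborations of what the paper leaves implicit.
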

\begin{proof} 
Due to Theorem~\ref{thm:RND_1}, and Theorem~\ref{thm:p-off}, by using  ${\mathbb  P}^u(B)={\mathbb  P}(B)=\int_B   \Lambda_t^u(\omega)  d\sr{\circ}{ {\mathbb P}}(\omega)$, $\Lambda_t^u(\omega)=\Lambda^{\overline{\gamma}_1, \gamma_2}(x_0,y_1),    \forall 
B\in {\cal  F}$, i.e., we do not use $M_t^u$ to change the measure of $(x_0,x_1,x_2)$.  
\end{proof}

\subsection{Optimal Strategies of the Counterexample }
\label{sec:opt}
In Theorem~\ref{thm;w-g}, we determine the equations satisfied by the  optimal strategy $(\overline{\gamma}_1^o,\gamma_2^o)=(x_0+{\gamma}_1^o,\gamma_2^o)$ using the equivalent  
 Problem G-Eqv,    $\pi^{\sr{\circ}{\mathbb P}}(k^2,{\bf P}_{x_0}, {\bf P}_{v_0})$.

\begin{theorem} (Stationary conditions-Problem G-Eqv)\\
\label{thm;w-g}
Consider   Problem G-Eqv,    $\pi^{\sr{\circ}{\mathbb P}}(k^2,{\bf P}_{x_0}, {\bf P}_{v_0})$, of Theorem~\ref{thm:eq-c}, 
 and assume the following. \\
 (a.i) The density of the RND is 
\begin{align*}
\Lambda^{u_1, u_2}(x_0,y_1)\tri  \frac{Q^{u_1, u_2}(dy_1|x_1, x_0)}{{\bf P}_{v}(dy_1)}=\frac{f_{v}(y_1-x_0-u_1)}{f_{v}(y_1)}
\end{align*}
i.e., the probability density functions exist, and $f_{v}(y_1)>0, f_{v}(y_1-x_0-u_1)>0, \forall (x_0,u_1, y_1)$. \\
(a.ii) $ \Lambda^{{u}_1,u_2}(x_0,y_1) $ is continuously differentiable in  $({u}_1, u_2)$ uniformly over $(x_0,y_1)$ and the  derivative is an  element of $L^2$. \\
(a.iii) The Gateaux differential of  $J^{\sr{\circ}{\mathbb P}}(\cdot,\cdot): L^2 \times L^2 \rar [0,\infty)$ at  $({\gamma}_1^o, \gamma_2^o)$ in the direction      $({\gamma}_1, \gamma_2) -({\gamma}_1^o, \gamma_2^o   )\in L^2 \times L^2$ exists. \\
Define, 
\begin{align*}
L^{u_1,u_2}(x_0, y_1) \tri  \Lambda^{u_1, u_2}(x_0, y_1)  \Big(k^2  u_1^2 + (x_0+u_1-u_2)^2\Big).
\end{align*}
and introduce  the derivatives, 
\begin{align}
&\nabla_{u_1} L^{u_1,u_2}(x_0, y_1) =2\Lambda^{u_1,u_2}(x_0, y_1) \Big( k^2  u_1 + (x_0+u_1-u_2)\Big)\nonumber \\
&+\nabla_{u_1}\Lambda^{u_1,u_2}(x_0, y_1) \Big(k^2  u_1^2 + (x_0+u_1-u_2)^2\Big), \\
&\nabla_{u_2}L^{u_1,u_2}(x_0, y_1) =-2\Lambda^{u_1,u_2}(x_0, y_1) (x_0+u_1-u_2)
\end{align}
The following hold. \\
(i) The two conditional stationary condition  under probability measure $\sr{\circ}{\mathbb P}$ hold, 
\begin{flalign}
 &  {\mathbb E}^{\sr{\circ}{\mathbb P}} \Big\{    \nabla_{u_1}L^{{u}_1, \gamma_2^{o}}(x_0, y_1)\big\vert_{{u}_1={\gamma}_1^o(x_0)}  
         \Big({\gamma}_1(x_0)-{\gamma}_1^o(x_0)\Big) \Big| x_0 \Big\} \nonumber  
         \\  & \hst     \geq 0, \hso \sr{\circ}{\mathbb P}\big|_{x_0}-a.s., \hso  \forall {\gamma}_1\in L^2 , 
         \label{w10-pbp-1-p_a}\\
  &    {\mathbb E}^{\sr{\circ}{\mathbb P}} \Big\{ \nabla_{u_2}L^{{\gamma}_1^o, u_2}(x_0, y_1)\big\vert_{u_2=\gamma_2^o(y_1)} 
         \Big(\gamma_2(y_1)-\gamma_2^o(y_1)\Big) \Big| y_1 \Big\} \nonumber 
         \\  
         & \hst \geq 0, \hso   \sr{\circ}{\mathbb P}\big|_{y_1}-a.s., \hso \forall \gamma_2\in L^2 .      \label{w10-pbp-1-p_b}
\end{flalign}
(ii) The two conditional stationary condition  under probability measure ${\mathbb P}^{\gamma}$ hold, 
\begin{flalign}
 &  {\mathbb E}^{{\mathbb P}^{\gamma^o}} \Big\{ \Big(\Lambda^{\gamma_1^o, \gamma_2^o}(x_0,y_1) \Big)^{-1}   \nabla_{u_1}L^{{u}_1, \gamma_2^{o}}(x_0, y_1)\big\vert_{{u}_1={\gamma}_1^o(x_0)}  \nonumber  
         \\  &
      .   \Big({\gamma}_1(x_0)-{\gamma}_1^o(x_0)\Big) \Big| x_0 \Big\}   \geq 0, \; {\mathbb P}^{\gamma^o}\big|_{x_0}-a.s., \;  \forall {\gamma}_1\in L^2 , 
         \label{w10-pbp-1-o_a}\\
  &    {\mathbb E}^{{\mathbb P}^{\gamma^o}} \Big\{ \Big(\Lambda^{\gamma_1^o, \gamma_2^o}(x_0,y_1) \Big)^{-1} \nabla_{u_2}L^{{\gamma}_1^o, u_2}(x_0, y_1)\big\vert_{u_2=\gamma_2^o(y_1)}  \nonumber 
         \\  
         & 
  .       \Big(\gamma_2(y_1)-\gamma_2^o(y_1)\Big) \Big| y_1 \Big\}\geq 0,   {\mathbb P}^{{\gamma^o}}\big|_{y_1}-a.s., \forall \gamma_2\in L^2 .      \label{w10-pbp-1-o_b}
\end{flalign}
\end{theorem}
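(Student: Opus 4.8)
The plan is to treat Problem G--Eqv as the two--decision--maker ($K=2$) instance of the general decentralized problem of Theorem~\ref{st-ksm}: control $1$ carries the information pattern $\sigma\{x_0\}$ and control $2$ the pattern $\sigma\{y_1\}$, and under the reference measure $\sr{\circ}{\mathbb P}$ the pair $(x_0,y_1)$ is independent with $\sr{\circ}{\mathbb P}(dx_0,dy_1)={\bf P}_{x_0}(dx_0){\bf P}_v(dy_1)$. Hence the cost becomes the unconditional integral $J^{\sr{\circ}{\mathbb P}}(\gamma_1,\gamma_2)={\mathbb E}^{\sr{\circ}{\mathbb P}}\{L^{u_1,u_2}(x_0,y_1)\}$, with $L$ carrying the single Radon--Nikodym factor $\Lambda^{u_1,u_2}$ of Theorem~\ref{thm:eq-c} (no $M$--factor, since the law of $x_0$ is left unchanged). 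I would derive (i) by applying person--by--person optimality and Gateaux differentiation exactly as in Theorem~\ref{st-ksm}.(1.1), and then obtain (ii) from (i) by the reverse change of measure.

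First I would fix the optimal pair $({\gamma}_1^o,\gamma_2^o)$ and, for control $1$, form the admissible variation ${\gamma}_1^\veps\tri {\gamma}_1^o+\veps({\gamma}_1-{\gamma}_1^o)$, $\veps\in[0,1]$. Person--by--person optimality gives $J^{\sr{\circ}{\mathbb P}}({\gamma}_1^\veps,\gamma_2^o)-J^{\sr{\circ}{\mathbb P}}({\gamma}_1^o,\gamma_2^o)\ge 0$; dividing by $\veps$ and letting $\veps\downarrow 0$, assumptions (a.ii)--(a.iii) justify interchanging the Gateaux derivative with the expectation, yielding the unconditional inequality ${\mathbb E}^{\sr{\circ}{\mathbb P}}\{\nabla_{u_1}L^{u_1,\gamma_2^{o}}|_{u_1={\gamma}_1^o}({\gamma}_1-{\gamma}_1^o)\}\ge 0$, where the pointwise derivative $\nabla_{u_1}L$ is the stated one (the product rule applied to $L=\Lambda\cdot(\text{cost})$ producing the $\nabla_{u_1}\Lambda$ term). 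Since the direction ${\gamma}_1-{\gamma}_1^o$ is $\sigma\{x_0\}$--measurable, I would recondition on $x_0$ by the tower property and factor the direction out, obtaining \eqref{w10-pbp-1-p_a}. The identical argument with a variation in the second control and conditioning on $y_1$ gives \eqref{w10-pbp-1-p_b}; here it is essential that $\Lambda^{u_1,u_2}$ depends on $u_1$ but \emph{not} on $u_2$, so the control--$2$ variation leaves the measure unchanged and $\nabla_{u_2}L$ carries no $\nabla_{u_2}\Lambda$ term.

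To pass to (ii) I would invoke Bayes' rule for conditional expectations under the equivalent measures $d{\mathbb P}^{\gamma^o}=\Lambda^{\gamma_1^o,\gamma_2^o}\,d\sr{\circ}{\mathbb P}$ of Theorem~\ref{thm:eq-c}. For any sub--$\sigma$--algebra ${\cal G}\in\{\sigma\{x_0\},\sigma\{y_1\}\}$ and any integrable $Z$,
\begin{align*}
{\mathbb E}^{\sr{\circ}{\mathbb P}}\{Z\mid {\cal G}\}
=\frac{{\mathbb E}^{{\mathbb P}^{\gamma^o}}\{(\Lambda^{\gamma_1^o,\gamma_2^o})^{-1}Z\mid {\cal G}\}}
{{\mathbb E}^{{\mathbb P}^{\gamma^o}}\{(\Lambda^{\gamma_1^o,\gamma_2^o})^{-1}\mid {\cal G}\}},
\end{align*}
and, since $\Lambda^{\gamma_1^o,\gamma_2^o}>0$ by (a.i), the denominator is strictly positive, so the sign of the conditional expectation is preserved. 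Taking $Z$ to be the integrand of \eqref{w10-pbp-1-p_a} and then of \eqref{w10-pbp-1-p_b} converts them into \eqref{w10-pbp-1-o_a} and \eqref{w10-pbp-1-o_b}, the positive normalizing denominator being absorbed without altering the inequality. The mutual absolute continuity of the two measures, together with the fact that the $x_0$--marginal is common to both (because $\int\Lambda\,{\bf P}_v(dy_1)=1$), keeps the ``a.s.'' qualifiers consistent across the two measures.

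The main obstacle is the rigorous interchange of the Gateaux derivative with the $\sr{\circ}{\mathbb P}$--expectation in the first step, because $L^{u_1,u_2}$ contains the density $\Lambda^{u_1,u_2}$, so a variation in control $1$ perturbs the running cost and the measure simultaneously; controlling the difference quotient uniformly requires precisely the continuous differentiability of $\Lambda$ with $L^2$ derivative assumed in (a.ii) and the existence of the Gateaux differential in (a.iii). A secondary point to verify is the measurability used in reconditioning, namely that each perturbation direction lies in the corresponding information $\sigma$--algebra so that it factors out of the conditional expectation, and that the Bayes normalizing factor is positive both $\sr{\circ}{\mathbb P}$-- and ${\mathbb P}^{\gamma^o}$--a.s.
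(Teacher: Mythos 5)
Your proposal is correct and follows essentially the same route as the paper: the paper's own proof is a one-line appeal to Theorem~\ref{st-ksm} applied to Problem G-Eqv, and your argument simply unpacks that application (PbP variation and Gateaux differentiation under $\sr{\circ}{\mathbb P}$, reconditioning on the information $\sigma$-algebras $\sigma\{x_0\}$ and $\sigma\{y_1\}$, then the reverse change of measure via the conditional Bayes rule of Theorem~\ref{thm:RND-c} to pass to ${\mathbb P}^{\gamma^o}$). Your added observations --- that only the $\Lambda$-factor is needed, that $\Lambda$ is independent of $u_2$, and that positivity of the Bayes denominator preserves the inequalities --- are all consistent with the paper's setup.
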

\begin{proof} The statements follow directly from Theorem~\ref{st-ksm} applied to  Problem G-Eqv,    $\pi^{\sr{\circ}{\mathbb P}}(k^2,{\bf P}_{x_0}, {\bf P}_{v_0})$,    (\ref{prob_3})-(\ref{RND_4w}). 
\end{proof}

\ \

In Theorem~\ref{thm;w-sp-1}, we determine the exact equations satisfied by the  optimal strategy $\gamma^o= ({\gamma}_1^o,\gamma_2^o)$ for 
 Problem \# 1.

\begin{theorem} (Optimal strategies for  Problem \# 1,    $\pi^{{\mathbb P}^{\gamma}}(k^2,{\bf P}_{x_0}, G(0,\sigma^2))$)\\
\label{thm;w-sp-1}
Consider the statement of Theorem~\ref{thm;w-g}, for the special case of  Problem \# 1,    $\pi^{{\mathbb P}^{\gamma}}(k^2,{\bf P}_{x_0}, G(0,\sigma^2))$.
The following hold. \\  
 (i) The density of the RND is given by 
\begin{align*}
\Lambda^{u_1, u_2}(x_0,y_1)
= \frac{\exp\left\{-\frac{(y_1 -x_0-u_1)^2}{2\sigma^2}\right\}}{\exp\left\{-\frac{y_1^2}{2\sigma^2}\right\}}>0, \hso \forall (x_0, u_1, y_1). 
\end{align*}
(ii) The optimal  strategies $({\gamma}_1^o, \gamma_2^o)  \in {\cal A}_{ad}$ exist.\\
(iii)  The   optimal control strategies $(\gamma_1^o, \gamma_2^o)$ on the  original probability measure ${\mathbb P}^{\gamma}$ satisfy the equations, 
\begin{align}
&{\bf  E}^{{\mathbb P}^{\gamma^o}}\Big\{\Big[ \frac{y_1-x_0-u_1}{\sigma^2} \Big(k^2  u_1^2 + (x_0+u_1-\gamma_2^o(y_1))^2  \Big) + 2k^2u_1 \nonumber \\
& +2(x_0+u_1-\gamma_2^o(y_1)) \Big]   \Big \vert x_0 \Big\}   \Big\vert_{u_1=\gamma_1^o(x_0)}=0,   {\mathbb P}^{\gamma^o}\big\vert_{ x_0}-a.s,  \label{ws16} \\ 
&{\bf  E}^{{\mathbb P}^{\gamma^o}}\Big\{x_0+\gamma_1^o(y_0)-u_2\Big\vert y_1 \Big\}\Big\vert_{u_2=\gamma_2^o(y_1)}=0,   {\mathbb P}^{\gamma^o}\big\vert_{ y_1}-a.s.  \label{ws17}
\end{align}
where $y_1=x_0+{\gamma}_1^o(x_0)+v$. 
Moreover, 
\begin{align}
{\gamma}_1^o(x_0) =&- \frac{1}{k^2}{\bf E}^{{\mathbb P}^{\gamma^o}}\Big\{x_0+{\gamma}_1^o(x_0) - \gamma_2^o(y_1) \Big|x_0 \Big\}\nonumber \\
&
-\frac{1}{2k^2\sigma^2} {\bf E}^{{\mathbb P}^{\gamma^o}}\Big\{   \Big(y_1 -x_0  -{\gamma}_1^o(x_0)\Big)\nonumber \\
&. \Big(x_0+{\gamma}_1^o(x_0)  - \gamma_2^o(y_1)\Big)^2\Big|x_0 \Big\},
 \label{nl_1_th}   \\
\gamma_2^o(y_1)=&   {\bf E} ^{{\mathbb   P}^{\gamma^o}}\Big\{ x_0+{\gamma}_1^o(x_0)    \Big| y_1 \Big\}. \label{nl_2_th}
\end{align}
Equivalently, the optimal control strategies $(\overline{\gamma}_1^o, \gamma_2^o)=(x_0+\gamma_1^o, \gamma_2^o)$ satisfy (\ref{nl_1}),   (\ref{nl_2})  and 
 the two integral equations  (\ref{eq:gamma1bar_in}),   (\ref{eq:gamma2_in}).
\end{theorem}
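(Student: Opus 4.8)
The plan is to specialize Theorem~\ref{thm;w-g} to Gaussian noise and reduce its two conditional stationary inequalities to explicit equations by direct differentiation of the Radon--Nikodym density. For part~(i) I would substitute the Gaussian density $f_v(y)=(2\pi\sigma^2)^{-1/2}\exp(-y^2/2\sigma^2)$ into the ratio (\ref{RND_4w}) of Theorem~\ref{thm:eq-c} with $\overline{\gamma}_1(x_0)=x_0+u_1$, giving the stated quotient of exponentials; positivity is immediate because a Gaussian density never vanishes, so assumptions (a.i)--(a.ii) of Theorem~\ref{thm;w-g} hold. Part~(ii), existence of an optimal pair in ${\cal A}_{ad}$, I would obtain by invoking property~1.1 (Theorem~1 of \cite{WitsenhausenOriginal}), which already settles existence for Problem~\#1 with Gaussian $v$.

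The substance is part~(iii). The key computation is the logarithmic derivative of the density, $\nabla_{u_1}\Lambda^{u_1,u_2}(x_0,y_1)=\Lambda^{u_1,u_2}(x_0,y_1)\,\sigma^{-2}(y_1-x_0-u_1)$, so that the factor $\Lambda^{\gamma_1^o,\gamma_2^o}$ appearing in both $\nabla_{u_1}L$ and $\nabla_{u_2}L$ at the optimum is exactly cancelled by the weight $(\Lambda^{\gamma_1^o,\gamma_2^o})^{-1}$ in the stationary inequalities (\ref{w10-pbp-1-o_a})--(\ref{w10-pbp-1-o_b}). Since the admissible directions $\gamma_1-\gamma_1^o$ and $\gamma_2-\gamma_2^o$ exhaust $L^2$ and are measurable with respect to the conditioning variable, they factor out of the conditional expectations, and the two-sided inequality forces each conditional expectation of the gradient to vanish almost surely, yielding (\ref{ws16}) and (\ref{ws17}).

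To reach the reduced forms I would exploit the model relation $y_1=x_0+\gamma_1^o(x_0)+v$ under ${\mathbb P}^{\gamma^o}$, where $v$ is independent of $x_0$ with zero mean; hence $\mathbb{E}^{{\mathbb P}^{\gamma^o}}\{y_1-x_0-\gamma_1^o(x_0)\mid x_0\}=0$, which annihilates the $k^2u_1^2$ contribution inside the first bracket of (\ref{ws16}) and leaves only the $(x_0+\gamma_1^o-\gamma_2^o)^2$ term. Solving the resulting identity for $\gamma_1^o$ produces (\ref{nl_1_th}), while (\ref{ws17}) is precisely (\ref{nl_2_th}). Finally I would translate the conditional expectations into integrals: for (\ref{eq:gamma2_in}) I apply Bayes' rule (Theorem~\ref{thm:RND-c}) to the joint law with density proportional to $f_v(y_1-\overline{\gamma}_1^o(\xi))\,{\bf P}_{x_0}(d\xi)$, and for (\ref{eq:gamma1bar_in}) I replace the conditioning on $x_0$ by integration in $\zeta=y_1$ against the Gaussian centred at $\overline{\gamma}_1^o(x_0)$, recovering (\ref{nl_1}), (\ref{nl_2}) and the two integral equations.

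The main obstacle I expect is the rigorous passage from the variational inequalities to the pointwise stationarity equations. This needs the Gateaux differentiability and the $L^2$ differentiability/integrability hypotheses (a.ii)--(a.iii) to justify interchanging the derivative with the expectation, together with the observation that the directions form a linear space so the inequality is effectively an equality; the existence assertion (ii) is not self-contained and rests entirely on \cite{WitsenhausenOriginal}.
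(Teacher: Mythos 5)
Your proposal is correct and follows essentially the same route as the paper: specialize Theorem~\ref{thm;w-g} to the Gaussian density, invoke Witsenhausen's Theorem~1 for existence, use the linearity of the direction space to turn the variational inequalities into conditional-expectation equalities, and then use ${\bf E}^{{\mathbb P}^{\gamma^o}}\{v\,|\,x_0\}=0$ to reduce to \eqref{nl_1_th}--\eqref{nl_2_th} and the integral equations via Bayes' rule. The only cosmetic difference is that you start from the original-measure conditions \eqref{w10-pbp-1-o_a}--\eqref{w10-pbp-1-o_b} and cancel $\Lambda$ against $\Lambda^{-1}$ directly, whereas the paper first writes the equalities under $\sr{\circ}{\mathbb P}$ and then converts by the conditional Bayes rule; the two are equivalent and rest on the same logarithmic-derivative computation $\nabla_{u_1}\Lambda=\Lambda\,\sigma^{-2}(y_1-x_0-u_1)$.
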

\begin{proof} (i) Follows from  Theorem~\ref{thm;w-g}, with ${\bf P}_{v_0}= G(0,\sigma^2)$.  (ii) This is due to
Witsenhausen \cite{WitsenhausenOriginal} (Theorem 1). 
 (iii) By Theorem~\ref{thm;w-g}.(iii),  (\ref{w10-pbp-1-p_a}),  and the existence of the optimal  strategies $({\gamma}_1^o, \gamma_2^o)  \in {\cal A}_{ad}$, we must have that inequality holds with equality, to deduce, 
\begin{flalign}
&{\bf  E}^{\sr{\circ}{\mathbb P}}\Big\{ \Lambda^{u_1, \gamma_2^o}(x_0, y_1)\Big[ \frac{y_1-x_0-u_1}{\sigma^2} \Big(k^2  u_1^2 +   \label{ws14}\\
&(x_0+u_1-\gamma_2^o(y_1))^2  \Big)+ 2k^2u_1+ \nonumber \\
&2(x_0+u_1-\gamma_2^o(y_1)) \Big]   \Big \vert x_0 \Big\}   \Big\vert_{u_1=\gamma_1^o(x_0)}=0, \hso  \sr{\circ}{\mathbb P}\big\vert_{ x_0}-a.s
\nonumber
\end{flalign}
Similarly, by Theorem~\ref{thm;w-g}.(iii),  (\ref{w10-pbp-1-p_b}), we have
\begin{flalign}
{\bf E}^{\sr{\circ}{\mathbb P}}\Big\{ \Lambda^{\gamma_1^o, u_2}(x_0, y_1)  \Big[(-2)(x_0+\gamma_1^o(x_0) \nonumber \\  -u_2)\Big)  \Big] 
\Big\vert y_1 \Big\}\Big\vert_{u_2=\gamma_2^o(y_1)} =0, \hso \sr{\circ}{\mathbb P}\big\vert_{ y_1}-a.s.
\label{ws15}
\end{flalign}
Since for any RV $X \in L^1(\Omega, {\cal F}, {\mathbb P}^{\gamma^o}) $,  conditional expectations with respect to a sub-sigma algebra ${\cal G}\subset {\cal F} $,  are related by the  reverse measure transformation,     ${\bf  E}^{{\mathbb P}^{\gamma^o}}\big\{   X\big\vert {\cal G} \big\}=\frac{  {\bf  E}^{\sr{\circ}{\mathbb P}}\big\{ \frac{ {\mathbb P}^{\gamma^o}(d\omega)}{\sr{\circ}{\mathbb P}(d\omega)} X
\big\vert {\cal G} \big\}}{ {\bf E}^{\sr{\circ}{\mathbb P}}\big\{ \frac{ {\mathbb P}^{\gamma^o}(d\omega)}{\sr{\circ}{\mathbb P}(d\omega)} 
\big\vert {\cal G} \big\}}-$a.s., with  ${\mathbb E}^{\sr{\circ}{\mathbb P}}\big\{ \frac{ {\mathbb P}^{\gamma^o}(d\omega)}{\sr{\circ}{\mathbb P}(d\omega)} 
\big\vert {\cal G} \big\}>0-$a.s., from 
 \eqref{ws14}, \eqref{ws15} we obtain  \eqref{ws16}, \eqref{ws17}.  Then from \eqref{ws16}, \eqref{ws17} by simple algebra we obtain \eqref{nl_1_th},  \eqref{nl_2_th}.  
%
%
%
%
\end{proof}


%

\subsection{Fixed Point of Optimal Strategies of the Counterexample }
\label{sec:existenceanduniqueness}
For Problem \# 1,    $\pi^{{\mathbb P}^{\gamma}}(k^2,{\bf P}_{x_0}, G(0,\sigma^2))$,  we  show that   the optimal strategies $(\overline{\gamma}_1^o, \gamma_2^o)$ are fixed point solutions of the    integral equations (\ref{eq:gamma1bar_in}),   (\ref{eq:gamma2_in}), thus establishing existence and uniqueness of solutions in appropriate spaces.  

 
Consider Theorem~\ref{thm;w-sp-1}, and define the nonlinear integral operator  by,
\begin{align}
&F:L^2\times L^2\rightarrow L^2\times L^2, \hso F(\overline{\gamma}_1,\gamma_2)\tri  \begin{pmatrix}
\overline{\gamma}_1(x_0) \\
\gamma_2(y_1)
\end{pmatrix},   \label{fix-1}   \\
&\mbox{$(\overline{\gamma}_1(x_0),\gamma_2(y_1))$ satisfy equations (\ref{eq:gamma1bar_in}),   (\ref{eq:gamma2_in}).}
\label{eqq1}
\end{align}
Define,  
\begin{align*}
&f(x_0, \overline{\gamma}_1,\gamma_2)\tri  \begin{pmatrix}
f_1(x_0, \overline{\gamma}_1,\gamma_2)\\
f_2(x_0, \overline{\gamma}_1,\gamma_2)
\end{pmatrix}, \hso \mbox{where} \\
&   f_1(x_0, \overline{\gamma}_1,\gamma_2) =  -\frac{1}{k^2} 
   \Big\{\frac{1}{2\sigma^2}\big(\zeta-\overline{\gamma}_1(x_0)\big)\big(\overline{\gamma}_1(x_0)-\gamma_2(\zeta)\big)^2\\
&+\big(\overline{\gamma}_1(x_0)-\gamma_2(\zeta)\big)\Big\}\frac{1}{\sqrt{2\pi\sigma}}\exp\Big\{-\frac{\big(\zeta-\overline{\gamma}_1(x_0)\big)^2}{2\sigma^2}\Big\} \\
  &  f_2(x_0, \overline{\gamma}_1,\gamma_2) = \overline{\gamma}_1(\xi) \frac{   \exp\Big\{- \frac{ \big(y_1 -\overline{\gamma}_1(\xi)\big)^2}{2 \sigma^2}\Big\}     }{    \int_{-\infty}^\infty  \exp\Big\{- \frac{ \big(y_1 -\overline{\gamma}_1(\xi)\big)^2}{2 \sigma^2}\Big\} {\bf P}_{x_0}(d\xi)    }.
\end{align*}
The nonlinear operator $F(\overline{\gamma}_1,\gamma_2)$ can be expressed as,
\begin{align}
&F(\overline{\gamma}_1,\gamma_2)=
\begin{pmatrix}
x_0\\
0
\end{pmatrix} -\int f(x_0,\overline{\gamma}_1,\gamma_2) \circ  d\mu, \label{integ}\\
&d\mu=\begin{pmatrix}
d\zeta \\
{\bf P}_{x_0}(d\xi)
\end{pmatrix} \label{fix-2}
\end{align}
where the notation "$\circ$" indicates  that $f_1$ is integrated w.r.t. $d \zeta$ and $f_2$ w.r.t. to ${\bf P}_{x_0}(d\xi)$.  
Note that $f(x_0,\cdot)$ is continuously differentiable in $(\overline{\gamma}_1,\gamma_2)$.

\begin{theorem} (Fixed point solution of optimal strategies of the counterexample,   Problem \# 1,    $\pi^{{\mathbb P}^{\gamma}}(k^2,{\bf P}_{x_0}, G(0,\sigma^2))$)\\
Consider Problem \# 1,    $\pi^{{\mathbb P}^{\gamma}}(k^2,{\bf P}_{x_0}, G(0,\sigma^2))$, and the optimal strategies  $(\overline{\gamma}_1,\gamma_2)$ satisfying the  two integral equations of Theorem~\ref{thm;w-sp-1}.(iii), i.e., \eqref{nl_1_th},  \eqref{nl_2_th}. \\
The following hold. \\
(i) The nonlinear integral operator defined by (\ref{fix-1})-(\ref{fix-2})  is Frechet differentiabe in $\overline{\gamma}_1$ and $\gamma_2$,  i.e.,  there exists a continuous linear operator $L_{\overline{\gamma}_1,\gamma_2}:L^2\times L^2\rightarrow L^2\times L^2 $ such that for all $h \in L^2\times L^2,\, h=\begin{pmatrix}
h_1\\
h_2
\end{pmatrix} $,  
\begin{align}
&\frac{\| F(\overline{\gamma}_1+h_1,\gamma_2+h_2)-F(\overline{\gamma}_1,\gamma_2)-L_{\overline{\gamma}_1,\gamma_2}h\|_2}{\|h\|_2} \rightarrow 0 \hspace{3mm} \nonumber \\ 
&{\rm as} \hspace{4mm} {\|h\|_2}\rightarrow 0. \label{fix-3}
\end{align}
(ii)  There exists a unique fixed point solution  $(\overline{\gamma}_1^o,\gamma_2^o)\in L^2 \times L^2$ of the integral  equations (\ref{eq:gamma1bar_in}),   (\ref{eq:gamma2_in}). 
\end{theorem}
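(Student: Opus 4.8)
The plan is to reduce both claims to the Banach fixed-point theorem, using the Fr\'echet derivative of part (i) to certify that $F$ is a contraction on a suitable closed ball. Existence of \emph{some} solution is already guaranteed by Witsenhausen's Theorem~1 together with Theorem~\ref{thm;w-sp-1}, which shows that the optimal strategies satisfy \eqref{eq:gamma1bar_in}--\eqref{eq:gamma2_in}; the content of the present theorem is therefore the differentiability in (i) and the \emph{uniqueness} in (ii), with the contraction simultaneously re-deriving existence.

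For (i), since $F(\overline{\gamma}_1,\gamma_2)=\begin{pmatrix} x_0\\ 0\end{pmatrix}-\int f(x_0,\overline{\gamma}_1,\gamma_2)\circ d\mu$ and the constant term is independent of the arguments, the candidate Fr\'echet derivative is the integral operator obtained by differentiating $f$ under the integral sign,
\[
L_{\overline{\gamma}_1,\gamma_2}h=-\int D_{(\overline{\gamma}_1,\gamma_2)}f(x_0,\overline{\gamma}_1,\gamma_2)\,h\circ d\mu ,
\]
where $D f$ is the Jacobian of $f=(f_1,f_2)$ in its functional arguments. Because $f_1$ is a polynomial multiple of the Gaussian $\exp\{-(\zeta-\overline{\gamma}_1)^2/2\sigma^2\}$, each partial derivative of $f_1$ in $\overline{\gamma}_1$ and $\gamma_2$ is again (polynomial)$\times$(Gaussian), hence bounded and integrable in $\zeta$; and $f_2$ is a smooth quotient whose denominator $\int \exp\{-(y_1-\overline{\gamma}_1(\xi))^2/2\sigma^2\}\,{\bf P}_{x_0}(d\xi)$ is strictly positive by assumption (a.i), so the quotient rule produces bounded partials once the denominator is bounded below, which holds on any bounded ball. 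Boundedness of $L_{\overline{\gamma}_1,\gamma_2}$ on $L^2\times L^2$ then follows from Cauchy--Schwarz and finiteness of the Gaussian moments. To verify \eqref{fix-3} I would use the second-order Taylor expansion with integral remainder, $f(\cdot+h)-f(\cdot)-Df\cdot h=\int_0^1(1-s)\,D^2 f(\cdot+sh)[h,h]\,ds$, and bound the $L^2$ norm of the remainder by $C\|h\|_2^2$ using uniform bounds on $D^2 f$ (again (polynomial)$\times$(Gaussian), uniformly over the ball), so that the ratio in \eqref{fix-3} is $O(\|h\|_2)\to 0$.

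For (ii), I would first exhibit a closed (hence convex) ball $B_R=\{(\overline{\gamma}_1,\gamma_2):\|\overline{\gamma}_1\|_2^2+\|\gamma_2\|_2^2\le R^2\}$ that $F$ maps into itself: the first component of $F$ is $x_0$ minus an integral whose $L^2$ norm is controlled by $\frac{1}{k^2}$ times Gaussian moments of $(\overline{\gamma}_1,\gamma_2)$, while the second component is a conditional-mean operator, which is an $L^2$ contraction of $\overline{\gamma}_1$ by Jensen; choosing $R$ in terms of $\sigma_x,\sigma,k$ yields invariance. Then, by the mean value inequality for Fr\'echet differentiable maps on the convex set $B_R$, $\|F(u)-F(v)\|_2\le\big(\sup_{w\in B_R}\|L_w\|_{op}\big)\|u-v\|_2$, so it suffices to show $\sup_{w\in B_R}\|L_w\|_{op}<1$. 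The diagonal block arising from $f_1$ carries the factor $1/k^2$ and is small; the block from $f_2$ is the linearization of a conditional expectation, of norm at most one, with off-diagonal couplings bounded by Gaussian moments. Assembling these estimates gives the strict bound, so $F$ is a contraction on the complete metric space $B_R$ and the Banach fixed-point theorem delivers a unique fixed point $(\overline{\gamma}_1^o,\gamma_2^o)$, which by Theorem~\ref{thm;w-sp-1} is the optimal pair.

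The main obstacle is precisely the contraction estimate $\sup_{w\in B_R}\|L_w\|_{op}<1$: because the conditional-mean operator underlying $f_2$ is only a \emph{non-strict} $L^2$ contraction (an orthogonal projection has norm exactly one), the strict decay must be extracted from the $1/k^2$ weighting in $f_1$, from the smoothing by the Gaussian kernel, and from controlling the coupling between the two equations. Making these three effects combine into an operator norm strictly below one, uniformly over the invariant ball and for the intended range of $(k,\sigma)$, is the delicate step; the Fr\'echet differentiability of part (i) is comparatively routine given the rapid Gaussian decay of all relevant kernels.
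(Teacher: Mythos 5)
Your treatment of part (i) is essentially the paper's: differentiate $f$ under the integral sign, observe that every partial derivative is a polynomial times a Gaussian (for $f_1$) or a smooth quotient with strictly positive denominator (for $f_2$), and conclude that $L_{\overline{\gamma}_1,\gamma_2}h=\int \nabla_{(\overline{\gamma}_1,\gamma_2)}f\,h\circ d\mu$ is a bounded integral operator on $L^2\times L^2$; the paper verifies the Fr\'echet limit by appeal to Lagrange's (mean value) theorem where you use a second-order Taylor remainder, which is a slightly more explicit version of the same argument. For part (ii) you and the paper both invoke the contraction principle, but you add an invariant-ball step (showing $F$ maps a ball $B_R$ into itself) that the paper omits, and you are more candid about where the argument is thin.

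The genuine issue is the one you flag yourself: the strict estimate $\sup_{w}\|L_w\|_{op}<1$ is never actually established, and your proposal ends by naming it as ``the delicate step'' rather than carrying it out. The paper disposes of it in one sentence, asserting that the Lipschitz constant ``can be made less than $1$ by either increasing $k^2$ in the payoff function or by using a weighted $L^2\times L^2$-norm by simply dividing by $(\ell+1)$.'' Neither device closes your gap: increasing $k^2$ proves the theorem only for a restricted parameter range (whereas the statement is for Problem \#1 with arbitrary $k^2>0$), and uniformly rescaling the norm of a space does not change the Lipschitz constant of a self-map of that space, so the re-norming trick as stated does nothing. Your observation that the $f_2$ block is the linearization of a conditional expectation, hence only a non-strict $L^2$ contraction, is exactly why a uniform strict bound cannot come for free. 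The paper's second, alternative argument --- checking that unity is not an eigenvalue of the kernel $\nabla_{(\overline{\gamma}_1,\gamma_2)}f(x_0,0,0)$ and invoking the inverse function theorem in Banach spaces, which yields a unique solution only for $x_0$ sufficiently small --- is a route your proposal does not consider and would be worth adding, since it at least gives a local existence--uniqueness statement without the unproven global contraction bound. In short: your plan mirrors the paper's, is arguably better organized (invariant ball, explicit block structure of $L_w$), but leaves open the same quantitative estimate that the paper's own proof does not rigorously supply.
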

\begin{proof} (i)
First, we note that 
for $h_1\in L^2, h_2 \in L^2$,
\begin{align}
F(\overline{\gamma}_1+h_1,\gamma_2+h_2)-F(\overline{\gamma}_1,\gamma_2)= \nonumber \\ 
\int \Big[f(x_0,\overline{\gamma}_1+h_1,\gamma_2+h_2) -f(x_0,\overline{\gamma}_1,\gamma_2)\Big]  \circ d\mu . \label{eq:3}
\end{align}
To show that $F(\overline{\gamma}_1,\gamma_2)$ is Frechet differentiable, we need to show that there exists a continuous linear operator $L_{\overline{\gamma}_1,\gamma_2}:L^2\times L^2\rightarrow L^2\times L^2 $ such that for all $h \in L^2\times L^2,\, h=\begin{pmatrix}
h_1\\
h_2
\end{pmatrix} $ we have that (\ref{fix-3}) holds. 
In this case, we put $F'(\overline{\gamma}_1,\gamma_2)=L_{\overline{\gamma}_1,\gamma_2}$. By Lagrange's theorem \cite{Lusternik} as $\|h\|_2 \rightarrow0$ we have 
\begin{align}
f(x_0,\overline{\gamma}_1+h_1,&\gamma_2+h_2)-f(x_0,\overline{\gamma}_1,\gamma_2)\nonumber \\
&\rightarrow \nabla_{(\overline{\gamma}_1, \gamma_2) } f   (x_0,\overline{\gamma}_1,\gamma_2)h \label{eq:5}
\end{align}
where $ \nabla_{(\overline{\gamma}_1, \gamma_2)} f (x_0,\cdot)$
 is the partial derivative of $f(x_0, \cdot)$ with respect to $ (\overline{\gamma}_1, \gamma_2)$, 
and the derivatives are easily computed.
given as follows. 
\begin{align}
&\nabla_{\overline{\gamma}_1}     f_1=-\frac{1}{k^2}\Big\{-\frac{1}{2\sigma^2}(\overline{\gamma}_1-\gamma_2)^2+\frac{1}{\sigma^2}(\xi-\overline{\gamma}_1)(\overline{\gamma}_1-\gamma_2) +1 \Big\} \nonumber \\
&.\frac{1}{\sqrt{2\pi}\sigma}\exp\Big\{-\frac{(\xi-\overline{\gamma}_1)^2}{2\sigma^2}\Big\} -\frac{1}{k^2}\Big\{\frac{1}{2\sigma^2}(\xi-\overline{\gamma}_1)(\overline{\gamma}_1-\gamma_2)^2 \nonumber \\
&+(\overline{\gamma}_1-\gamma_2)\Big\} \left( \frac{\xi-\overline{\gamma}_1}{2\sigma^2}\right) \frac{1}{\sqrt{2\pi}\sigma}\exp\Big\{-\frac{(\xi-\overline{\gamma}_1)^2}{2\sigma^2}\Big\}.
\end{align}
\begin{align}
& \nabla_{\gamma_2}  f_1=-\frac{1}{k^2}\Big\{-\frac{1}{\sigma^2}(\xi-\overline{\gamma}_1)(\overline{\gamma}_1-\gamma_2)-1\Big\} \nonumber \\ 
&.\frac{1}{\sqrt{2\pi}\sigma}\exp\Big\{-\frac{(\xi-\overline{\gamma}_1)^2}{2\sigma^2}\Big\}, \\
\nabla_{\overline{\gamma}_1}  f_2&=\frac{\exp \Big \{-\frac{(y_1-\overline{\gamma}_1)^2}{2\sigma^2} \Big\}}{\int_{\infty}^{\infty}\exp \Big \{-\frac{(y_1-\overline{\gamma}_1)^2}{2\sigma^2}\Big\}{\bf P}_{x_0}(d\xi)} \nonumber \\
&+\overline{\gamma}_1\Bigg[ \frac{1}{\sigma^2}(y_1-\overline{\gamma}_1)\exp \Big \{-\frac{(y_1-\overline{\gamma}_1)^2}{2\sigma^2}\Big\} \nonumber\\
&.\int_{-\infty}^{\infty}\exp \Big \{-\frac{(y_1-\overline{\gamma}_1)^2}{2\sigma^2}\Big\}{\bf P}_{x_0}(d\xi) \nonumber \\
&-\exp \Big \{-\frac{(y_1-\overline{\gamma}_1)^2}{2\sigma^2}\Big\} \int_{-\infty}^{\infty}\frac{1}{\sigma^2}(y_1-\overline{\gamma}_1) \nonumber\\
&. \exp \Big \{-\frac{(y_1-\overline{\gamma}_1)^2}{2\sigma^2}\Big\}{\bf P}_{x_0}(d\xi) \Bigg]\Big/ \nonumber \\ 
&\left( \int_{-\infty}^{\infty}\exp \Big \{-\frac{(y_1-\overline{\gamma}_1)^2}{2\sigma^2}\Big\}{\bf P}_{x_0}(d\xi) \right)^2,\\
\nabla_{\gamma_2} f_2 &= 0. 
\end{align}
Expressions (\ref{eq:3}) and (\ref{eq:5}) imply that:
\begin{align}
F'(\overline{\gamma}_1,{\gamma}_2) h =L_{{\overline{\gamma}_1,\gamma_2}}h=\int \nabla_{(\overline{\gamma}_1, \gamma_2)} f(x_0,\overline{\gamma}_1,\gamma_2) h\circ  d\mu .
\end{align}
Thus $L_{{\overline{\gamma}_1,\gamma_2}}$ is an integral operator from $L^2\times L^2$ into $L^2\times L^2$ with kernel $\nabla_{(\overline{\gamma}_1, \gamma_2)}f(x_0,\overline{\gamma}_1,\gamma_2)$. $L_{\overline{\gamma}_1,\gamma_2}$ is clearly a continuous bounded linear operator, since the induced norm satisfies
\begin{align}
& \|L_{\overline{\gamma}_1,\gamma_2}\| = \sup_{\substack{\|h\|_2\leq1\\h \in L^2\times L^2}}\Big \|\int_{- \infty}^\infty \nabla_{(\overline{\gamma}_1, \gamma_2)} f(x_0,\overline{\gamma}_1,\gamma_2)h \circ d\mu\Big\|_2 \nonumber \\
&\leq  \left\|  \begin{pmatrix} \int_{- \infty}^\infty \left[ \left(\nabla_{\overline{\gamma}_1}     f_1\right)^2 + \left(\nabla_{\gamma_2}     f_1\right)^2 \right] d\zeta \\ 
\int_{- \infty}^\infty \left[ \left(\nabla_{\overline{\gamma}_1}     f_2\right)^2 + \left(\nabla_{\gamma_2}     f_2\right)^2 \right] {\bf P}_{x_0}(d\xi) \end{pmatrix} \right\|_2 < \infty
\end{align}
Moreover, since $F(\overline{\gamma}_1,\gamma_2)$ is a continuous differentiable (in the sense of Frechet) operator from $L^2\times L^2$ into $L^2\times L^2$, then  it satisfies the Lipschitz condition:
\begin{align*}
\|F(\tilde{\overline{\gamma}}_1,\tilde{\gamma}_2)-F(\overline{\gamma}_1,\gamma_2)\|_2 \leq \ell\left(\|\tilde{\overline{\gamma}}_1-\overline{\gamma}_1\|_2+\|\tilde{\gamma}_2-\gamma_2\|_2 \right) \\
\ell \tri \sup_{0\leq\theta\leq1}\Big\|F'\Big( \theta\tilde{\overline{\gamma}}_1+(1-\theta)\gamma_1, \theta\tilde{\gamma}_2+(1-\theta)\gamma_2\Big) \Big\|
\end{align*}
Note that the Lipschitz constant can be made less than 1 by either increasing $k^2$ in the payoff function or by using  a weighted $L^2\times L^2$-norm by simply dividing by $(\ell+1)$, for example. The expressions of the two integral equations  follow from the contraction principle,  which guarantees the existence of a fixed point  $( \overline{\gamma}_1,\gamma_2)$ for the nonlinear operator $F(\overline{\gamma}_1,\gamma_2)$.

Alternatively, existence and uniqueness of solutions of the two integral equations (\ref{eq:gamma1bar_in}),   (\ref{eq:gamma2_in}) can also be proven by invoking the inverse function theorem in Banach spaces \cite{Lusternik}, as follows.
By  using  (\ref{integ}) we write,
\begin{eqnarray}
\begin{pmatrix}
\overline{\gamma}_1^o(x_0)\\   
\gamma_2^o(y_1)  
\end{pmatrix} +
\int f(x_0,\overline{\gamma}_1,\gamma_2) \circ d\mu =
\begin{pmatrix}
x_0\\
0
\end{pmatrix}.  
\label{IVTexp}
\end{eqnarray}
Moreover,  $f(x_0, 0,0)  = 0, \nabla_{(\overline{\gamma}_1, \gamma_2)} f(x_0,0,0) \neq 0$. 
By straightforward computations, we can verify that unity is not an eigenvalue of the kernel $\nabla_{(\overline{\gamma}_1, \gamma_2)}f(x_0,0,0)$, i.e., the linear integral equation $z   + \int \nabla_{(\overline{\gamma}_1, \gamma_2)}f(x_0,0,0) z d\mu = 0 $
does not have a non vanishing solution.  Now note that the left-hand-side (LHS) of (\ref{IVTexp}) vanishes for 
$\overline{\gamma}_1=\gamma_2 \equiv 0$, and the derivative $F'(\overline{\gamma}_1,\gamma_2)$ exists in a neighborhood of $0$ is bounded and continuous there.  By virtue of the inverse function theorem \cite{Lusternik} the inverse $[F'(0,0)]^{-1}$ exists, and therefore (\ref{IVTexp}) admits a unique solution for every $x_0$ sufficiently small.
\end{proof}


\section{Numerical Evaluation of PbP Strategies of the Counterexample}
\label{sect:num-eva}
In this Section, we determine the  optimal strategies $(\overline{\gamma}_1^o, \gamma_2^o)$ and corresponding payoff by  invoking a  numerical integration method to solve the    integral equations (\ref{eq:gamma1bar_in}),   (\ref{eq:gamma2_in}), and     
  compare our findings  to other  payoffs  reported in the literature. 

\subsection{Numerical Integration of Nonlinear Integral Equations}
\label{subsec::numericalimplementation}
Since  the  exponential function in  (\ref{eq:gamma1bar_in})   is  Gaussian, we employ the Gauss Hermite Quadrature (GHQ) method.  
First, we briefly review the GHQ method. The approximate numerical integration formula for a function $f(x)$ with values in  $(-\infty,\infty)$ with the weight function $e^{-x^2}$ is \cite{GHQGreenwood}:

\begin{equation}
    \int_{-\infty}^{\infty} f(x) e^{-x^2} dx \approx \Sigma_{i=1}^n f(x_{i,n}) \lambda_{i,n}
    \label{eq:GaussQuadratureRule}
\end{equation}
\noindent where the abscissas $\{x_{i,n}\}$ are the roots of the $n^{th}$ order Hermite polynomial
\begin{align*}
    H_n(x) = -\sqrt{2}^n h_n( \sqrt{2}x) = 0, \hso h_n(x) = e^{\frac{x^2}{2}} \frac{d^n(e^{\frac{-x^2}{2}})}{dx^n}
\end{align*}
where 
 the weights $\{\lambda_{i,n}\}$ are given by
\begin{align}
    \lambda_{i,n} = \frac{\sqrt{\pi}2^{n+1}{n!}}{H_n^{'}(x_{i,n})^2}, \hso H_n^{'}(x) = 2nH_{n-1}(x).
\end{align}
For $n\leq 10$ and higher orders, the zeros $x_{i,n}$ of the Hermite polynomial $H_n(x)$ and the weights $\lambda_{i,n}$ are calculated in \cite{GHQGreenwood,MaxwellGHQ}.   By  \cite{GolubGHQ}  the Gauss quadrature rule \eqref{eq:GaussQuadratureRule} is exact for all continuous 
$f$ that are polynomials of degree $\leq 2n-1$. 


Consider the first strategy  (\ref{eq:gamma1bar_in}) and the change of variables as $z = {\frac{\zeta-\overline{\gamma}_1(x_0)}{\sqrt{2\sigma^2}}}$ and $dz = {\frac{d\zeta}{\sqrt{2\sigma^2}}}$. Then,
\begin{equation*}
    \begin{split}
      &  \overline{\gamma}_1(x_0) = x_0 - \frac{1}{\sqrt{\pi}k^2} \bigintssss_{-\infty}^{\infty} \Big\{\frac{z}{\sqrt{2\sigma^2}}  (\overline{\gamma}_1(x_0) - 
        \gamma_2(\sqrt{2\sigma^2}z \\
    &    + \overline{\gamma}_1(x_0)))^2 +  (\overline{\gamma}_1(x_0) - \gamma_2(\sqrt{2\sigma^2}z + \overline{\gamma}_1(x_0)))\Big\} e^{-z^2} dz
    \end{split}
\end{equation*}
Using GHQ approximation \eqref{eq:GaussQuadratureRule},
\begin{equation}
    \begin{split}
      &  \overline{\gamma}_1(x_0) \approx  x_0 - \frac{1}{\sqrt{\pi}k^2} \sum_{i=1}^n \Big\{\frac{z_i}{\sqrt{2\sigma^2}}   (\overline{\gamma}_1(x_0) - \gamma_2(\sqrt{2\sigma^2}z_i   \\
        &+ \overline{\gamma}_1(x_0)))^2    + (\overline{\gamma}_1(x_0) - \gamma_2(\sqrt{2\sigma^2}z_i + \overline{\gamma}_1(x_0)))\Big\} \lambda_i. 
    \end{split}
    \label{eq:gamma1bar_GHQ}
\end{equation}
Similarly, for  (\ref{eq:gamma2_in})  with the change of variable $z = \frac{\xi}{\sqrt{2\sigma_x^2}}$,  then
\begin{align}
    &\gamma_2(y_1) = \frac{\bigintssss_{-\infty}^{\infty} \overline{\gamma}_1(\xi) \exp{(-\frac{(y_1-\overline{\gamma}_1(\xi))^2}{2\sigma^2})}  \exp{(-\frac{\xi^2}{2\sigma_x^2})} d\xi}{\int_{-\infty}^{\infty} \exp{(-\frac{(y_1-\overline{\gamma}_1(\xi))^2}{2\sigma^2})} \exp{(-\frac{\xi^2}{2\sigma_x^2})} d\xi} \nonumber \\
    &= \frac{\bigintssss_{-\infty}^{\infty} \overline{\gamma}_1(\sqrt{2\sigma_x^2} z) \exp{(-\frac{(y_1-\overline{\gamma}_1(\sqrt{2\sigma_x^2} z))^2}{2\sigma^2})}  e^{-z^2} \sqrt{2\sigma_x^2} dz}{\int_{-\infty}^{\infty} \exp{(-\frac{(y_1-\overline{\gamma}_1(\sqrt{2\sigma_x^2} z))^2}{2\sigma^2})} e^{-z^2} \sqrt{2\sigma_x^2} dz} \nonumber \\
    & \approx  \frac{\sum_{i=1}^{n} \overline{\gamma}_1(\sqrt{2\sigma_x^2} z_i) \exp{(-\frac{(y_1-\overline{\gamma}_1(\sqrt{2\sigma_x^2} z_i))^2}{2\sigma^2})} \lambda_i}{\sum_{i=1}^{n} \exp{(-\frac{(y_1-\overline{\gamma}_1(\sqrt{2\sigma_x^2} z_i))^2}{2\sigma^2})} \lambda_i}.
    \label{eq:gamma2GHQ}
\end{align}
Consider \eqref{eq:gamma1bar_GHQ},  since $z_i$ and $\lambda_i$ are the (known) nodes and weights, for certain $x_0 \in \mathbb{R}$, the unknowns are $\overline{\gamma}_1(x_0)$ and $\gamma_2(\sqrt{2\sigma^2}z_i + \overline{\gamma}_1(x_0)))$ (whose argument is in turn a function of $\overline{\gamma}_1(x_0)$). In order to solve this equation, we can employ the expression for $\gamma_2(y_1)$ from \eqref{eq:gamma2GHQ} with  $y_1 = \sqrt{2\sigma^2}z_i + \overline{\gamma}_1(x_0)$,
\begin{align}
    & \gamma_2(\sqrt{2\sigma^2}z_i + \overline{\gamma}_1(x_0))) 
     \approx \bigg( \sum_{i=1}^{n} \big( \overline{\gamma}_1(\sqrt{2\sigma_x^2} z_i) \label{eq:gamma2ofalot}  \\
     & \exp{(-\frac{(\sqrt{2\sigma^2}z_i + \overline{\gamma}_1(x_0)-\overline{\gamma}_1(\sqrt{2\sigma_x^2} z_i))^2}{2\sigma^2})} \lambda_i \big) \bigg) \bigg/ \nonumber \\
     & \bigg( \sum_{i=1}^{n} \Big( \exp{(-\frac{(\sqrt{2\sigma^2}z_i + \overline{\gamma}_1(x_0)-\overline{\gamma}_1(\sqrt{2\sigma_x^2} z_i))^2}{2\sigma^2})} \lambda_i \Big) \bigg). \nonumber
\end{align}
Substituting  $\gamma_2(\sqrt{2\sigma^2}z_i + \overline{\gamma}_1(x_0)))$ from \eqref{eq:gamma2ofalot} in (\ref{eq:gamma1bar_GHQ}), then 
\begin{align}
    \begin{split}
       & \overline{\gamma}_1(x_0) \approx x_0 - \frac{1}{\sqrt{\pi}k^2} \sum_{i=1}^n \lambda_i \Bigg\{\frac{z_i}{\sqrt{2\sigma^2}} \\ &\Bigg(\overline{\gamma}_1(x_0) - 
        \bigg( \sum_{j=1}^{n} \big( \overline{\gamma}_1(\sqrt{2\sigma_x^2} z_j)  \\
      &\exp{(-\frac{(\sqrt{2\sigma^2}z_i + \overline{\gamma}_1(x_0)-\overline{\gamma}_1(\sqrt{2\sigma_x^2} z_j))^2}{2\sigma^2})} \lambda_j \big) \bigg) \bigg/  \\
      &\bigg( \sum_{j=1}^{n} \Big( \exp{(-\frac{(\sqrt{2\sigma^2}z_i + \overline{\gamma}_1(x_0)-\bar{\gamma}_1(\sqrt{2\sigma_x^2} z_j))^2}{2\sigma^2})} \lambda_j \Big) \bigg)\Bigg)^2 \\
       & + \Bigg(\overline{\gamma}_1(x_0) - \bigg( \sum_{j=1}^{n} \big( \overline{\gamma}_1(\sqrt{2\sigma_x^2} z_j) \\
      &\exp{(-\frac{(\sqrt{2\sigma^2}z_i + \overline{\gamma}_1(x_0)-\overline{\gamma}_1(\sqrt{2\sigma_x^2} z_j))^2}{2\sigma^2})} \lambda_j \big) \bigg) \bigg/ \\
      &\bigg( \sum_{j=1}^{n} \Big( \exp{(-\frac{(\sqrt{2\sigma^2}z_i + \overline{\gamma}_1(x_0)-\overline{\gamma}_1(\sqrt{2\sigma_x^2} z_j))^2}{2\sigma^2})} \lambda_j \Big) \bigg)\Bigg)\Bigg\} .
    \end{split}
    \label{eq:bignonlineareq}
\end{align}
While $x_0 \in \mathbb{R}$ and $\sqrt{2\sigma_x^2} z_i$ are known, $\overline{\gamma}_1(x_0)$ and $\overline{\gamma}_1(\sqrt{2\sigma_x^2} z_i)$ are unknown. Let $s_i = \overline{\gamma}_1(\sqrt{2\sigma_x^2} z_i), \ \forall i$. Then  \eqref{eq:bignonlineareq}  contains $(n+1)$  unknowns, i.e., $n$ $s_i$'s and $\overline{\gamma}_1(x_0)$. 
\begin{align*}
    & \overline{\gamma}_1(x_0) \approx x_0 - \frac{1}{\sqrt{\pi}k^2} \sum_{i=1}^n \lambda_i \Bigg\{\frac{z_i}{\sqrt{2\sigma^2}} \\ &\Bigg(\overline{\gamma}_1(x_0) - 
        \bigg( \sum_{j=1}^{n} \big( s_j 
      \exp{(-\frac{(\sqrt{2\sigma^2}z_i + \overline{\gamma}_1(x_0)-s_j)^2}{2\sigma^2})} \lambda_j \big) \bigg) \bigg/  \\
      &\bigg( \sum_{j=1}^{n} \Big( \exp{(-\frac{(\sqrt{2\sigma^2}z_i + \overline{\gamma}_1(x_0)-s_j)^2}{2\sigma^2})} \lambda_j \Big) \bigg)\Bigg)^2 +\\
       & \Bigg(\overline{\gamma}_1(x_0) - \bigg( \sum_{j=1}^{n} \big( s_j
      \exp{(-\frac{(\sqrt{2\sigma^2}z_i + \overline{\gamma}_1(x_0)-s_j)^2}{2\sigma^2})} \lambda_j \big) \bigg) \bigg/ \\
      &\bigg( \sum_{j=1}^{n} \Big( \exp{(-\frac{(\sqrt{2\sigma^2}z_i + \overline{\gamma}_1(x_0)-s_j)^2}{2\sigma^2})} \lambda_j \Big) \bigg)\Bigg)\Bigg\} 
\end{align*}
Substituting $x_0 = x_{0l} = \sqrt{2\sigma_x^2} z_l$ for each $l \in \{1,2,\hdots,n\}$, we obtain $n$ nonlinear equations with $n$ $s_l$'s that are unknown, given in \eqref{eq:sysnonlineareq}. Each $s_l$, which is the value of $\overline{\gamma}_1(x_0)$ at nodes selected according to GHQ, is the signaling level of the control action. Rearranging \eqref{eq:sysnonlineareq} to move all terms on one side, we denote the resulting system of nonlinear equations as $f_{sysnonlin}:\mathbb{R}^n \to \mathbb{R}^n$.
\begin{align}
    & \forall l = {1,2,\hdots,n} \nonumber \\
    & t_l \approx \sqrt{2\sigma_x^2} z_l - \frac{1}{\sqrt{\pi}k^2} \sum_{i=1}^n \lambda_i \Bigg\{\frac{z_i}{\sqrt{2\sigma^2}} \nonumber \\ &\Bigg(t_l - 
        \bigg( \sum_{j=1}^{n} \big( t_j 
      \exp{(-\frac{(\sqrt{2\sigma^2}z_i + t_l-t_j)^2}{2\sigma^2})} \lambda_j \big) \bigg) \bigg/ \nonumber \\
      &\bigg( \sum_{j=1}^{n} \Big( \exp{(-\frac{(\sqrt{2\sigma^2}z_i + t_l-t_j)^2}{2\sigma^2})} \lambda_j \Big) \bigg)\Bigg)^2 + \nonumber \\
       & \Bigg(t_l - \bigg( \sum_{j=1}^{n} \big( t_j
      \exp{(-\frac{(\sqrt{2\sigma^2}z_i + t_l-t_j)^2}{2\sigma^2})} \lambda_j \big) \bigg) \bigg/ \nonumber \\
      &\bigg( \sum_{j=1}^{n} \Big( \exp{(-\frac{(\sqrt{2\sigma^2}z_i + t_l-t_j)^2}{2\sigma^2})} \lambda_j \Big) \bigg)\Bigg)\Bigg\}.
      \label{eq:sysnonlineareq}
\end{align}

The solution of the system of $n$ nonlinear equations \eqref{eq:sysnonlineareq} results in $n$ explicit points, i.e., $n$ signaling levels $s_l^*, \ \forall l=1,2,\hdots ,n,$ such that $||f_{sysnonlin}(s_1^*,s_2^*, \hdots , s_n^*)|| $ is close to zero. Using these $n$ signaling levels, we obtain the value of $\overline{\gamma}_1(x_0) \ \forall x_0,$ by substituting $(s_1^*,s_2^*, \hdots , s_n^*)$ in \eqref{eq:bignonlineareq} which results in one unknown $\overline{\gamma}_1(x_0)$ and solving the resulting nonlinear equation for each $x_0$. This is similar to the collocation method used to solve integral equations, \cite{PiecewiseCollocationAtkinson}. Here,  $x_0 = x_{0l} = \sqrt{2\sigma_x^2} z_l$ for each $l \in \{1,2,\hdots,n\}$ are the collocation points and signaling levels are the value of $\overline{\gamma}_1(x_0)$ at the collocation points. 
To obtain the strategy of the second controller, we substitute the signaling levels $(s_1^*,s_2^*, \hdots , s_n^*)$ in \eqref{eq:gamma2GHQ}. This directly gives the expression for $\gamma_2(y_1)$ which is evaluated at $y_1$. 
Note that $y_1 = \overline{\gamma}_1(x_0) + v$, and hence  the values taken by $y_1$ are dictated by the strategy of the first controller $\overline{\gamma}_1(x_0)$. Once both the strategies $\bar{\gamma}_1$ $\gamma_2$ are obtained, we calculate the total cost $J^{{\mathbb P}}$ from \eqref{eq:costeq}.

The algorithm to compute  strategies \eqref{eq:gamma1bar_in},  \eqref{eq:gamma2_in} is  given below.

\begin{framed}
\noindent Input parameters: $k, \sigma, \sigma_x, n$; 
Input signals: $x_0, v.$
\begin{itemize}
    \item[-] Solve $f_{sysnonlin}$ to obtain the signaling levels $(s_1^*,s_2^*, \hdots , s_n^*)$.
    \item[-] For each $x_0$, compute $\bar{\gamma}_1(x_0)$
    \item[-] For all $y_1  = \overline{\gamma}_1(x_0) + v$, compute $\gamma_2(y_1)$
\end{itemize}
\end{framed}


\textbf{\textit{Implementation aspects:}} We employ the software MATLAB to implement the solution strategies \eqref{eq:gamma1bar_in} and \eqref{eq:gamma2_in}. The command \textit{fsolve} is used to solve the system of nonlinear equations $f_{sysnonlin}$ and \textit{lsqnonlin} to solve for $\overline{\gamma}_1(x_0)$. 

\subsection{Results}
\label{subsec:results}
We employed $600,000$ samples for $x_0$ and $v$ generated according to $G(0,\sigma_x^2)$ and $G(0,\sigma^2)$, respectively. The order of the Hermite polynomial in GHQ method is $n=7$. 
The total cost is
\begin{equation}
    J^{{\mathbb P}^\gamma}(\overline{\gamma}_1,\gamma_2) ={\bf E}^{{\mathbb P}^\gamma}\big\{k^2 (\overline{\gamma_1}(x_0) - x_0)^2 + (\overline{\gamma_1}(x_0) - \gamma_2(y_1))^2 \big\}. \label{eq:TotalCost}
\end{equation}
We denote by $J^o=J^{{\mathbb P}^\gamma}(\overline{\gamma}_1,\gamma_2)$ the cost corresponding  to strategies  \eqref{eq:gamma1bar_in} and \eqref{eq:gamma2_in},   implemented using the GHQ method 
in Section \ref{subsec::numericalimplementation}. We consider different values for parameters $k, \sigma_x$, and $\sigma$ and compare $J^o$ with $J^{aff}$, $J^{wit}$ and other previously reported costs. By Lemma 1 of \cite{WitsenhausenOriginal}, the optimal cost satisfies  $J^{{\mathbb P}^\gamma}(\overline{\gamma}_1^o,\gamma_2^o)\leq \min(1,k^2\sigma_x^2)$ (when $\sigma^2=1$). Accordingly, we verify if the cost $J^o$ is less than min$(1,k^2\sigma_x^2)$.

\subsubsection{\textbf{Affine region}}
\label{subsec:parameters111}
As pointed in \cite{WuVerduTransport}, the set of parameter values where $k \nless 0.56$ and $\sigma_x$ is not large, is in the region where affine laws are optimal. We consider the values for the parameters to be $k=1, \sigma_x=1, \sigma = 1$. The optimal control laws \eqref{eq:gamma1bar_in} and \eqref{eq:gamma2_in} are compared with optimal affine laws in Fig. \ref{fig:k1sigmax1}. It is seen that the resulting laws are almost the same as the optimal affine laws. We further compare the cost with $J^{aff}$ and $J^{wit}$ in Table \ref{tab:k1sigmax1}. The negligible difference in $J^{aff}$ and $J^o$ is attributed to numerical inaccuracy in the implementation of \eqref{eq:gamma1bar_in} and \eqref{eq:gamma2_in} through approximate numerical integration method. 

\begin{figure}
    \centering
    \includegraphics[width=\columnwidth]{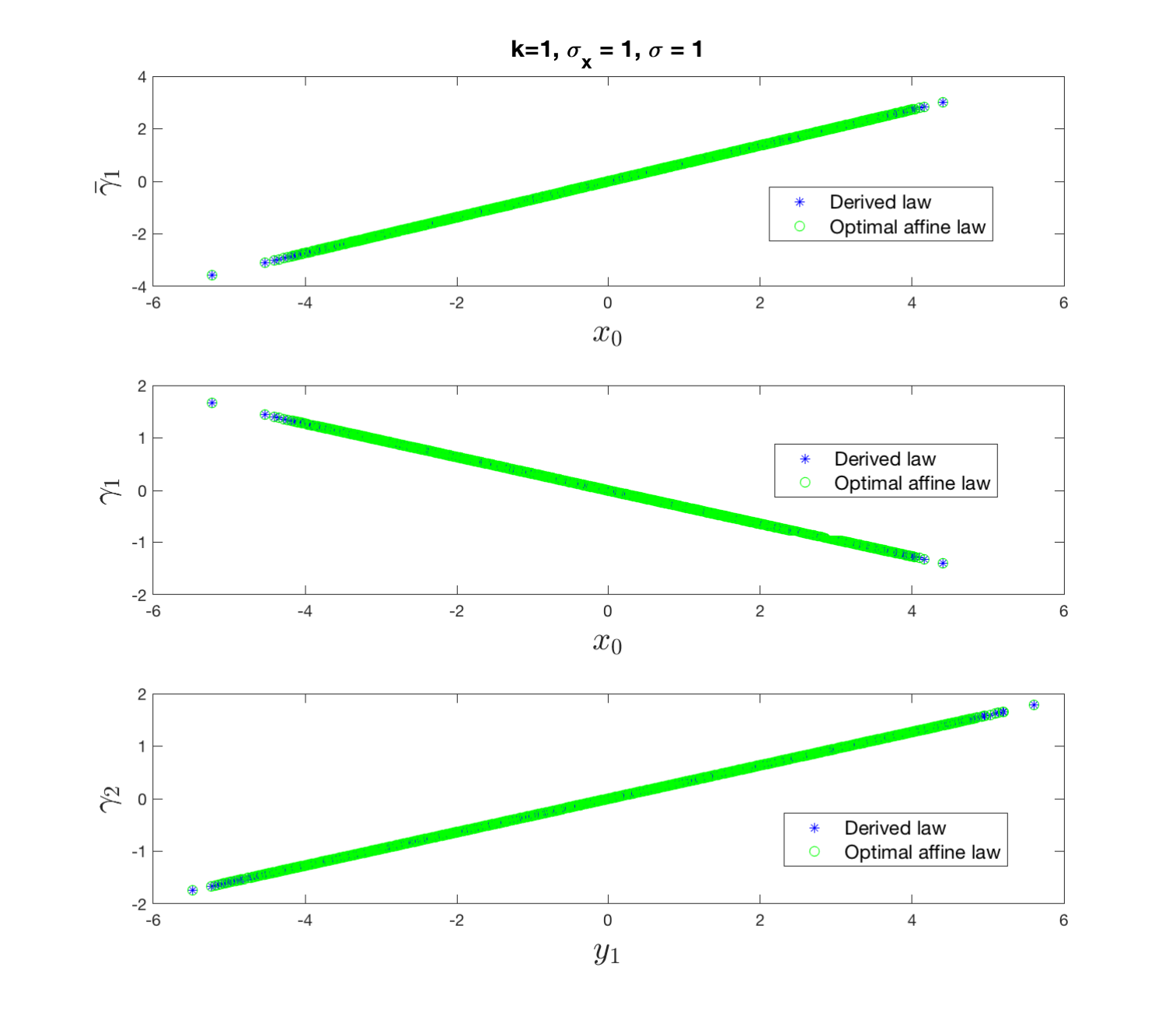}
    \caption{{Comparison of the optimal control laws and 
    the special class of optimal affine laws}}
    \label{fig:k1sigmax1}
\end{figure}

\begin{table}[]
\centering
\begin{tabular}{@{}cccc@{}}
\toprule
          & Stage 1  & Stage 2  & Total Cost \\ \midrule
$J^{aff}$ & $0.1011$ & $0.3174$ & $0.418500414352474$   \\
$J^{wit}$ & $0.4043$ & $0.4480$ & $0.852287449358227$   \\
$J^o$  & $0.1011$ & $0.3174$ & $0.418500469701766$   \\ \bottomrule
\end{tabular}
\caption{Total cost, $k=1, \sigma_x=1$}
\label{tab:k1sigmax1}
\end{table}

\subsubsection{\textbf{Comparison with \cite{McEany1BadPaper}}}
\label{subsec:mceany}
The authors in \cite{McEany1BadPaper} consider three sets of parameter values and find if the corresponding optima are roughly linear or of signaling form. Although the cost obtained is not reported in \cite{McEany1BadPaper}, for the set of parameter values therein, we compare the laws we obtain with the figures therein. $\overline{\gamma}_1(x_0)$ obtained for all the three sets of parameters are shown in Fig. \ref{fig:McEanyparameters}. Consistent with the findings in \cite{McEany1BadPaper}, the first and the third set of parameters result in optima that are linear and nonlinear, respectively. However, the second set of parameters results in linear optima while \cite{McEany1BadPaper} finds the optima to be a mix of linear and signaling form. The corresponding costs are given in Table \ref{tab:McEanyCostComparison}.

\begin{figure}
  \begin{subfigure}[b]{\columnwidth}
    \includegraphics[width=\linewidth]{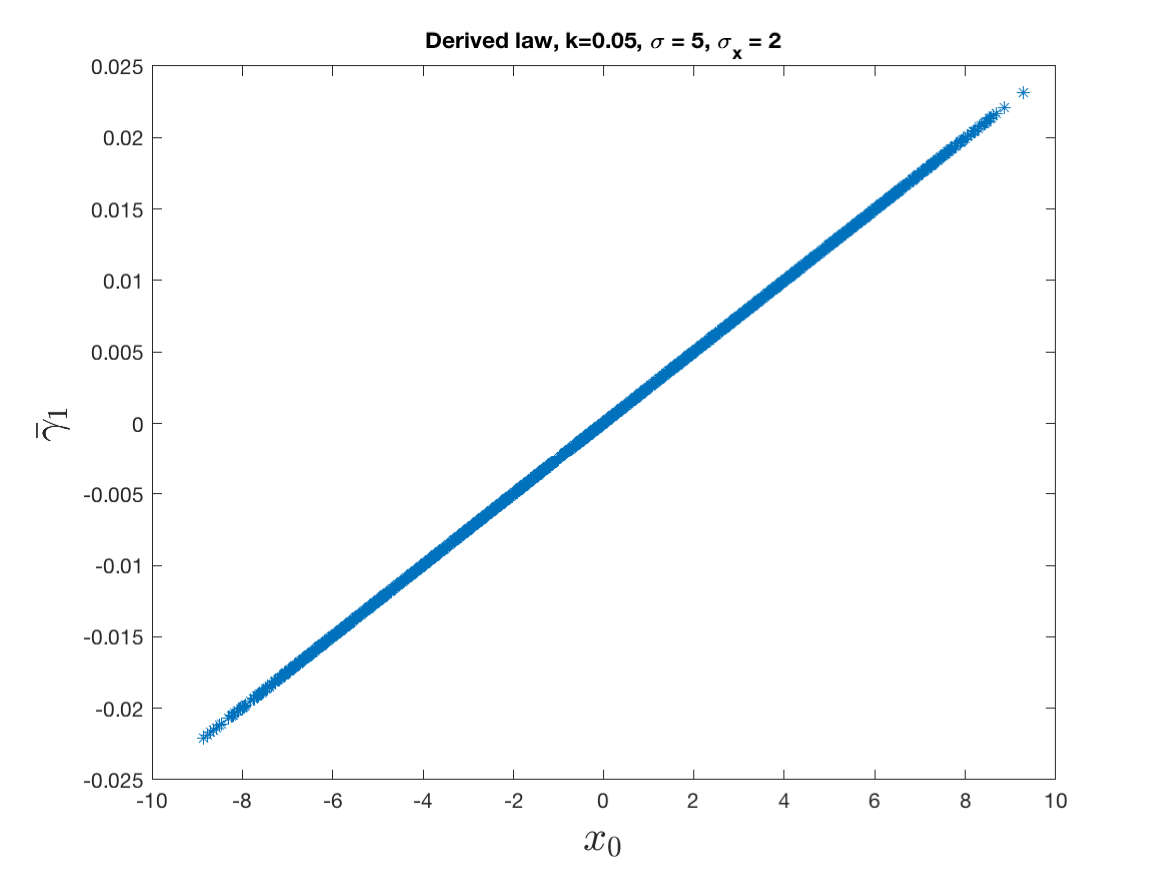}
    \caption{$\overline{\gamma}_1$ is linear for $k=0.05, \sigma = 5, \sigma_x = 2$}
    \label{fig:1}
  \end{subfigure}
  \newline
  \begin{subfigure}[b]{\columnwidth}
    \includegraphics[width=\linewidth]{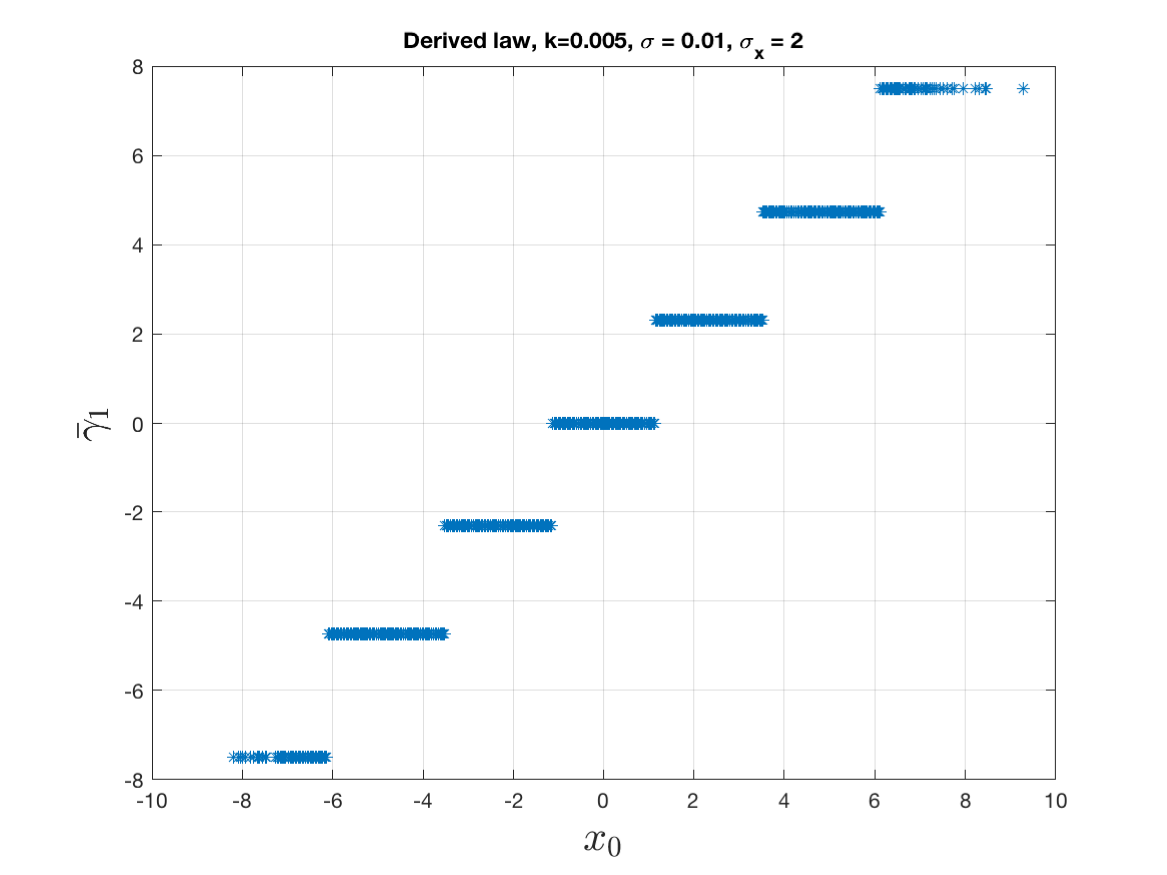}
    \caption{$\overline{\gamma}_1$ is $7-$step non-linear for $k=0.005, \sigma = 0.01, \sigma_x = 2$}
    \label{fig:2}
  \end{subfigure}
  \newline
  \begin{subfigure}[b]{\columnwidth}
    \includegraphics[width=\linewidth]{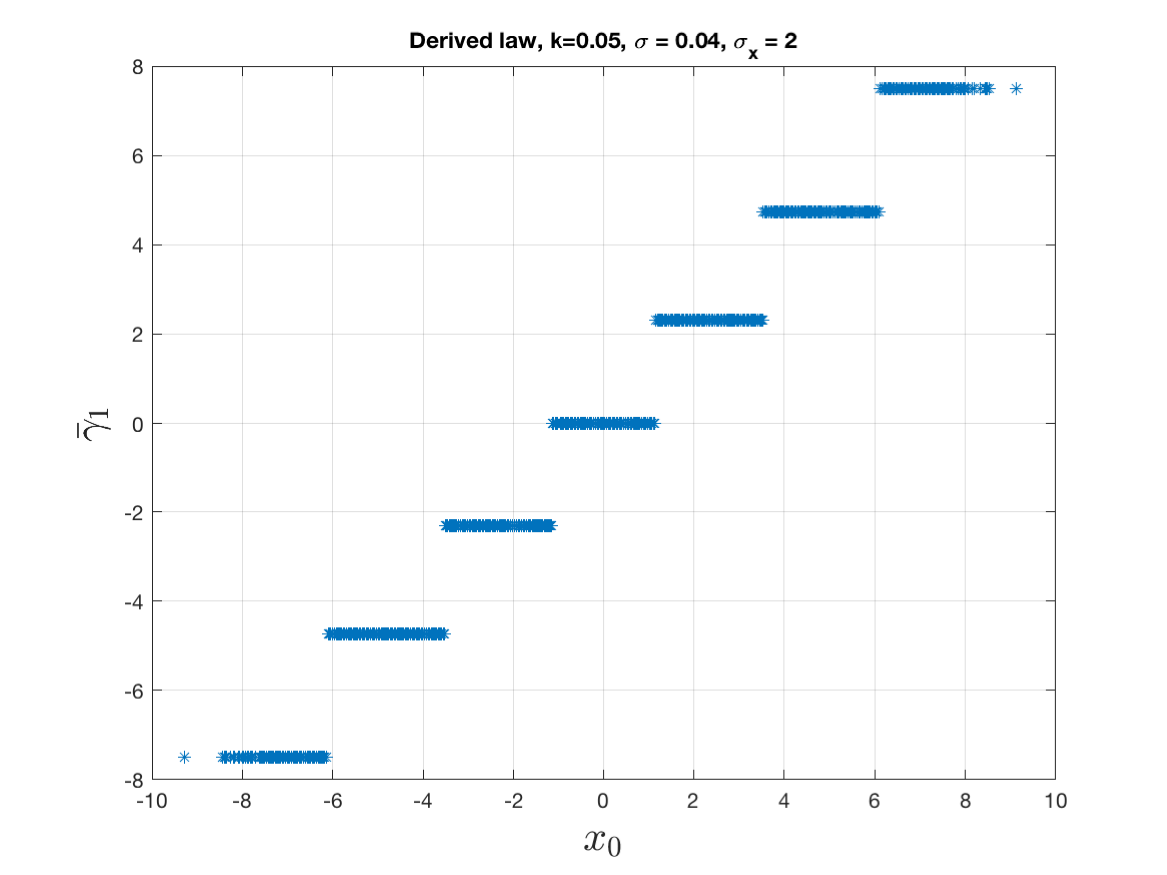}
    \caption{$\overline{\gamma}_1$ is $7-$step non-linear for $k=0.05, \sigma = 0.04, \sigma_x = 2$}
    \label{fig:2}
  \end{subfigure}
  \caption{Optimal first control law $\overline{\gamma}_1$ for parameters in \cite{McEany1BadPaper}}
   \label{fig:McEanyparameters}
\end{figure}

\begin{table}[]
\begin{tabular}{@{}cccc@{}}
\toprule
          & $k=0.05,\sigma=5$ & $k=0.005,\sigma=0.01$   & $k=0.05,\sigma=0.04$ \\
          & $\sigma_x=2$      & $\sigma_x=2$            & $\sigma_x=2$         \\ \midrule
$J^{aff}$ & $0.0100$                & $1.007 \times 10^{-4}$  & $0.0100$                   \\
$J^{wit}$ & $5.2326$                & $4.225 \times 10^{-5}$  & $0.0040$                   \\
$J^o$  & $0.0100$                & $1.1298 \times 10^{-5}$ & $0.0011$                   \\ \bottomrule
\end{tabular}
\caption{Total cost obtained for parameters in \cite{McEany1BadPaper}}
\label{tab:McEanyCostComparison}
\end{table}

\subsubsection{\textbf{Benchmark parameters $k=0.2, \sigma_x=5, \sigma=1$}}
\label{subsec:toughparameters}
The last set of parameters we consider has been the most studied case and has enabled more insights into the solution of the  counterexample. \cite{neuralnetworksolution} provides a numerical solution by employing one-hidden-layer neural network as an approximating network, with corresponding cost  $J^{nn}$. \cite{hierarchialLee} presents a hierarchial search approach where   $\overline{\gamma}_1$ is imposed to be a non-decreasing, step function that is symmetric about the origin. For a considered number of steps, they find the signaling levels (value of $\overline{\gamma}_1$ at the step) and the breakpoints ($x_0$ where the step change occurs). They also find that the cost objective is lower for slightly sloped steps than perfectly leveled steps. Through comparison of their costs for different number of steps, they find that $7-$step solution yields the lowest cost. The cost obtained in \cite{hierarchialLee} is denoted $J^{llh}$ here and the signaling levels therein are $s^{*} = \{0, \pm 6.5, \pm 13.2, \pm 19.9\}$.

In our work, the solution of \eqref{eq:sysnonlineareq} yields the signling levels $s^{**} = \{0, \pm 6.15, \pm 12.8, \pm 19.8\}$ and $||f_{sysnonlin}(s^{**})|| = 10^{-15}$ while $||f_{sysnonlin}(s^{*})|| = 0.7$. Following up on the notes from Section \ref{subsec::numericalimplementation}, Gauss quadrature rule is not exact for the set of parameters $k=0.2, \sigma_x=5, \sigma=1$ because this parameter set lies in the region where the optimal laws are non-linear. Moreover, the optimal non-linear laws are not continuous; they are only piecewise continuous. As a result, the inaccuracy in the approximation using Gauss quadrature rule reveals itself through the system of nonlinear equations $f_{sysnonlin}$. The cost we obtain for signaling levels $s^*$ and $s^{**}$ are $J^o_* = 0.16$ and $J^o_{**} = 0.1712$ respectively. 

The strategy of the first controller, $\overline{\gamma}_1(x_0)$, that we obtain for the signaling levels $s^*$ and $s^{**}$ are shown in Fig \ref{fig:gamma1barsolcompare}. Although we don't externally impose symmetry, it can be observed that $\overline{\gamma}_1$ is symmetric about origin and is non-decreasing. We zoom in on one of the $7$ steps and observe in the left column of Fig \ref{fig:slightlysloped} that the steps are slightly sloped. Further zooming in, we see in the right column of Fig \ref{fig:slightlysloped} that each signaling level is further comprised of a number of closely spaced steps. Similar to this result, authors in \cite{hierarchialLee} added segments in each of the $7$ steps to obtain the cost $J^{llh}=0.167313205338$. We compare both the costs we obtain with previously reported costs in the literature in Table \ref{tab:k0p2sigmax5}. Further in agreement with the findings in \cite{hierarchialLee}, we obtain the lowest cost for $7$ steps, $J_{**}^o = 0.1712$. 

\begin{figure}[h]
    \centering
    \includegraphics[width=\columnwidth]{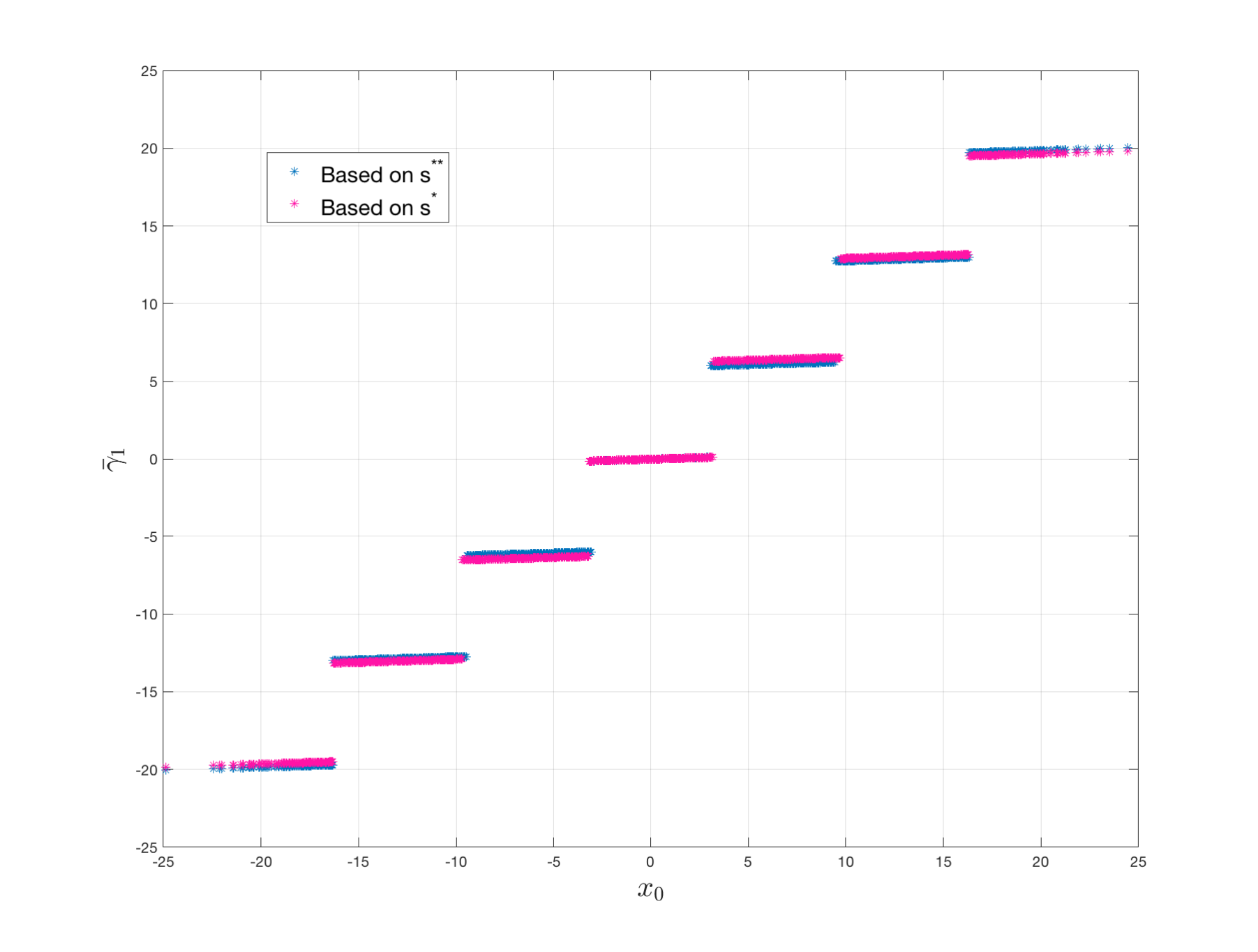}
    \caption{Comparison of the strategy of the first controller for signaling levels $s^*$ and $s^{**}$ for benchmark parameters.}
    \label{fig:gamma1barsolcompare}
\end{figure}

\begin{figure}[h]
    \centering
    \includegraphics[width=\columnwidth]{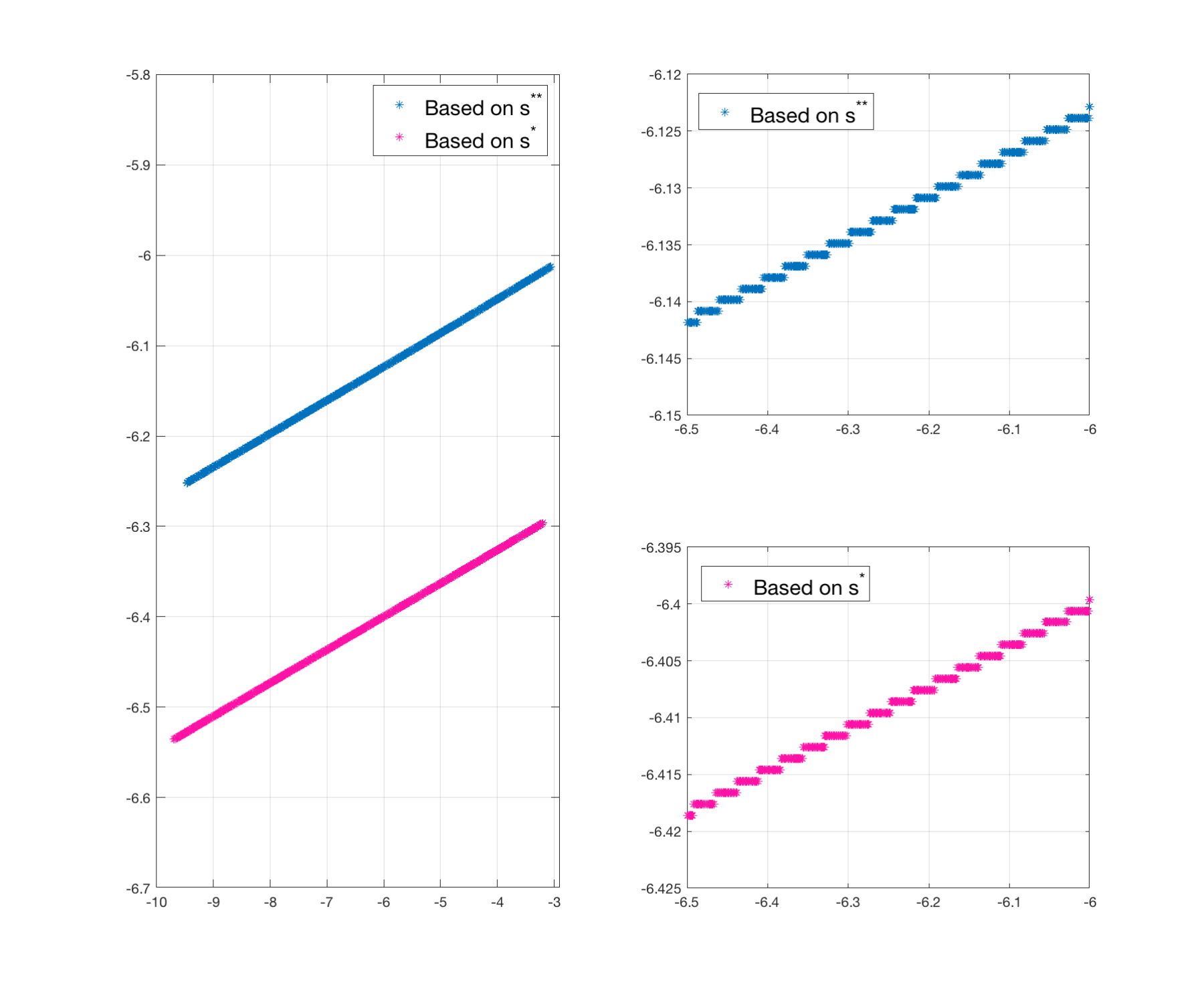}
    \caption{Observation of the slight slope in quantizers -- magnified views of one \quotes{step} Fig. \ref{fig:gamma1barsolcompare}}
    \label{fig:slightlysloped}
\end{figure}

\begin{table*}[t]
\centering
\begin{tabular}{@{}cccc@{}}
\toprule
          & Stage 1             & Stage 2                            & Total Cost          \\ \midrule
$J^{aff}$ & $0.0017428616051158$ & $0.956950417234115$                & $0.958693278839234$ \\
$J^{wit}$ & $0.403507741927546$ & $2.134488364684996 \times 10^{-6}$ & $0.403509876415911$ \\
$J^{nn}$\cite{neuralnetworksolution}  & -                   & -                                  & $0.1735$            \\
$J^{llh}$ \cite{hierarchialLee}  & $0.131884081844$        & $0.035429123524$                       & $0.167313205368$        \\
$J^o_*$  & $0.128541364988695$ & $0.038385613344897$                & $0.166926978333592$ \\
$J^{o}_{**}$  & $0.120110042087359$        & $0.051158481289032$                       & $0.171268523376388$        \\ \bottomrule
\end{tabular}
\caption{Reported and obtained costs, $k=0.2, \sigma_x=5, \sigma=1$}
\label{tab:k0p2sigmax5}
\end{table*}

For the parameters  $k=0.2, \sigma_x=5, \sigma=1$, the number of steps we obtain is same as the value of the Gauss quadrature rule parameter $n$. However, this is not necessarily the case for all parameters;  see Sections \ref{subsec:parameters111} and \ref{subsec:mceany}. The parameter set $k=1, \sigma_x=5, \sigma=1$ is known to lie in a region where the optimal law is affine, and even though we employ $n=7$ steps for GHQ, the resulting control laws are affine. 

\ \

\section{Conclusion}
\label{sec::conclusion}
The paper derives 
 optimality conditions  for general discrete-time  decentralized stochastic dynamic optimal control problems, using  Girsanov's change of measure. The methods is  applied to derive   PbP optimal strategies of Witsenhausen's  counterexample \cite{WitsenhausenOriginal}.    The two strategies are shown to satisfy  nonlinear integral equations, while a fixed point theorem is shown  establishing existence and uniqueness of their solutions. Numerical solutions of the two integral equations are presented and compared to the literature.


An important observation of our investigation of the counterexample  is that, for certain parameter values non-linear strategies out perform  linear strategies,  while for some parameter values linear strategies are indeed PbP optimal\footnote{This observation is consistent with  \cite{WitsenhausenOriginal}, because Theorem~2 in \cite{WitsenhausenOriginal} states that nonlinear strategies outperform affine strategies for certain choices of the problem parameters, and not for all possible choices of parameters.}.  This observation  is not document in previous numerical studies. 



\section{Appendix}
\label{app-1}
In this section, 
we introduce the  basic mathematical concept of change of
probability measure, 
%
known
as Radon-Nikodym derivative Theorem. 

\begin{theorem}\cite{liptser-shiryayev1977,elliott1982}  (Radon-Nikodym Derivative Thm)\\
\label{theorem 1.7.4}
Let $(\Omega, {\cal  F})$ a measurable space and let ${\mathbb  P}$ and  ${\mathbb  Q}$
be two  probability measure defined it. Then ${\mathbb  P}$ is said to be 
absolutely continuous with respect to ${\mathbb  Q}$, denoted by  ${\mathbb  P}\ll {\mathbb  Q}$,
\bes
\mbox{if and only if  $\forall B \in {\cal F}$ such that   ${\mathbb  Q}(B)=0$ then ${\mathbb  P}(B)=0$.} 
\ees
Moreover, ${\mathbb  P}$ is said to be mutually absolutely continuous  with respect to ${\mathbb  Q}$ if and only if  ${\mathbb  P}\ll {\mathbb  Q}$ and  ${\mathbb  Q}\ll {\mathbb  P}$. \\
If ${\mathbb  P}\ll {\mathbb  Q}$ 
then there exists an ${\cal  F}$-measurable function $\phi:\Omega\rar 
{\mathbb R}$,  such that  $\phi \in L^1(\Omega,{\cal  F}, {\mathbb  P})$ and
\bea
{\mathbb  P}(B) \tri \int_Bd {\mathbb  P}(\omega)=\int_B\phi(\omega)d {\mathbb  Q}(\omega), \ \ \forall  B\in {\cal  F}.
\label{eq.211}
\eea
\noi The function $\phi$ is unique except on a subset of
${\mathbb  Q}$-measure zero, and is often written as $\phi \tri \frac{d
{\mathbb  P}}{d {\mathbb  Q}}\Big|_{{\cal  F}}$,   called the Radon-Nikodym
derivative (RND) of ${\mathbb P}$ with respect to (w.r.t.) ${\mathbb Q}$.
\end{theorem}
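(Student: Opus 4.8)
The plan is as follows. The first two assertions of the statement---the ``if and only if'' characterisation of $\mathbb{P}\ll\mathbb{Q}$ and the definition of mutual absolute continuity---are definitional and carry no content to prove; the substance is the existence of an $\mathcal{F}$-measurable $\phi$ with $\mathbb{P}(B)=\int_B\phi\,d\mathbb{Q}$ for all $B\in\mathcal{F}$, together with its $\mathbb{Q}$-a.e. uniqueness. Since both $\mathbb{P}$ and $\mathbb{Q}$ are probability (hence finite) measures, I would give the von Neumann Hilbert-space proof, which reduces the entire claim to the Riesz representation theorem on $L^2$.

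First I would introduce the auxiliary finite measure $\nu\tri\mathbb{P}+\mathbb{Q}$ and consider the linear functional $T(f)\tri\int_\Omega f\,d\mathbb{P}$ on the Hilbert space $L^2(\Omega,\mathcal{F},\nu)$. By Cauchy--Schwarz and $\mathbb{P}\le\nu$ one checks $|T(f)|\le\|f\|_{L^2(\nu)}\,\nu(\Omega)^{1/2}$, so $T$ is bounded; the Riesz representation theorem then supplies $g\in L^2(\nu)$ with $\int f\,d\mathbb{P}=\int fg\,d\nu$ for every $f\in L^2(\nu)$, equivalently
\[
\int f(1-g)\,d\mathbb{P}=\int fg\,d\mathbb{Q},\qquad \forall f\in L^2(\nu).
\]
Testing this (or rather the consequence $\mathbb{P}(A)=\int_A g\,d\nu$) on indicators $f=\mathbf{1}_A$ and using $0\le\mathbb{P}(A)\le\nu(A)$ pins down $0\le g\le 1$ $\nu$-a.e.

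The decisive use of the hypothesis $\mathbb{P}\ll\mathbb{Q}$ comes next: applying the displayed identity with $f=\mathbf{1}_{\{g=1\}}$ gives $\mathbb{Q}(\{g=1\})=0$, whence $\mathbb{P}(\{g=1\})=0$ by absolute continuity, so $0\le g<1$ holds $\mathbb{P}$- and $\mathbb{Q}$-almost everywhere. I then set $\phi\tri g/(1-g)$ and recover the representation by a monotone passage to the limit: choosing $f=\mathbf{1}_B(1+g+\cdots+g^n)$ telescopes the left side to $\int_B(1-g^{n+1})\,d\mathbb{P}$ and the right side to $\int_B(g+\cdots+g^{n+1})\,d\mathbb{Q}$, and letting $n\to\infty$ (dominated convergence on the left, monotone convergence on the right) yields $\mathbb{P}(B)=\int_B\phi\,d\mathbb{Q}$ for all $B\in\mathcal{F}$. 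Taking $B=\Omega$ gives $\int_\Omega\phi\,d\mathbb{Q}=1<\infty$, establishing integrability. Uniqueness follows since if $\phi_1,\phi_2$ both represent $\mathbb{P}$ then $\int_B(\phi_1-\phi_2)\,d\mathbb{Q}=0$ for all $B$; testing on $B=\{\phi_1>\phi_2\}$ and $B=\{\phi_1<\phi_2\}$ forces $\phi_1=\phi_2$ $\mathbb{Q}$-a.e.

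Alternatively I could run the classical Hahn--Jordan argument: maximise $\int g\,d\mathbb{Q}$ over the class of measurable $g\ge 0$ with $\int_A g\,d\mathbb{Q}\le\mathbb{P}(A)$ for all $A$, show the maximiser $\phi$ exists (the class is closed under pairwise maxima, so a maximising sequence may be replaced by an increasing one and passed through monotone convergence), and prove the residual $\lambda(A)\tri\mathbb{P}(A)-\int_A\phi\,d\mathbb{Q}\ge 0$ vanishes by extracting a positive set of $\lambda-\varepsilon\mathbb{Q}$ and contradicting maximality. I expect the main obstacle in either route to be precisely the point where absolute continuity is invoked: in the Hilbert-space proof, guaranteeing $g<1$ on a $\mathbb{P}$-full set so that $\phi$ is finite $\mathbb{P}$-a.e.; in the Hahn-decomposition proof, ruling out a positive set $P$ with $\mathbb{Q}(P)=0$, where $\mathbb{P}\ll\mathbb{Q}$ forces $\mathbb{P}(P)=0$ and hence $\lambda(P)=0$. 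Everything else is routine measure theory.
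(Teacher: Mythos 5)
Your proof is correct, but there is nothing in the paper to compare it against: the statement you were given is Theorem~\ref{theorem 1.7.4} in the paper's Appendix, which is the classical Radon--Nikodym theorem stated without proof and cited to Liptser--Shiryayev and Elliott, where it serves only as a tool for the change-of-measure constructions of Theorems~\ref{thm:RND_1} and \ref{thm:p-off}. Your von Neumann $L^2$ argument is the standard modern proof for finite measures, and every key step checks out: boundedness of $T(f)=\int f\,d{\mathbb P}$ on $L^2(\nu)$ with $\nu={\mathbb P}+{\mathbb Q}$, the bound $0\le g\le 1$ $\nu$-a.e.\ from testing indicators, the single (and correctly placed) invocation of ${\mathbb P}\ll{\mathbb Q}$ to conclude ${\mathbb P}(\{g=1\})=0$ from ${\mathbb Q}(\{g=1\})=0$, the geometric-series telescoping $f={\bf 1}_B(1+g+\cdots+g^n)$ with dominated convergence on the ${\mathbb P}$-side and monotone convergence on the ${\mathbb Q}$-side, and the uniqueness argument via the sets $\{\phi_1>\phi_2\}$ and $\{\phi_1<\phi_2\}$. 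The Hahn--Jordan/exhaustion alternative you sketch is also sound and is closer to the proofs in the cited references. Two small points are worth noting. First, the Riesz representer $g$ is only an equivalence class, so to obtain a genuine everywhere-finite measurable $\phi:\Omega\rar{\mathbb R}$ as the statement demands, fix a version, redefine it on the $\nu$-null sets $\{g<0\}\cup\{g>1\}$, and set $\phi\tri 0$ on the ${\mathbb Q}$-null set $\{g=1\}$; you do this implicitly, but it should be said. Second, your argument delivers $\int_\Omega\phi\,d{\mathbb Q}=1$, i.e.\ $\phi\in L^1(\Omega,{\cal F},{\mathbb Q})$, which is the correct integrability assertion; the paper's claim $\phi\in L^1(\Omega,{\cal F},{\mathbb P})$ is evidently a typo, since $\int_\Omega \phi\,d{\mathbb P}=\int_\Omega \phi^2\,d{\mathbb Q}$ can be infinite even under ${\mathbb P}\ll{\mathbb Q}$ (e.g.\ ${\mathbb Q}$ Lebesgue measure on $(0,1)$ and $\phi(x)=x^{-2/3}/3$), so your proof in fact establishes the statement in its corrected form.
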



\begin{theorem}\cite{liptser-shiryayev1977,elliott1982}  (Expectations and Conditional Bayes Rule)\\
\label{thm:RND-c}
Let $(\Omega, {\cal  F})$ be a measurable space,  ${\mathbb  P}$ and
${\mathbb  Q}$ two probability  measures defined on $(\Omega, {\cal  F})$ such that  ${\mathbb  P}\ll {\mathbb  Q}$ and   ${\mathbb  Q}\ll {\mathbb  P}$ 
(mutually absolutely continuous), 
and    $X:\Omega \rar
{\mathbb R}$  a RV such that $X \in L^1(\Omega,{\cal  F}, {\mathbb  P}), X \in L^1(\Omega,{\cal  F}, {\mathbb  Q})$. 
%
Define the RNDs 
\begin{align*}
\phi \tri\frac{d {\mathbb  P}}{d
{\mathbb  Q}}\Big|_{{\cal  F}}\in L^1(\Omega,{\cal  F}, {\mathbb  Q}), \:  \phi^{-1} \tri\frac{d {\mathbb  Q}}{d
{\mathbb  P}}\Big|_{{\cal  F}}\in L^1(\Omega,{\cal  F}, {\mathbb  P}).
\end{align*}
(1) Expectations.  The two probability measures are related by 
\begin{align}
& {\mathbb  P}(B)=\int_B\phi(\omega) d{\mathbb  Q}(\omega),\;
 {\mathbb  Q}(B)=\int_B\phi^{-1}(\omega) d{\mathbb  P}(\omega), \\
&{\bf  E}^{\mathbb  P}\big\{X\big\}={\bf E}^{\mathbb Q}\big\{\phi
X\big\}={\bf E}^{\mathbb Q}\Big\{X \frac{d {\mathbb  P}}{d {\mathbb  Q}}
\Big\}, \\
 &{\bf E}^{\mathbb 
 Q}\big\{X\big\}={\bf  E}^{\mathbb  P}\big\{\phi^{-1}
X\big\}={\bf E}^{\mathbb  P}\Big\{\frac{d {\mathbb  Q}}{d {\mathbb  P}} X\Big\}.
\end{align}
(2) Conditional Bayes Rule.  Let ${\cal G} \subset {\cal  F}$
be  a sub-$\sigma$-field.  
Then,
 \begin{align}
& {\bf E}^{\mathbb   P}\big\{X\big|{\cal G}\big\}(\omega)=\frac{  {\bf E}^{\mathbb Q}\big\{\phi X\big|{\cal G}\big\}}{ {\bf E}^{\mathbb Q}\big\{\phi \big|{\cal G}\big\}}(\omega),
\ \ {\mathbb  P}-a.s., \\
& {\bf E}^{\mathbb Q}\big\{X\big|{\cal G}\big\}(\omega)=\frac{{\bf E}^{\mathbb  P}\big\{\phi^{-1}X\big|{\cal G}\big\}}{{\bf E}^{\mathbb  P}\big\{\phi^{-1}\big|{\cal G}\big\}}(\omega),
\ \ {\mathbb  Q}-a.s.\label{eq.213}
\end{align}
\end{theorem}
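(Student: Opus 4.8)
The plan is to derive both parts directly from the defining property of the Radon--Nikodym derivative supplied by Theorem~\ref{theorem 1.7.4}, using the standard measure-theoretic approximation argument for part~(1) and the tower property of conditional expectation for part~(2). No deep machinery beyond these is required; the substance is in organizing the identities correctly and checking well-definedness.

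For part~(1), the relation ${\mathbb P}(B)=\int_B \phi\, d{\mathbb Q}$ is precisely the definition of $\phi=d{\mathbb P}/d{\mathbb Q}$, and the symmetric relation holds for $d{\mathbb Q}/d{\mathbb P}$ since ${\mathbb Q}\ll{\mathbb P}$ guarantees its existence. The notation $\phi^{-1}$ is then justified by the chain rule, which I would verify by observing that for every $B$ one has ${\mathbb P}(B)=\int_B (d{\mathbb P}/d{\mathbb Q})(d{\mathbb Q}/d{\mathbb P})\, d{\mathbb P}$, forcing the product of the two derivatives to equal $1$ almost surely under either measure (mutual absolute continuity makes the two null-set notions coincide). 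To prove ${\bf E}^{\mathbb P}\{X\}={\bf E}^{\mathbb Q}\{\phi X\}$, I would run the usual approximation argument: the identity holds for $X=I_B$ by the definition of $\phi$, extends to simple functions by linearity, to nonnegative measurable $X$ by monotone convergence, and finally to $X\in L^1({\mathbb P})$ by splitting $X=X^+-X^-$, with finiteness guaranteed by the hypothesis $X\in L^1({\mathbb P})\cap L^1({\mathbb Q})$. The identity for ${\bf E}^{\mathbb Q}\{X\}$ is verbatim the same with $({\mathbb P},{\mathbb Q},\phi)$ replaced by $({\mathbb Q},{\mathbb P},\phi^{-1})$.

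For part~(2), I would show the candidate quotient $Z\tri {\bf E}^{\mathbb Q}\{\phi X|{\cal G}\}/{\bf E}^{\mathbb Q}\{\phi|{\cal G}\}$ satisfies the two defining properties of ${\bf E}^{\mathbb P}\{X|{\cal G}\}$. Measurability is clear, since numerator and denominator are ${\cal G}$-measurable. For the averaging property I would fix $G\in{\cal G}$ and compute $\int_G Z\, d{\mathbb P}$, showing it equals $\int_G X\, d{\mathbb P}$ through the chain: by part~(1), $\int_G Z\, d{\mathbb P}={\bf E}^{\mathbb Q}\{\phi I_G Z\}$; since $I_G Z$ is ${\cal G}$-measurable, the tower property gives ${\bf E}^{\mathbb Q}\{\phi I_G Z\}={\bf E}^{\mathbb Q}\{{\bf E}^{\mathbb Q}\{\phi|{\cal G}\}\, I_G Z\}$; substituting the definition of $Z$ cancels the denominator and leaves ${\bf E}^{\mathbb Q}\{I_G\, {\bf E}^{\mathbb Q}\{\phi X|{\cal G}\}\}={\bf E}^{\mathbb Q}\{\phi I_G X\}$, which by part~(1) equals $\int_G X\, d{\mathbb P}$. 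The second Bayes identity follows by the same computation with the roles of ${\mathbb P},{\mathbb Q}$ and $\phi,\phi^{-1}$ interchanged.

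The main point requiring care---rather than a genuine difficulty---is that the quotient $Z$ be well-defined, i.e. that ${\bf E}^{\mathbb Q}\{\phi|{\cal G}\}>0$ holds ${\mathbb P}$-a.s. I would establish this from mutual absolute continuity: the set $N\tri\{{\bf E}^{\mathbb Q}\{\phi|{\cal G}\}=0\}$ is ${\cal G}$-measurable, and by part~(1) together with the tower property, ${\mathbb P}(N)={\bf E}^{\mathbb Q}\{\phi I_N\}={\bf E}^{\mathbb Q}\{{\bf E}^{\mathbb Q}\{\phi|{\cal G}\}\, I_N\}=0$, so $N$ is ${\mathbb P}$-null and the quotient is defined ${\mathbb P}$-almost everywhere, which suffices for the ${\mathbb P}$-a.s.\ assertion. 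A secondary bookkeeping step is confirming $\phi X\in L^1({\mathbb Q})$ (equivalently $X\in L^1({\mathbb P})$), so that every conditional expectation appearing above exists; this is immediate from the stated hypotheses.
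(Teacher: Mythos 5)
Your proof is correct and complete: the paper imports this theorem from \cite{liptser-shiryayev1977,elliott1982} without giving a proof, and your argument---the defining property of the Radon--Nikodym derivative extended via indicators, simple functions, and monotone convergence for part (1), then verification of the two defining properties of conditional expectation for the quotient in part (2), including the ${\mathbb P}$-a.s.\ positivity of the denominator ${\bf E}^{\mathbb Q}\{\phi\,|\,{\cal G}\}$---is exactly the canonical proof found in those references. One cosmetic remark: your positivity argument for the denominator uses only ${\mathbb P}\ll{\mathbb Q}$ (not mutual absolute continuity, which is instead what guarantees $\phi>0$ a.s.\ and hence that the notation $\phi^{-1}$ for $d{\mathbb Q}/d{\mathbb P}$ makes sense, as you correctly note via the chain rule).
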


%
%
%

\vspace*{-1.5cm}

\begin{IEEEbiography}
{Bhagyashri Telsang}
received her Bachelor of Engineering in Instrumentation and Control from Manipal Institute of Technology in 2014, Master of Science in Systems and Control from Delft University of Technology in 2016, and Doctor of Philosophy in Control Systems from University of Tennessee Knoxville in 2022. Her fields of interest include decentralized control, resource allocation, multi-agent systems and game theory. She is currently a Senior Mechatronics Engineer at ASML in San Diego.
\end{IEEEbiography}

\vspace*{-1.5cm}

\begin{IEEEbiography}
{Seddik M. Djouadi}
received the B.S. degree (Hons.) from the Ecole Nationale Polytechnique, the M.Sc. degree from the University of Montreal, 
and the Ph.D. degree from McGill University, Montreal, all in EE. He is currently a Professor in the EECS Department, 
The University of Tennessee, Knoxville. He has published over about hundred journal and conference papers, some of which were invited papers. His research interests include filtering and control of systems under communication constraints, modeling and control of wireless networks, control systems and applications to autonomous sensor platforms, electromechanical and mobile communication systems, in particular smart grid and power systems, control systems through communication links, networked control systems, and model reduction for aerodynamic feedback flow control. He received a best PES paper award in 2022, the American Control Conference Best Student Paper Certificate (best five in competition), in 1998, the Tibbet Award from AFS Inc., in 1999, 
the Ralph E. Powe Junior Faculty Enhancement Award, in 2005, and the Best Paper Award in the Conference on Intelligent Systems and Automation, in 2008.
\end{IEEEbiography}

\vspace*{-1.0cm}

\begin{IEEEbiography}
{Charalambos D. Charalambous}
received his B.S., M.E., and Ph.D. in 1987,1988, and 1992, respectively, all from the Department of Electrical
Engineering, Old Dominion University, Virginia, USA. In 2003 he joined the
Department of Electrical and Computer Engineering, University of Cyprus. He
was an Associate Professor at University of Ottawa, from 1999 to 2003. He served on the faculty of
McGill University, Department of Electrical and Computer Engineering, as a
non-tenure faculty member, from 1995 to 1999. From 1993 to 1995 he was a
post-doctoral fellow at Idaho State University. He is currently  editor at large of Mathematics of
Control, Signals and Systems and on the editorial board of Entropy  for section, Information Theory, Probability and Statistics.  In the past he served as an Associate Editor
of the IEEE Transactions on Automatic Control, Systems and Control Letters,  and IEEE Communications
Letters. Charalambous' research spans,  Stochastic
dynamical decision, estimation and control systems, information theory of stochastic systems, optimization
of stochastic systems subject to ambiguity, stochastic dynamic games and  decentralized  stochastic optimal control with  asymmetry of information, and their applications to networked control and communication systems. 
\end{IEEEbiography}

\end{document}